\numberwithin{equation}{section}
\crefname{thm}{theorem}{theorems}
\crefname{prop}{proposition}{propositions}
\crefname{cor}{corollary}{Corollary}
\crefname{defn}{definition}{Definition}
\crefname{defnb}{definition}{Definition}
\newcommand{\Fm}{F_{\text{max}}}
\newcommand{\Rm}{R_{\text{max}}}
\newcommand{\vp}{\varphi}
\newcommand{\vpb}{\bar{\varphi}}
\renewcommand{\rk}[1]{\textup{rank$(#1)$}}
\newcommand{\Lt}{\widetilde{L}}
\title{\MakeUppercase{Just Renormalizable TGFT's on
    $U(1)^{\MakeLowercase d}$}\\\MakeUppercase{with Gauge
    Invariance}}
\author{Dine Ousmane Samary and Fabien Vignes-Tourneret}
\date{}
\begin{document}
\nobibliography*
\maketitle

\vfill
\begin{abstract}
We study the polynomial Abelian or $U(1)^d$ Tensorial Group Field Theories 
equipped with a gauge invariance condition in any 
dimension $d$.  We prove the just renormalizability at all 
orders of perturbation of the $\vp^4_6$ and $\vp^6_5$ random tensor models. We also deduce that the $\vp^4_5$ tensor model is super-renormalizable.
\end{abstract}

\vfill
\tableofcontents
\newpage
\section{Introduction}
\label{sec:introduction}

\subsection{Motivations}
\label{sec:motivations}

The complete definition of a quantum theory of gravity is probably one of the most fundamental problems of theoretical physics. 
According to several theoreticians, such a theory should obviously be background independent. As a consequence, spacetime has
to be reconstructed from more fundamental degrees of freedom which may be very well of a discrete nature.

Tensor Group Field Theory (TGFT) is quite a recent framework which aims at describing such a pre-geometric phase \citep{Rivasseau2011ab,Rivasseau2012ab}. Such an approach stands at the intersection of random tensor models and Group Field Theory (GFT). Random Tensors, especially colored 
ones, allow to define probability measures on simplicial
pseudo-manifolds (see \citep{Gurau2012ac} and references therein). 
Let us recall quickly that a random tensor of rank $d$ represents a
$(d-1)$-simplex. Each of its $d$ indices corresponds to a
$(d-2)$-simplex defining its boundary. The typical interaction part of a tensor model
is given by the gluing of $d+1$ $(d-1)$-simplices to get a 
$d$-simplex. GFT equips those tensors with
some crucial group theoretical data regarded as the seeds  
of a post geometric phase \citep{Carrozza2012aa}. TGFT could potentially relate a discrete quantum pre-geometric phase to a classical continuum limit consistent with Einstein General Relativity through a phase transition dubbed geometrogenesis.

$2D$ quantum gravity via matrix models is a successful example of such
a program. Matrix models indeed are theories of discrete surfaces yielding (after a phase transition) in the continuum, a theory of gravity dominated by sphere geometries \citep{Di-Francesco1995aa}. 
It can be stressed that the crucial analytical ingredient for achieving 
this result is the t'Hooft $1/N$ expansion. Until recently there was no analogue of such an
expansion in higher dimensions or for tensors of higher rank. Then, Gurau discovered a genuine way to generalize the matrix $1/N$ expansion to any dimension and any rank but for particular tensors \citep{Gurau2011ab,Gurau2011ac,Gurau2012aa}. Indeed, the new 1/N 
expansion relies on the so-called \emph{colored} random tensor models 
\citep{Gurau2012ac,Gurau2011aa}. 
The net result of this analysis is that the partition and correlation functions of the colored models admit perturbative expansions 
which are dominated by peculiar triangulations of spheres called \emph{melons} \cite{Bonzom2011aa}. This result has  been extended to any dimension.

Moreover, it has been realized \citep{Bonzom2012ac} that colored models
can be used to construct effective actions (and, then later \citep{Ben-Geloun2011aa}, renormalizable actions!) for
\emph{uncolored} tensor fields. In dimension $d$, there are $d+1$
colored fields. By integrating over $d$ of them, one obtains an
effective action for the last one, whose interactions are
dominated by terms corresponding precisely to those spheres which
dominate the tensor $1/N$ expansion. 

The first TGFT proved to be (just) renormalizable is a complex
$\vp^{6}$ tensor field theory on four copies of $U(1)$
\citep{Ben-Geloun2011aa}. Since then, other examples have been
discovered \citep{Ben-Geloun2012aa,Geloun2012aa,Carrozza2012aa}. In
particular, the contribution \citep{Carrozza2012aa} deals with a
propagator which implements the so-called closure or gauge
invariance condition on tensors. Such an additional symmetry is 
necessary, for instance, in order to interpret the Feynman amplitudes of the tensor model as the amplitudes of a discretized simplicial manifold issued from topological BF theories. We mention also that the model considered in \citep{Carrozza2012aa} is super-renormalizable. 
Let us shortly call these models as $\vp^n_d$, where $\vp: U(1)^d \to \mathbb{C}$ is the rank $d$ tensor and $n$ is the 
maximal coordination (or valence) of the vertices of the theory.
Our aim, in this paper, is to exhibit the first examples of \emph{just} renormalizable Abelian TGFT's on $U(1)^{d}$ with gauge invariance.

The paper is organized as follows. We first recall the basic definitions of colored graph
theory in \cref{sec:colored-graphs} and their effective (faded)
counterpart in \cref{sec:uncolored-graphs}. In \cref{sec:model}, we
present two main models analyzed in this paper, namely the
$\vp^{4}_{6}$ and $\vp^{6}_{5}$ models. \Cref{sec:power-counting} is the core of our contribution. It deals with the multi-scale analysis and the power counting theorem of some general polynomial TGFT's. Using this study, we provide the classification of divergent graphs appearing in   $\vp^{4}_{6}$ and $\vp^{6}_{5}$ which yields a control on the divergent amplitudes of these models.  
\Cref{sec:renormalization} is devoted to the renormalization of 
these divergent graphs providing, finally, the proof of  the 
renormalizability of the $\vp^{4}_{6}$ and $\vp^{6}_{5}$ models. \Cref{sec:super-ren} discusses, in a streamlined analysis, the super-renormalizability of the $\vp^4_6$ model
followed by a conclusion and two technical appendices.

\subsection{Colored graphs}
\label{sec:colored-graphs}

The Feynman graphs of the colored tensor model are $(d+1)$-colored graphs \cite{Gurau2011aa,Gurau2012ac}. For the sake
of completeness, we remind here few facts about these graphs,
their representation as stranded graphs and their uncolored version.

The graphs that we consider possibly bear external edges,
that is to say half-edges hooked to a unique vertex. We denote
$\cG_c$ a colored graph, $\cL(\cG_{c})$ 
the set of its internal edges ($L(\cG_{c})=|\cL(\cG_{c})|$) 
and $\cL_{e}(\cG_{c})$ the set of its external
edges ($L_{e}(\cG_{c})=|\cL_{e}(\cG_{c})|$). 
For all $n\in\N$, let $[n]$ be the set $\{0,1,\dots,n\}$ and 
$[n]^{*}$ be $\{1,2,\dots,n\}$.
\begin{defn}[Colored graphs]
  \label{def-ColoredGraphs}
  Let $d\in\N^{*}$. A $(d+1)$-colored graph $\cG_{c}$ is a $(d+1)$-regular bipartite graph
  equipped with a proper edge-coloring. In other words, there exists a
  map $\eta:\cL(\cG_{c})\cup\cL_{e}(\cG_{c})\to [d]$ such that if $e$ and
  $e'$ are adjacent edges, $\eta(e)\neq\eta(e')$.

A colored graph is said closed if it has no external edges and open otherwise.
\end{defn}
Examples of $4$-colored graphs are given in \cref{fig:ColoredGraphs}.
\begin{figure}[!htp]
  \centering
  \subfloat[Closed]{{\label{ClosedEx}}\includegraphics[scale=.8]{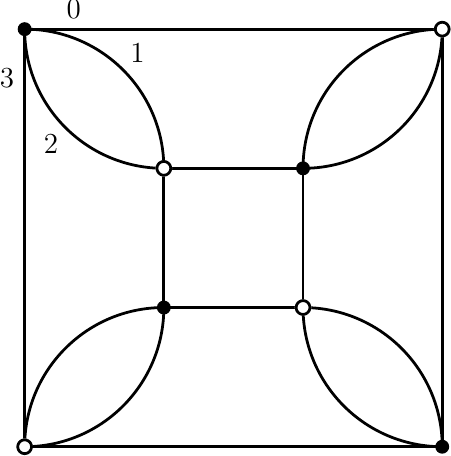}}\hspace{2cm}
  \subfloat[Open]{{\label{OpenEx}}\includegraphics[scale=1]{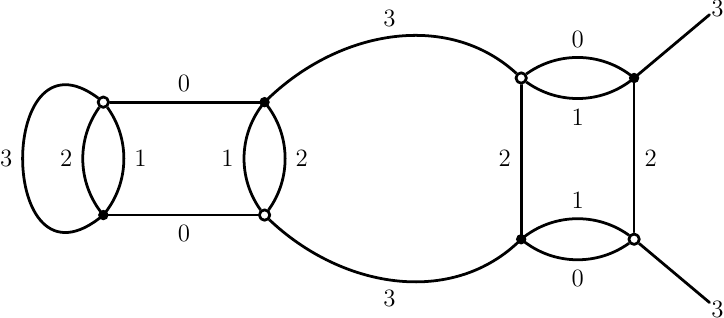}}
  \caption{Colored Graphs}
  \label{fig:ColoredGraphs}
\end{figure}

\begin{defn}[Faces]
  \label{def-faces}
  Let $\cG_{c}$ be a $(d+1)$-colored graph and $S$ a subset of
  $\{0,\dots,d\}$. We note $\cG_{c}^{S}$ the spanning subgraph of $\cG_{c}$  induced by the edges of colors in $S$.
 Then for all $0\les i,j\les  d$, $i\neq j$, a face of colors $i,j$ is a connected component of $\cG_{c}^{\{i,j\}}$.
\end{defn}
A face is open (or external) if it contains an external edge and
closed (or internal) otherwise. The set of closed faces of a graph $\cG_{c}$ is
written $\cF(\cG_{c})$ ($F(\cG_{c})=|\cF(\cG_{c})|$).
\begin{defn}[Jackets]
  \label{def-Jacket}
  Let $\sigma$ be a cyclic permutation on $[d]$, up to orientation. The jacket $J_{\sigma}$ of a
  $(d+1)$-colored graph $\cG_{c}$ is the ribbon subgraph of $\cG_{c}$ whose faces are colored $\sigma^{q}(0),\sigma^{q+1}(0)$ for $0\les q\les d$.
\end{defn}
To any jacket $J\subset\cG_{c}$, we associate its closed version
$\widetilde J$ obtained from $J$ by pinching its external legs. See \cref{fig:Jackets}.

 The numerous applications of random matrices
originate in the possibility to control (at least partially
but non perturbatively) the perturbative series of 
the partition function of these models. This interesting feature is due to the existence of the $1/N$-expansion of matrix models ($N$ denoting the size of the matrix) which provides in return a topological expansion of the partition function in terms of the genus. In higher dimensions, the generalization of such an $1/N$-expansion 
(where $N$ will denote the typical size of the tensor)
does not yield a topological expansion but rather 
a combinatorial expansion in terms of the degree
of the graph \citep{Gurau2012aa,Gurau2011ac,Gurau2011ab}.
For a colored closed graph $\cG_{c}$, the degree is defined as
\begin{align}
  \omega(\cG_{c})\defi&\sum_{J\text{ jacket of }\cG_{c}}g_{J}.\label{eq:degreeDef}
\end{align}
\begin{figure}[!htp]
  \centering
  \begin{tabular}{cc}
  \subfloat[A 2-point colored graph]{{\label{OpenEx3}}\includegraphics[scale=1.2]{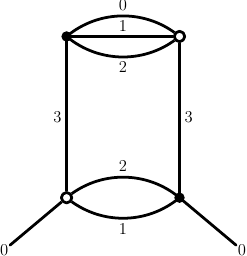}}\hspace{1cm}&
  \subfloat[Its jacket
  $(0123)$]{{\label{J1}}\includegraphics[scale=.5]{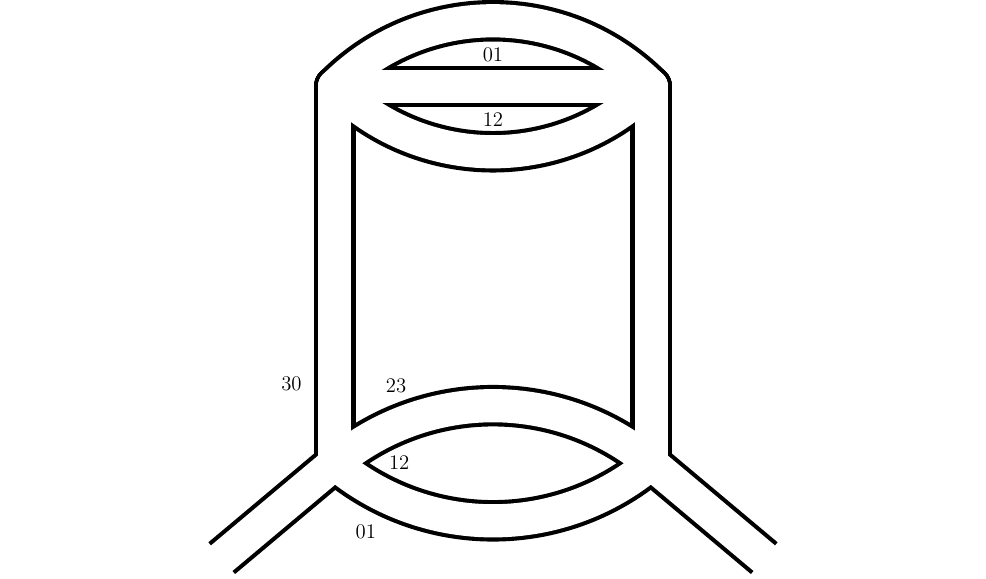}}\\
  \subfloat[Its jacket
  $(0132)$]{{\label{J2}}\includegraphics[scale=.5]{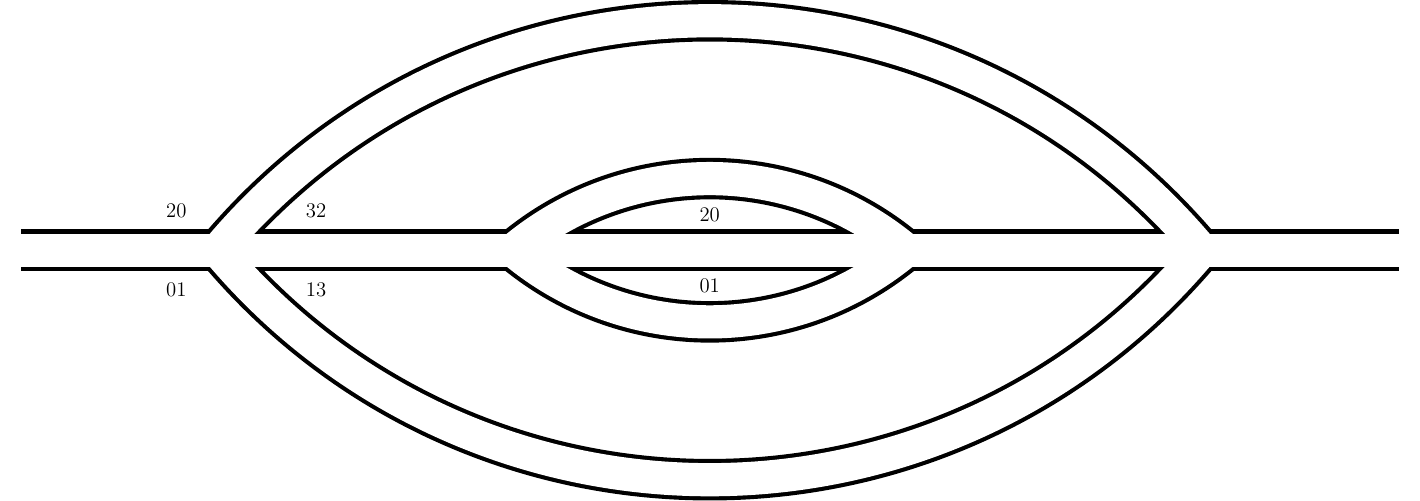}}\hspace*{1cm}&
  \subfloat[The closed jacket
  $(0123)$]{{\label{J3}}\includegraphics[scale=.5]{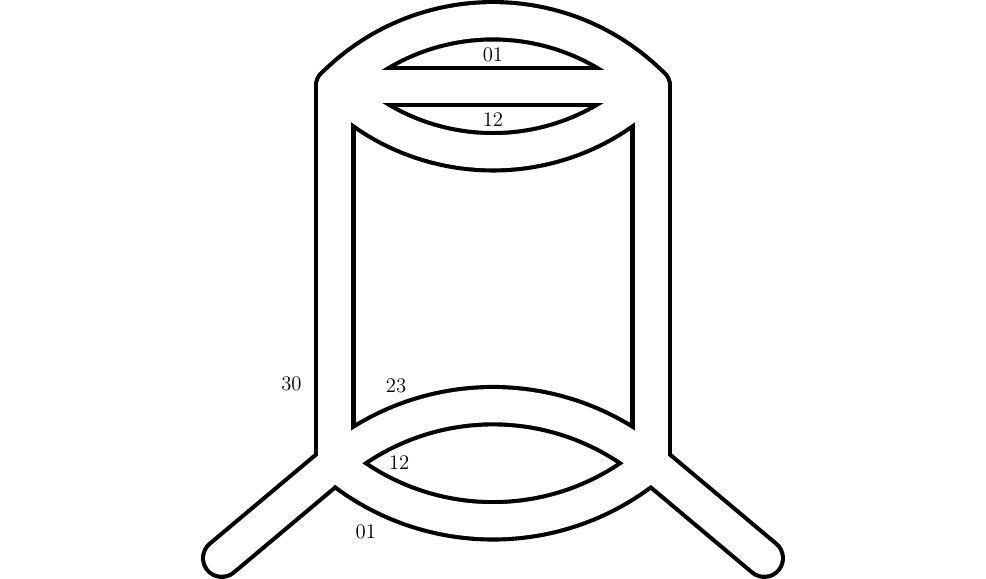}}
\end{tabular}
  \caption{Jackets}
  \label{fig:Jackets}
\end{figure}

A $(d+1)$-colored graph $\cG_{c}$ is dual to a $(d+1)$-simplicial complex corresponding to a pseudo-manifold \citep{Gurau2012ab}. The boundary of this manifold is triangulated by a complex dual to the boundary graph of $\cG_{c}$.
\begin{defn}[Boundary graph]
  \label{def-BoundaryGraph}
  Let $\cG_{c}$ be a $(d+1)$-colored graph. Its boundary graph
  $\partial\cG_{c}$ is the $d$-colored graph whose vertex-set is the set
  $\cL_{e}(\cG_{c})$ of external edges of $\cG_{c}$ and edge-set the bi-colored
  paths linking two external edges of $\cG_{c}$. In other words, an (internal)
  edge of $\partial\cG_{c}$ corresponds to an external face of $\cG_{c}$.

Note that the boundary graph of a closed colored graph is the empty graph.
\end{defn}
\begin{figure}[!htp]
  \centering
  \subfloat[An open colored graph $\cG_{c}$]{{\label{OpenEx2}}\includegraphics[scale=.9]{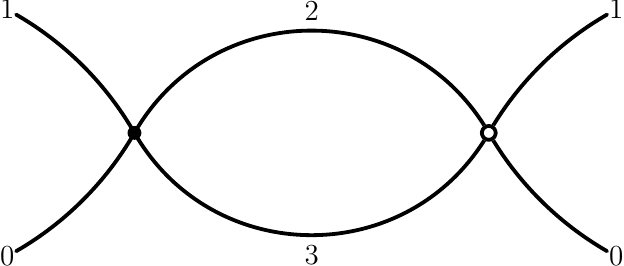}}\hspace{2cm}
  \subfloat[The boundary graph $\partial\cG_{c}$]{{\label{Boundary2}}\includegraphics[scale=.9]{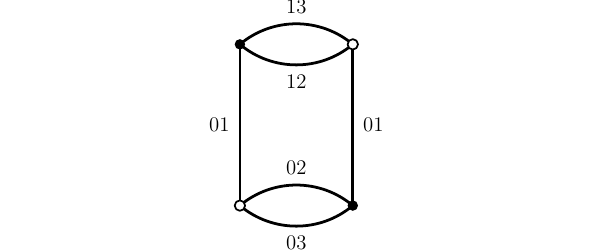}}
  \caption{The boundary operation}
  \label{fig:boundary}
\end{figure}

Any $(d+1)$-colored graph has an alternative stranded
representation. Any edge is therefore made of $d$ parallel strands. If the edge is of
color $i$, its strands are bicolored $ij$  with $j\in\hat i\defi [d]\setminus\{i\}$. The connecting pattern of any
$(d+1)$-valent vertex is the complete graph $K_{d+1}$. A (closed) face is then
represented as a (closed) curve made of one strand. An example of
such a representation is given in \cref{fig:StrandedEx}.
\begin{figure}[!htp]
  \centering
  \includegraphics[scale=.9]{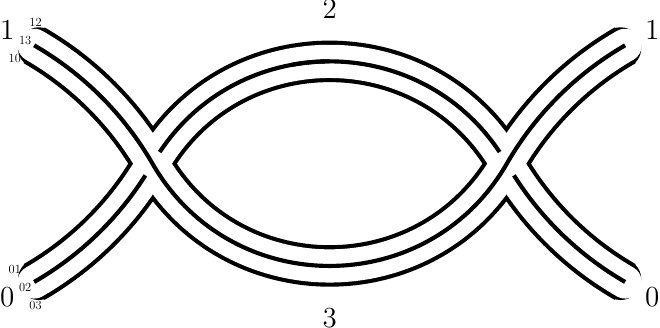}
  \caption{The stranded representation of \cref{OpenEx2}}
  \label{fig:StrandedEx}
\end{figure}

\subsection{Uncolored Graphs}
\label{sec:uncolored-graphs}

As explained in section \ref{sec:motivations}, we are interested in
effective actions obtained from the iid model \citep{Gurau2011aa} by
integrating over the fields of colors from $1$ to $d$. The effective vertices
correspond to open melonic graphs \cite{Bonzom2011aa} whose external edges are of color
$0$. The Feynman graphs of such models are so-called uncolored
graphs.  In fact, a close inspection of these uncolored graphs
show that they still possess a colored structure. Indeed, they 
are colored graphs but whose edges of colors $1,\dots,d$
are made of only one strand whereas edges of color $0$ still contain
$d$ strands. Such graphs actually represents the connecting pattern
of the indices of the tensor field of color $0$ \citep{Bonzom2012ac}. 
Generally uncolored graphs maps onto tensor trace
invariant objects \cite{Gurau2012ad}.

An uncolored graph $\cG$ has a unique colored extension $\cG_{c}$ which contains all
the faces $ij$, $0\les i,j\les d$ of a $(d+1)$-colored graph. The faces of
the uncolored graph are the $0i$-faces of its colored extension. In
\cref{UncolEx1} is depicted an uncolored open graph. The mono-stranded lines of color $i\ges 1$ are faded. Its colored extension
$\cG_c$ is shown in \cref{OpenEx3} and its (partially) stranded representation is drawn in \cref{UncolStrand1}.
\begin{figure}[!htp]
  \centering
  \subfloat[An open uncolored graph $\cG$]{{\label{UncolEx1}}\includegraphics[scale=.8]{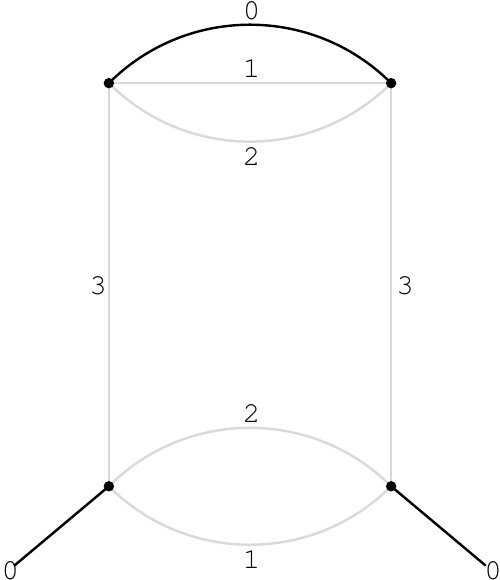}}\hspace{4cm}
  \subfloat[The stranded representation of $\cG$]{{\label{UncolStrand1}}\includegraphics[scale=.8]{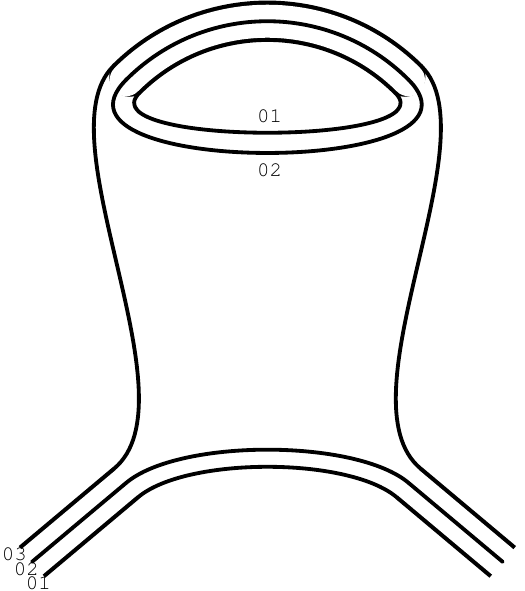}}
  \caption{An open uncolored graph $\cG$}
  \label{fig:uncolGraph}
\end{figure}

\paragraph{Connectedness}
\label{s-sect-models-let}

In graph theory, there is well-known notion of connectedness. A graph $G$
is connected if there exists at least one path in $G$ between any two
of its vertices. Another way of defining connectedness is the
following. Let us choose an orientation of the edges of $G$ and
consider the incidence matrix $I$ between edges and vertices whose
element $I_{lv}$ is $1$ if $l$ enters $v$, $-1$ if $l$ exits from $v$
and $0$ otherwise (in the case of a loop, we also choose $0$). Then a
graph is connected if it is not possible to put its incidence matrix
$I$ into a block diagonal form, after possible reordering of its rows
and columns.

There is another notion of connectedness that is relevant for tensor
graphs, colored or not. It uses the incidence matrix between edges and
faces:
\begin{defn}[Matrix $(\epsilon)_{lf}$ \citep{Carrozza2012aa}]
  \label{def-Epslf}
  Let $\cG$ be a (un)colored graph. Let us pick an arbitrary
  orientation for all of its edges and for all of its faces. 
We define the $L(\cG)\times F(\cG)$ matrix $\epsilon(\cG)$ as follows:
  \begin{align}
    \epsilon_{lf}(\cG)\defi&
    \begin{cases}
      \phantom{-}1&\text{ if $l\in f$ and their orientation match,}\\
      -1&\text{ if $l\in f$ and their orientation do not match,}\\
      \phantom{-}0&\text{otherwise.}
    \end{cases}
\label{eq:Epslf}
  \end{align}
\end{defn}
The matrix $\epsilon(\cG)$ depends on the chosen orientations but one
easily checks that its rank does not. A tensor graph will be said to
be \emph{face-connected} if its incidence matrix $\epsilon$ cannot be put
into a block diagonal form by a permutation of its rows and
columns. To distinguish between the two notions of connectedness, let
us call a graph \emph{vertex-connected} if it is connected in the
usual sense. Face-connectedness is a relevant notion in our context
because the power counting factorizes into the face-connected
components of the tensor graphs. But note that the amplitude
themselves do not enjoy such a factorization, and the usual notion of
vertex-connectedness remains relevant for renormalization (i.e. for
locality or better here traciality, see \cref{sec:trac-count}).

\newpage
\section{The models}
\label{sec:model}

Let us start now the study of quantum tensorial field theories on $U(1)^{d}$. The field in the present context is a  tensor $\vp:U(1)^{d}\to\C$. We will mainly assume that the field $\vp$ satisfies the 
following  translation invariance under a diagonal group action 
also called gauge condition:
\begin{align}\label{Gauge-condition}
\vp(hg_{1},\dots,hg_{d})=\vp(g_{1},\dots,g_{d}),\quad \forall h\in U(1)\,.
\end{align}
For $1\les j\les d$, let $g_{j}=e^{\imath\theta_{j}}\in U(1)$. 
By Fourier transform and writing $p\defi (p_{1},\dots,p_{d})$, one has
\begin{align}
  \label{eq:FourierPhi}
  \vp(g_{1},\dots,g_{d})=&\sum_{p\in\Z^{d}}\vp_{p_{1},\dots,p_{d}}\prod_{j=1}^{d}e^{\imath\theta_{j}p_{j}}.
\end{align}
A further simplification on the notation as
$\vp_{p_{1},\dots,p_{d}}\fide\vp_{1\cdots d}$
will be useful.

We will concentrate on two models namely for $\vp^{4}$ on
$U(1)^{6}$ and for $\vp^{6}$ on $U(1)^{5}$, or simply the
$\vp^4_6$ and $\vp^6_5$ models, respectively. 
We want to prove that they both are renormalizable. 
Rather than separating the renormalizability proofs, 
we perform the analysis in a row for both of these models
because of their similar features. \\

\noindent
There is a unique type of (vertex-)\emph{connected} melonic quartic
vertex in any dimension. 
In $d=6,5$, these are given by:
\begin{align}
  \label{eq:vertex}
  V^{d=6}_{4,1}\defi&\sum_{\Z^{12}}\vpb_{654321}\,\vp_{12'3'4'5'6'}\,\vpb_{6'5'4'3'2'1'}\,\vp_{1'23456}+\text{
  permutations} ,\\
\label{eq:vertex2}
 V^{d=5}_{4,1}\defi&  \sum_{\Z^{12}}\vpb_{54321}\,\vp_{12'3'4'5'}\,\vpb_{5'4'3'2'1'}\,\vp_{1'2345}+\text{
  permutations} ,
\end{align}
which are depicted in \cref{fig:Vertices4}.
 The permutations are taken on the color numbers (from $1$ to $6$ or $5$, respectively). 
\begin{figure}[!htb]
  \centering
  \includegraphics[scale=.8]{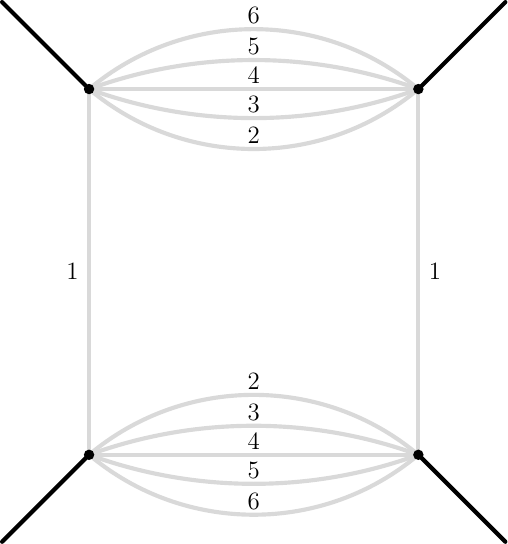} \hspace{3cm}
  \includegraphics[scale=.8]{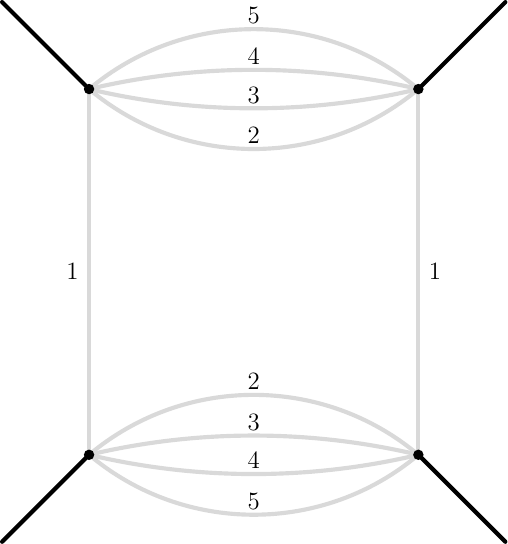}
\caption{Melonic quartic vertex for $d=6$ left and $d=5$ right}
  \label{fig:Vertices4}
\end{figure}
For the $\vp^6_5$ model, there are other interactions: two connected melonic uncolored graphs with six external edges of color $0$ (see \cref{fig:Vertices6}):
\begin{align}
  V_{6,1}\defi&\sum_{\Z^{15}}\vpb_{54321}\vp_{1'2345}\vpb_{5'4'3'2'1'}\vp_{1"2'3'4'5'}\vpb_{5"4"3"2"1"}\vp_{12"3"4"5"}+\text{
    permutations}\label{eq:V61},\\
  V_{6,2}\defi&\sum_{\Z^{15}}\vpb_{54321}\vp_{1'2345}\vpb_{5'4'3'2'1'}\vp_{1"2"3"4"5'}\vpb_{5"4"3"2"1"}\vp_{12'3'4'5"}+\text{
    permutations.}\label{eq:V62}
\end{align}
\begin{figure}[!htp]
  \centering
  \subfloat[Type I ($V_{6,1}$)]{{\label{UncolGraphs-5}}\includegraphics[scale=.8]{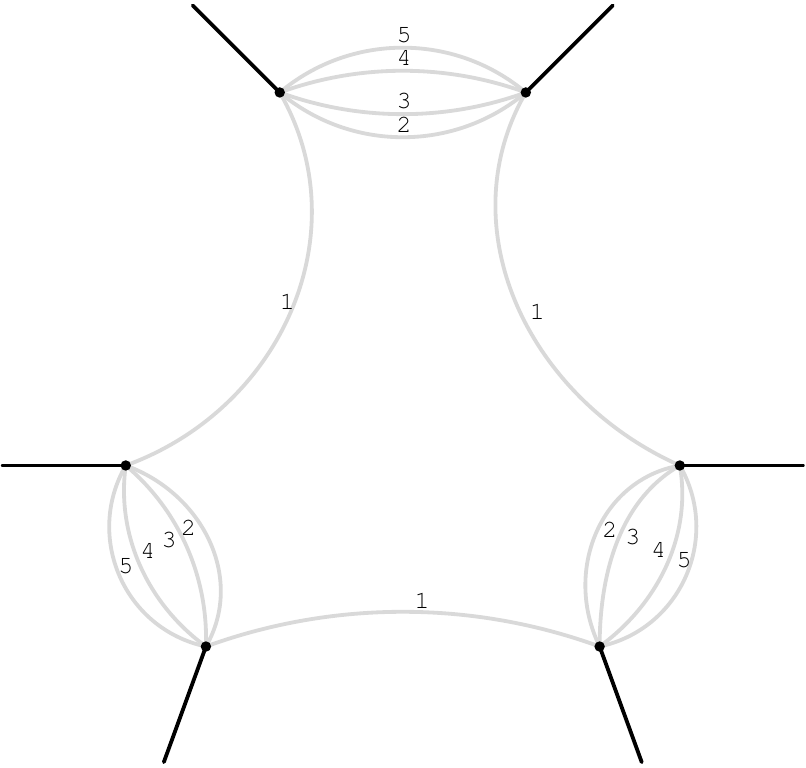}}\hspace{4cm}
  \subfloat[Type II ($V_{6,2}$)]{{\label{UncolGraphs-6}}\includegraphics[scale=.8]{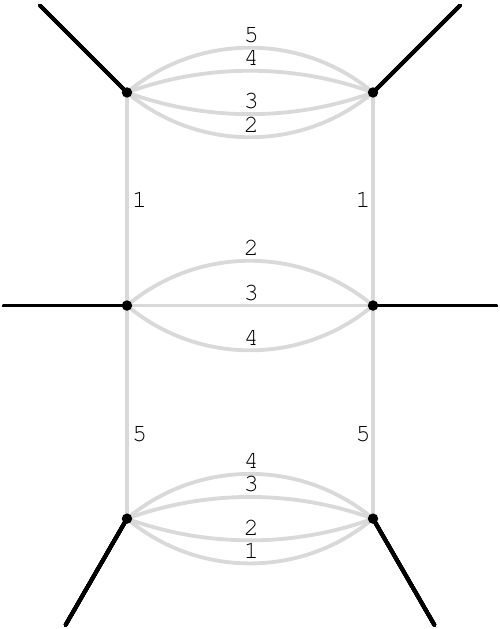}}
  \caption{Melonic Vertices of degree $6$}
\label{fig:Vertices6}
\end{figure}

As realized in \citep{Ben-Geloun2011aa}, 
the renormalization of the $4$-point function of the $\vp^{6}_{5}$ model will generate a disconnected \emph{anomalous} vertex of degree 4 (see  \cref{fig:VerticeV42})
that we need to incorporate in the action: 
\begin{align}
  \label{eq:V4anom}
  V_{4,2}\defi&\big(\sum_{\Z^{5}}\vpb_{54321}\vp_{12345}\big)^{2}.
\end{align}

\begin{figure}[!htp]
  \centering
  \includegraphics[scale=.8]{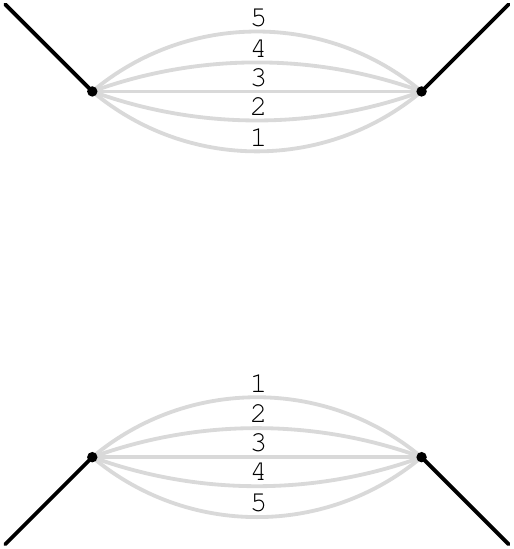}
  \caption{Graph corresponding to the vertex $V_{4,2}$  
}
  \label{fig:VerticeV42}
\end{figure}

Let $v$ be a vertex of degree $2n$ of the theory, we will
generically denote by $K_{v}$ the corresponding kernel which
is of the form
\begin{align}
  V_{2n,i}=\sum_{p_{1},\dots,p_{2nd}}K_{v}(p_{1},\dots,p_{2nd})\prod_{k=1}^{n}\vpb(p_{k,1},\dots,p_{k,d})\vp(p_{k+n,1},\dots,p_{k+n,d}).\label{eq:kernel}
\end{align}

For both models, the propagator $C(p,p')$ is the usual one $(ap^2+m^{2})^{-1}\delta(p-p')$
(where $p^2 = \sum_{i=1}^{d}p_{i}^{2}$,   
$a$ is the wave-function ``coupling constant'')
supplemented by the gauge condition 
$\delta(\sum_{i}^d p_{i})$\footnote{Note that $\delta$ here is understood as the Kronecker symbol.}.

The two actions that we consider are:
\begin{align}
  S_{4}[\vpb,\vp]=&\sum_{\Z^{6}}\vpb_{654321}\,\delta(\sum_{i}p_{i})(ap^{2}+m^{2})\,\vp_{123456}+\lambda^{(4)}_{4,1}\,V^6_{4,1},\label{eq:Action4}\\
  S_{6}[\vpb,\vp]=&\sum_{\Z^{5}}\vpb_{54321}\,\delta(\sum_{i}p_{i})(ap^{2}+m^{2})\,\vp_{12345}\nonumber\\
  &+\lambda^{(6)}_{4,1}\,V^5_{4,1}+\lambda_{4,2}V_{4,2}+\lambda_{6,1}V_{6,1}+\lambda_{6,2}V_{6,2}.\label{eq:Action6}
\end{align}

Let $\mu_{C}$ be the Gaussian measure associated to the covariance
$C$. The correlation functions are formally given by
\begin{align}\label{eq:CorrFcts}
  S_{2N}(g_{1,1},g_{1,2},\dots,g_{2N,d})=\int
  d\mu_{C}\,\big(\prod_{i=1}^{N}\vpb(g_{i,1},\dots,g_{i,d})\vp(g_{2i,1},\dots,g_{2i,d})\big)
  e^{-S_{\text{int}}[\vpb,\vp]}.
\end{align}
where $S_{\text{int}}[\vpb,\vp]$ is either $\lambda^{(4)}_{4,1}V^6_{4,1}$ or
$\lambda^{(6)}_{4,1}V^5_{4,1}+\lambda_{4,2}V_{4,2}+\lambda_{6,1}V_{6,1}+\lambda_{6,2}V_{6,2}$,
depending on the model under consideration. Our aim here is to define these correlation
functions as formal power series. In other words, we prove that the
models \eqref{eq:Action4}  and \eqref{eq:Action6} are renormalizable to all orders of
perturbation:
\begin{thm}
  \label{thm-PertRen}
  There exist formal power series $F_{1}, F_{2}, F_{3}$ in a
  parameter $\lambda_{4,1}^{(4),r}$ and multi-power series
  $\{G_{i}\}_{1\les i\les 6}$ in parameters
  $\lambda^{r}\defi\{\lambda_{4,1}^{(6),r},\lambda_{4,2}^{r},\lambda_{6,1}^{r},\lambda_{6,2}^{r}\}$
  such that if we fix
    \begin{align}
      \lambda^{(4)}_{4,1}=&F_{1}(\lambda_{4,1}^{(4),r}),\quad
      m^{2}=F_{2}(\lambda_{4,1}^{(4),r}),\quad
      a=F_{3}(\lambda_{4,1}^{(4),r}),\\
      \lambda_{4,1}=&G_{1}(\lambda^{r}),\quad
      \lambda_{4,2}=G_{2}(\lambda^{r}),\quad
      \lambda_{6,1}=G_{3}(\lambda^{r}),\\
      \lambda_{6,2}=&G_{4}(\lambda^{r}),\quad m^{2}=G_{5}(\lambda^{r}),\quad
      a=G_{6}(\lambda^{r}),
    \end{align}
all correlation functions are well-defined
  formal power series in $\lambda_{4,1}^{(4),r}$, and 
$\lambda^{r}$, respectively.
\end{thm}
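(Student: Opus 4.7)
The plan is to carry out a BPHZ-style renormalization adapted to the gauge-invariant TGFT setting, building directly on the multi-scale power counting established in \cref{sec:power-counting}. We slice the propagator as $C=\sum_{i\ges 0}C_{i}$, where each slice carries both ultraviolet and infrared cutoffs at scale $M^{-i}$, and assign to every internal line of a Feynman graph a scale index. The amplitude of a graph $\cG$ is then bounded by a product, over its quasi-local face-connected subgraphs, of factors $M^{-\omega}$ where $\omega$ is the degree of divergence established in that section.

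The first step is to read off from this power counting, together with the classification of divergent subgraphs performed there, the finite list of primitively divergent face-connected, vertex-connected subgraphs. For $\vp^{4}_{6}$ the list comprises two-point melonic graphs (requiring mass and wave-function renormalization through a second-order Taylor subtraction) and four-point melonic graphs with the combinatorial pattern of $V^{6}_{4,1}$. For $\vp^{6}_{5}$ the list contains analogous two-point graphs, four-point graphs of both the connected melonic type $V^{5}_{4,1}$ and the disconnected anomalous type $V_{4,2}$, and six-point graphs of the two melonic patterns $V_{6,1}$ and $V_{6,2}$.

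The second step, and the technical heart of the argument, is the \emph{traciality} property: one must show that the divergent part of each such subgraph, after a finite Taylor expansion in the external momenta, coincides with a kernel $K_{v}$ already present in the action. Concretely, one Taylor expands the amplitude $A(\cG)$ at vanishing external momenta and checks that the first few terms carry the tensor-invariant structure of the appropriate melonic vertex (or, for the two-point functions, of the quadratic form $ap^{2}+m^{2}$). The hard part is that the gauge projector $\delta(\sum_{i}p_{i})$ couples external and internal momenta nontrivially along external faces, so one must track carefully how this projector factorizes at leading order into an external trace-invariant times a purely internal amplitude; the appearance of the anomalous disconnected vertex $V_{4,2}$ in the $\vp^{6}_{5}$ model is already a manifestation of this subtlety and forces the extension of the action before any inductive scheme can close.

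Once traciality is established, the remaining steps are essentially mechanical. One introduces Zimmermann-type counterterms subtracting the Taylor-expanded divergent parts of every dangerous subgraph, and verifies inductively on the scale structure (using the useful/dangerous dichotomy of the multi-scale expansion) that the renormalized amplitude of any Feynman graph is bounded uniformly in the ultraviolet cutoff, order by order in perturbation. Summing over scale attributions at fixed perturbative order then yields finite renormalized correlation functions. Finally, the counterterms define implicit relations between the bare couplings $(\lambda_{4,1}^{(4)},\lambda_{4,1}^{(6)},\lambda_{4,2},\lambda_{6,1},\lambda_{6,2},a,m^{2})$ and their renormalized counterparts; since these relations reduce to the identity at zeroth order in the renormalized couplings, the formal implicit function theorem produces the power series $F_{i}$ and $G_{i}$ claimed in the statement.
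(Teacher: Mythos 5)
Your overall architecture (multi-scale power counting, classification of primitively divergent subgraphs, Taylor subtraction at zero external momenta, traciality, useful/dangerous forest bookkeeping) matches the paper's strategy, and your identification of the divergent sectors and of the anomalous vertex $V_{4,2}$ is correct. But there is a genuine gap at the wave-function renormalization, which is precisely where the paper has to work hardest. You assert that for the two-point graphs one simply ``checks that the first few terms carry \dots the quadratic form $ap^{2}+m^{2}$.'' This check \emph{fails} for an individual graph. The second-order Taylor term has the form $\cA^{\mu}_{\cG,2}=\sum_{f_{e}}p_{f_{e}}^{2}F_{\cG}(f_{e})$ (cf.\ \cref{eq:F1F2F3}), where the coefficient $F_{\cG}(f_{e})$ genuinely depends on the external face: external faces of length zero never appear in the internal sums, so a single graph cannot reproduce the color-symmetric combination $p^{2}=\sum_{i}p_{i}^{2}$. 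A naive subtraction scheme based on single-graph counterterms therefore does not close on the kernel $ap^{2}+m^{2}$ present in the action, and your induction would not go through as stated.

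Two additional inputs are needed, both of which your proposal omits. First, one must show that the quadratic form in the external momenta is actually \emph{diagonal}, i.e.\ that the cross terms $\kp_{e(f_{1})}\kp_{e(f_{2})}F_{1}(f_{1},f_{2})$ with $f_{e}(f_{1})\neq f_{e}(f_{2})$ vanish. This is not automatic: it relies on the fully melonic character of the divergent two-point graphs, through the facts that (i) each line of $\cL_{\mu}$ contributes to at most one external face, and (ii) two loop lines feeding different external faces share no internal face (\cref{thm-IntFacesWithDifftExtFaces}), so that the cross term is odd under a sign flip of the internal momenta and sums to zero. Second, to remove the residual $f_{e}$-dependence of the diagonal coefficients one must exploit the invariance of the interaction under permutations of the positive colors and sum over the whole equivalence class $[\cG]$ of graphs related by cyclic color permutations; only this symmetrized sum is proportional to $p^{2}\sum_{f_{e}}F_{\cG}(f_{e})$ and hence absorbable into $a$. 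Without these two steps (melonicity-based diagonalization plus color-orbit averaging) the traciality claim for the two-point function, and therefore the closure of your inductive scheme, is unproven.
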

The rest of the paper is devoted to the proof of this theorem.
For simplicity  and when no confusion may occur, 
both $\lambda^{(4)}_{4,1}$ and $\lambda^{(6)}_{4,1}$
will be simply denoted by $\lambda_{4,1}$.

\section{Multi-scale analysis and power counting theorem}
\label{sec:power-counting}

The goal of this section is the classification of all the primitively divergent graphs generated by both models
\eqref{eq:Action4} and \eqref{eq:Action6}. Our main tool 
is the multiscale analysis. This will help to the proof of an upper bound 
on the amplitude of a general graph implying
the existence of a power counting theorem. 

All the framework of  \cref{sec:model} directly extends to 
models with arbitrary rank tensors with polynomial interactions $P(\vpb,\vp)$. We perform our multi-scale analysis in this general 
setting and only at the end we will specialize the rank and 
maximal degree of the vertices. 
This leads us to  some 
models of interest ($\vp^4_6$, $\vp^6_5$ and $\vp^4_5$ models).

\subsection{Multiscale analysis}
\label{sec:multiscale-analysis}

Multiscale analysis  allows us to study precisely the amplitudes of
Feynman graphs through the glass of a discrete version of Hepp
sectors \citep{Riv1}. To this aim, the first step consists in slicing the
propagator into different scales. 

Let $M\in\R,\,M>1$, we have:
\begin{align}
  C(p)=&\delta\big(\sum_{i=1}^{d}p_{i}\big)\int_{\R_{+}}e^{-\alpha(p^{2}+m^{2})}d\alpha \fide\sum_{j=0}^{\infty}C^{j}(p)\label{eq:prop}
\end{align}
with
\begin{align}
 C^{0}(p)\defi&\delta\big(\sum_{i=1}^{d}p_{i}\big)\int_{1}^{\infty}e^{-\alpha(p^{2}+m^{2})}d\alpha,\\
\forall j\ges 1, \quad C^{j}(p)\defi&\delta\big(\sum_{i=1}^{d}p_{i}\big)\int^{M^{-2(j-1)}}_{M^{-2j}}e^{-\alpha(p^{2}+m^{2})}d\alpha.
\end{align}
We regularize the models with an ultraviolet cutoff by restricting the
sum over $j$ to the range $0$ to $\rho<\infty$. The momenta are thus
(smoothly) bounded by $M^{2\rho}$.
The sliced propagator admits a simple upper bound:
\begin{align}
  C^{i}(p)\les  K\,M^{-2i}e^{-M^{-2i}(p^{2}+m^{2})}\delta\big(\sum_{j}p_{j}\big),
\label{eq:propaBound}
\end{align}
where $K = M^{2}-1$.

The next stage is to bound any graph amplitude.
Consider then $\cG$ a Feynman graph. Its amplitude writes
\begin{align}
  A_{\cG}=\sum_{p_{1},\dotsc, p_{dL(\cG)}}\;\prod_{l\in\cL(\cG)}C_{l}(p_{l},p'_{l})\prod_{v\in\cV(\cG)}K_{v}\label{eq:AG},
\end{align}
where $\cV(\cG)$ ($V(\cG)=|\cV(\cG)|$) denotes the set of vertices of $\cG$ and $p_{1},\dotsc,p_{dL(\cG)}$  the momenta associated to the strands of lines in $\cG$.
 As
each propagator is sliced according to \eqref{eq:prop}, the amplitude
can be decomposed as a sum over the so-called momentum  attributions:
\begin{align}
  A_{\cG}=\sum_{i_{1},\dotsc,i_{L}}\sum_{p_{1},\dotsc,p_{dL(\cG)}}\prod_{l\in\cL(\cG)}C^{i_{l}}_{l}(p_{l},p'_{l})\prod_{v\in\cV(\cG)}K_{v}\fide\sum_{\mu\in\N^{L}}A_{\cG}^{\mu}.
\end{align}
We focus on $A_{\cG}^\mu$. The significant upper bound of the following will be expressed in
terms of certain special subgraphs of $\cG$ called 
\emph{dangerous} subgraphs defined as follows. 
Let $\cG^{\mu}$ be a Feynman graph
with a scale attribution $\mu$. For all $i\in\N$, let $\cG^{i}$ be the
subgraph of $\cG$ induced by $\cL^{i}(\cG)\defi\{l\in\cL(\cG)\tqs
i_{l}\ges i\}$. $\cG^{i}$ may have several (say of number $C_{i}(\cG)$) vertex-connected components in which case we note
them $\cG^{i}_{k}$, $1\les k\les C_{i}(\cG)$. These connected subgraphs are
the dangerous subgraphs in the sense that the power counting will be
written only in terms of those subgraphs and no other.
There is a simple way to determine if a given subgraph $\cH$ is dangerous or
not. Let $i_{\cH}(\mu)\defi\inf_{l\in\cL(\cH)}i_{l}$ and
$e_{\cH}(\mu)\defi\sup_{l\in\cL_{e}(\cH)}i_{l}$. $\cH$ is dangerous if
and only if $i_{\cH}(\mu)>e_{\cH}(\mu)$.

The $\cG^{i}_{k}$'s are partially ordered by inclusion and form in
fact a forest, i.e. a set of connected graphs such that any two of
them are either disjoint or included one in the other \citep{Riv1}. If $\cG$ is
itself connected, the forest is in fact a tree whose root is the full
graph $\cG=\cG^{0}$. This abstract tree is named the Gallavotti-Nicol\`o
(GN) tree.

Our goal is to find an optimal (with respect to a scale attribution) upper bound on the amplitude of a
general  graph $\cG^{\mu}$.
\begin{thm}[Power counting]
  \label{thm-DivDegree}
  Let $\cG$ be a Feynman graph of a polynomial $P(\vpb,\vp)$ model
  with propagator \eqref{eq:prop} on $U(1)^{d}$. There exist constants $K_{1},K_{2},K_{3}\in\R_{+}^{*}$ such that
  \begin{align}
    A_{\cG}^{\mu}\les&K_{1}^{V(\cG)}K_{2}^{N(\cG)}K_{3}^{F(\cG)}\prod_{i=0}^{\rho}\prod_{k=1}^{C_{i}(\cG)}M^{\omega_{d}(\cG^{i}_{k})},\text{
    where }\omega_{d}(\cG^{i}_{k})=-2L(\cG^{i}_{k})+F(\cG^{i}_{k})-R^{i}_{k}\label{eq:DivDegree}
  \end{align}
and $R^{i}_{k}$ is the rank of $\epsilon(\cG^{i}_{k})$.
\end{thm}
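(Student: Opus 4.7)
My approach follows the standard multi-scale strategy \`a la Rivasseau, with special care for the gauge delta functions carried by each sliced propagator. I would start by inserting the bound \eqref{eq:propaBound} into the explicit expression \eqref{eq:AG} for $A_\cG^\mu$, extracting $\prod_l K M^{-2 i_l}$ together with the Gaussians $e^{-M^{-2 i_l} p_l^2}$ and the per-line constraints $\delta(\sum_f \epsilon_{lf} p_f)$. For a polynomial interaction the vertex kernels $K_v$ are uniformly bounded on the strand momenta, yielding a factor $K_1^{V(\cG)}$.

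Next, strand conservation along each face allows me to label the momenta by faces $f \in \cF(\cG)$ and rewrite $\prod_l e^{-M^{-2 i_l} p_l^2} = \prod_f e^{-p_f^2 \sum_{l \in f} M^{-2 i_l}}$, so that the sum over each $p_f$ is controlled by $M^{i_f}$ with $i_f := \min_{l \in f} i_l$. The delta constraints are then dealt with by solving $R(\cG)$ of them, which costs one factor $M^{-i}$ per eliminated momentum at the scale at which the constraint becomes new in the Gallavotti--Nicol\`o layer structure.

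Passing to the per-component statement uses the GN telescoping identities $\sum_l i_l = \sum_{i\geq 1,k} L(\cG^i_k)$ and the analogous $\sum_f i_f = \sum_{i\geq 1,k} F(\cG^i_k)$; these convert the bound into $\prod_{i,k} M^{-2L(\cG^i_k) + F(\cG^i_k)}$ modulo overall $M^{O(L)}, M^{O(F)}$ factors that can be absorbed into $K_2^{N(\cG)} K_3^{F(\cG)}$. Writing $r_i := R(\cG^i) = \sum_k R^i_k$, the rank contribution telescopes similarly through $\sum_{i\geq 0} i(r_i - r_{i+1}) = \sum_{i\geq 1,k} R^i_k$, producing the final factor $\prod_{i,k} M^{-R^i_k}$ and thus the divergence degree $\omega_d(\cG^i_k) = -2L(\cG^i_k) + F(\cG^i_k) - R^i_k$.

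The main obstacle is this last telescoping for the rank. Whereas $L$ and $F$ are trivially additive over face-connected components, $R$ is the rank of a global matrix and must be shown to be compatible with the decomposition along the GN tree: concretely one needs to locate exactly $R^i_k - \sum_{k': \cG^{i+1}_{k'} \subset \cG^i_k} R^{i+1}_{k'}$ independent new constraints at layer $(i,k)$. This should follow from the face-connectedness factorization noted after \cref{def-Epslf}: each $\cG^i_k$ contributes its own rank $R^i_k$ because its gauge constraints involve only the closed faces of $\cG^i_k$ (external faces contributing fixed momenta which merely shift the affine system without changing its rank), so the rank assembles consistently scale by scale along the tree.
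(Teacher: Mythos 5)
Your proposal is correct and follows essentially the same route as the paper: bound the sliced propagators, pass to face momenta, spend the Gaussians on some faces and the gauge deltas on the others, and telescope over the Gallavotti--Nicol\`o tree. You correctly isolate the one nontrivial point --- choosing the $R(\cG)$ eliminated faces so that exactly $R^{i}_{k}$ of them lie in each $\cG^{i}_{k}$ --- and your justification via the block structure of $\epsilon$ (rows outside $\cL(\cG^{i}_{k})$ vanish on the columns $\cF(\cG^{i}_{k})$, so independent column sets chosen in the children stay independent and can be completed in the parent) is precisely the leaf-to-root construction the paper rephrases from Carrozza et al., just stated more briefly.
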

The proof of this theorem is already available in the literature. 
In \citep{Ben-Geloun2010aa}, the \emph{superficial} degree of divergence 
 of a TGFT graph amplitude
for an Abelian theory and without $(p^2+m^2)$ term 
is computed and proven to be $F-R$, with $R=\rk{\epsilon(\cG)}$. In \citep{Ben-Geloun2011aa} an
\emph{optimized} bound on the amplitudes of a $\vp^{6}$-type model
with $(p^{2}+m^{2})^{-1}$ as propagator (but without gauge invariant
condition) is proven. The degree of
divergence is there $-2L+F$. More recently, Abelian 
theories with both $(p^{2}+m^{2})^{-1}$ and gauge invariant
condition has been finally proven in \citep{Carrozza2012aa}. It is precisely the bound
\eqref{eq:DivDegree}. However, we think that it may be instructive to collect here all the arguments and rewrite the complete proof, in momentum space. 

\begin{proof}[of  \cref{thm-DivDegree}]
  We want to bound the amplitude of graph $\cG$ with scale attribution
  $\mu$:
  \begin{align}
  A^{\mu}_{\cG}=\sum_{p_{1},\dotsc,p_{dL(\cG)}}\prod_{l\in\cL(\cG)}C^{i_{l}}_{l}(p_{l},p'_{l})\prod_{v\in\cV(\cG)}K_{v}.
\end{align}
The specific forms of the vertex kernels $K_{v}$ considered in such 
models imply that there is actually one independent sum per closed
face of $\cG$ (as it is the case in matrix models). Let us pick an
arbitrary orientation of the faces and define the unique momentum of
the face $f$ to be $p_{f}$ in the direction of the chosen
orientation. The orientations (signs) of the line  momenta are similarly fixed by
choosing an orientation of the edges of $\cG$. For each line
$l\in\cL$, the delta function $\delta_{l}\big(\sum_{i=1}^{d}p_{l,i}\big)$
can be rewritten as
$\delta_{ l}\big(\sum_{f\in\cF}\epsilon_{lf}p_{f}+p_{l,e}\big)$ where $p_{l,e}$ is the sum of momenta of the line $l$ which belong to
external faces.

Using the bound \eqref{eq:propaBound} on the sliced propagator, we get
\begin{align}
  A^{\mu}_{\cG}\les&K^{L}\sum_{p_{f_{1}},\dots,p_{f_{F}}}\prod_{l\in
    \cL(\cG)}\big[M^{-2i_{l}}e^{-M^{-2i_{l}}p_{l}^{2}}\delta_{l}\big(\sum_{f\in\cF(\cG)}
\epsilon_{lf}p_{f}+p_{l,e}\big)\big]
\label{eq:AmplitudeFaceVariables}\\
  \les&K_{1}^{V}K_{2}^{N}\sum_{p_{f_{1}},\dots,p_{f_{F}}}\big[\big(\prod_{f\in\cF(\cG)}e^{-M^{-2i_{f}}p_{f}^{2}}\big)\prod_{l\in\cL}M^{-2i_{l}}
\delta_{l}\big(\sum_{f\in\cF(\cG)}\epsilon_{lf}p_{f}
+p_{l,e}\big)\big],
\end{align}
where $i_{f}\defi\inf_{l\in f}i_{l}$, $L = L(\cG)$, $V = V(\cG)$ 
and $N=N(\cG)$.  We choose subsets $\cF_{\mu}\subseteq\cF$ and $\cL_{\mu}\subseteq\cL$
such that $|\cF\setminus\cF_{\mu}|+|\cL_{\mu}|=F(\cG)$. If
$f\in\cF\setminus\cF_{\mu}$, the sum over $p_{f}$ is performed using
the corresponding exponential function. If not, the sum over $p_{f}$
is performed using a $\delta_{l}$ function corresponding to a line $l\in\cL_{\mu}$:
\begin{align}
  A^{\mu}_{\cG}\les&K_{1}^{V}K_{2}^{N}\prod_{l\in\cL}M^{-2i_{l}}\sum_{p_{f_{1}},\dots,p_{f_{F}}}\prod_{f\in\cF\setminus\cF_{\mu}}e^{-M^{-2i_{f}}p_{f}^{2}}\prod_{l\in\cL_{\mu}}\delta_{l}\big(\sum_{f\in\cF(\cG)}\epsilon_{lf}p_{f}+p_{l,e}\big).\label{eq:AmplitudeR}
\end{align}
The maximal number of sums we can perform with the $\delta_{l}$
functions is precisely $\rk{\epsilon(\cG)}$. A sum performed
with an exponential function brings a factor $M^{i_{f}}$ whereas a
sum performed with a delta function gives $1$. It is thus
necessary to optimize the choice of the sets $\cF_{\mu}$ and
$\cL_{\mu}$ with respect to the scale attribution $\mu$. In
\citep{Carrozza2012aa} it is proven that such an optimal choice is
possible and given by:
\begin{enumerate}
\item There exists a subset $\cL_{\mu}\subset\cL$ with
  $|\cL_{\mu}|=\rk{\epsilon(\cG)}$ and the arguments of the
  corresponding $\delta_{l\in\cL_{\mu}}$ functions are independent.
\item For all $i,k$, $|\cL_{\mu}\cap\cL(\cG^{i}_{k})|=\rk{\epsilon(\cG^{i}_{k})}$.
\end{enumerate}
Let us rephrase the proof of \citeauthor{Carrozza2012aa} in the
following way. For all $\cF'\subset\cF(\cG)$, let us denote
by $\epsilon_{|\cF'}$ the matrix $\epsilon(\cG)$ with columns
restricted to faces in $\cF'$. We first choose $\cF_{\mu}$, inductively
from the leaves of the GN  tree towards its root. 
Consider a leaf of the GN tree. It corresponds to a certain
$\cG^{i}_{k}$. We choose  $\rk{\epsilon(\cG^{i}_{k})}$ independent
columns of $\epsilon(\cG^{i}_{k})$. The corresponding faces of $\cG$
are put in $\cF_{\mu}$. Note that these columns are also independent
in $\epsilon_{|\cF(\cG^{i}_{k})}$. This later matrix contains indeed only
zeros on the lines $l\notin\cL(\cG^{i}_{k})$. We proceed similarly for
all the leaves of the GN tree. Then, when several $\cG^{i}_{k}$'s
merge into a $\cG^{j}_{k'}, j<i$, we add to $\cF_{\mu}$ as many faces
as necessary to have
$|\cF_{\mu}\cap\cF(\cG^{j}_{k'})|=\rk{\epsilon(\cG^{j}_{k'})}$. At the
last step of this process, when one reaches the root of the GN tree,
the cardinal of $\cF_{\mu}$ is clearly equal to the rank of
$\epsilon(\cG)$. Moreover for all $i,k$,
$|\cF_{\mu}\cap\cF(\cG^{i}_{k})|=\rk{\epsilon(\cG^{i}_{k})}$.

It remains to choose the set $\cL_{\mu}$. The matrix
$\epsilon(\cG)_{|\cF_{\mu}}$ has the same rank as
$\epsilon(\cG)$. There exist $|\cF_{\mu}|$ lines of
$\epsilon(\cG)_{|\cF_{\mu}}$ such that the restricted square matrix
has still the rank of $\epsilon(\cG)$. These lines form the set
$\cL_{\mu}$. The point is that it is possible to choose these
$|\cF_{\mu}|$ lines such that
$|\cL_{\mu}\cap\cL(\cG^{i}_{k})|=\rk{\epsilon(\cG^{i}_{k})}$ for all
$i$ and $k$. Indeed, if there exists a $\cG^{i}_{k}$ such that
$|\cL_{\mu}\cap\cL(\cG^{i}_{k})|<\rk{\epsilon(\cG^{i}_{k})}$ then
there is a line $l\in\cL(\cG^{i}_{k})\setminus\cL_{\mu}$ such that the
corresponding line-vector is independent of the $|\cL_{\mu}|$ other
ones (remember that $\epsilon(\cG)_{lf}=0$ if
$l\notin\cL(\cG^{i}_{k})$ and $f\in\cF(\cG^{i}_{k})$). And the set
$\cL_{\mu}$ of line-vectors is not maximally independent.

The proof of theorem \eqref{thm-DivDegree} is achieved by
the following. Start
from equation \eqref{eq:AmplitudeR} and write
\begin{align}
    A^{\mu}_{\cG}\les&K_{1}^{V}K_{2}^{N}K_{3}^{F}\prod_{l\in\cL}M^{-2i_{l}}\prod_{f\in\cF\setminus\cF_{\mu}}M^{i_{f}}\\
    \les&K_{1}^{V}K_{2}^{N}K_{3}^{F}\prod_{i,k}\prod_{l\in\cL(\cG^{i}_{k})}M^{-2}\prod_{i,k}\prod_{f\in\cF(\cG^{i}_{k})\setminus\cF_{\mu}}M\\
    \les&K_{1}^{V}K_{2}^{N}K_{3}^{F}\prod_{i,k}M^{-2L(\cG^{i}_{k})+F(\cG^{i}_{k})-R^{i}_{k}}.
\end{align}
\end{proof}

\subsection{Analysis of the divergence degree}
\label{sec:analys-diverg-degr}

The  divergence degree is $\omega_{d}=-2L+F-R$. In this section, 
we scrutinize this quantity and re-express it in term 
of more useful quantities. We develop as well new tools
for this task. This allows us to go beyond
the analysis in dimension $d=4$ as performed in \citep{Carrozza2012aa} 
and find renormalizable theories. 

\begin{lemma}[Contraction of a tree]
  \label{thm-TreeContraction}
  Let $\cG$ be a connected uncolored graph and $\cT$ be any of its
  spanning trees. Under contraction of $\cT$, neither $F$ nor $R$
  changes:
  \begin{align}
    \label{eq:ContractionTree}
    F(\cG)=&F(\cG/\cT),&&R(\cG)=\rk{\epsilon(\cG)}=\rk{\epsilon(\cG/\cT)}=R(\cG/\cT).
  \end{align}
\end{lemma}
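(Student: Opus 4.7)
The plan is to proceed by induction on $|\cT|$: it suffices to show that contracting a single tree edge $l_{0}\in\cT$, with distinct endpoints $v_{1}$ and $v_{2}$, preserves both $F$ and $R$.

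For the face count, each of the $d$ strands carried by the color-$0$ edge $l_{0}$ belongs to a unique closed face of $\cG$. Since $l_{0}$ is not a self-loop (being a tree edge), its contraction identifies $v_{1}$ with $v_{2}$ and simply shortens each of these $d$ strands without splitting, merging, creating, or destroying any face; hence $F(\cG/l_{0})=F(\cG)$. Under the induced bijection on faces, the matrix $\epsilon(\cG/l_{0})$ is precisely the submatrix of $\epsilon(\cG)$ obtained by deleting the single row indexed by $l_{0}$, from which one immediately reads $R(\cG/l_{0})\les R(\cG)$.

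The substance of the argument lies in the reverse inequality, i.e.\ in showing that the $l_{0}$-row of $\epsilon(\cG)$ lies in the linear span of the remaining rows. The key structural input is that $l_{0}$ is a bridge: $\cG\setminus l_{0}=\cG_{1}\sqcup\cG_{2}$ with $v_{i}\in\cG_{i}$. Any face $f$ meeting $l_{0}$ does so through exactly one of its $d$ strands, so the way $f$ traverses the cut $\{l_{0}\}$ is rigidly determined by its trajectory within $\cG_{1}$. My plan is to exhibit a relation of the form $\epsilon_{l_{0},f}=\sum_{l\neq l_{0}}c_{l}\,\epsilon_{l,f}$ in which the coefficients $c_{l}$ depend only on the combinatorics of $\cG_{1}$ and on the fixed orientations, and crucially not on $f$; this provides the desired linear dependence. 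The main obstacle is the clean bookkeeping of strand orientations and signs in defining the $c_{l}$, and this is where I expect to spend most of the effort. As a safety net, if the direct combinatorial accounting becomes unwieldy, one can invoke a homological viewpoint: $\epsilon$ is (up to signs) the boundary operator $\partial_{2}$ of a $2$-complex whose $1$-skeleton is the strand graph of $\cG$ and whose $2$-cells are attached along faces. Contracting the non-loop edge $l_{0}$ of this $1$-skeleton is a deformation retract, which preserves all homology groups and in particular the rank of $\partial_{2}$, namely $R$. Iterating over the edges of $\cT$ then yields the lemma.
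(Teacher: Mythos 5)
The face-count half of your argument is fine and matches the paper's. The rank half, however, rests on a false structural premise: you assert that a tree edge $l_{0}$ is a bridge of $\cG$, i.e.\ that $\cG\setminus l_{0}=\cG_{1}\sqcup\cG_{2}$. That is true of $\cT\setminus l_{0}$ but not of $\cG\setminus l_{0}$: in any graph with loop lines (the only interesting case, since otherwise $F=R=0$ trivially), a tree edge generally lies on a cycle of $\cG$ and its removal leaves $\cG$ connected. Everything you build on this --- the ``rigid determination'' of how a face traverses the cut $\{l_{0}\}$ --- therefore does not stand as written. Moreover you never actually exhibit the coefficients $c_{l}$ in the claimed relation $\epsilon_{l_{0}f}=\sum_{l\neq l_{0}}c_{l}\,\epsilon_{lf}$; you explicitly defer this, and it is precisely the substance of the lemma.

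Your underlying idea can be repaired: replace the cut $\{l_{0}\}$ by the \emph{fundamental cut} of $l_{0}$ with respect to $\cT$, namely the set of edges of $\cG$ joining the two components of $\cT\setminus l_{0}$. Any face, being a closed cycle, crosses this vertex bipartition equally often in each direction, so $\sum_{\ell\in\mathrm{cut}(l_{0})}\sigma_{\ell}\,\epsilon_{\ell f}=0$ with signs $\sigma_{\ell}=\pm1$ depending only on orientations, and since $l_{0}$ belongs to its own fundamental cut with coefficient $\pm 1$ this gives the required $f$-independent linear dependence. This is exactly dual to (and in fact yields the same relation as) the paper's argument, which instead projects each face cycle onto tree paths $\cP_{\cT}(\ell)$ and uses acyclicity of $\cT$ to conclude that each tree edge is covered with signed multiplicity zero. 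Your homological ``safety net'' is also essentially viable --- $\epsilon$ is the cellular $\partial_{2}$ of the $2$-complex with $1$-cells the lines of $\cG$ (not the strand graph) and one $2$-cell per closed face, and contracting a non-loop edge is a homotopy equivalence preserving $H_{2}$ and $F$, hence $\rk{\partial_{2}}$ --- but as presented it is only a sketch. As it stands, the proposal has a genuine gap at its central step.
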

\begin{proof}
 The fact that $F(\cG)$ does not change under contraction is quite obvious: under contraction of an internal line, faces can only get
  shorter. This is true both for open and closed faces.
 Moreover, if the contracted line is a tree line, the face
cannot disappear. 

  Let $\ell\in\cL(\cG)$ be any line of $\cG$ (not necessarily a tree line). The matrix $\epsilon(\cG/\ell)$
  is obtained from $\epsilon(\cG)$ by erasing the row $\ell$ and the
  columns full of zeros corresponding to the faces which disappeared
  under the contraction. In the case of a tree line, this second step
  does not happen, as explained just above. As a consequence to prove
  that $R$ is invariant under the contraction of a tree line $l$, we need
  to prove that erasing this row does not change the rank of
  $\epsilon$ that is to say that the row $l$ is a linear combination
  of the other rows of the matrix:
  \begin{align}
    \forall f\in\cF(\cG),\,\quad \epsilon_{lf}=\sum_{\substack{\ell\in\cL,\\\ell\neq
      l}}a_{l\ell}\,\epsilon_{\ell f}\,,
\label{eq:treeCL1}
  \end{align}
where the $a_{l\ell}$'s are independent of $f$.

Any oriented line $\ell$ links a vertex $v_{\ell}$ to another
(different) one $v'_{\ell}$. There is a unique oriented path $\cP_{\cT}(\ell)$ in $\cT$ from $v_{\ell}$ to
$v'_{\ell}$ (see appendix \ref{sec:paths-graph}). Thus $\cP_{\cT}$ is, in particular, a map from $\cL(\cG)$ to
$2^{\cL(\cT)}$. For any internal face $f$, the set of lines of $\cG$
contributing to this face forms a cycle. This cycle can be
\emph{projected} onto a path in $\cT$ thanks to the map
$\cP_{\cT}$. The face $f$ being a cycle, the corresponding path in
$\cT$ begins and ends at the same vertex. But as $\cT$ is acyclic,
each edge has to be covered an even number of times and in
\emph{opposite} directions. Thus if we go all over an internal face,
and count with signs the number of times a given tree line appears in
the projected path, we find zero. 

Let us pick up a face $f$ and go all over it according to its
orientation\footnote{Remember that an orientation has been chosen for
  each face and each line of $\cG$ in order to define the matrix
  $\epsilon$. We will refer to this choice as an orientation in $\cG$.}. For all $\ell\in f$ and all $l\in\cT$, let
$\veps_{\ell l}(f)$ be $+1$ if $l\in\cP_{\cT}(\ell)$ and its orientation
in $\cP_{\cT}(\ell)$ matches its chosen orientation in $\cG$, $-1$ if
$l\in\cP_{\cT}(\ell)$ and the two orientations do not match, and $0$
otherwise. We have
\begin{align}
  \sum_{\ell\in f}\veps_{\ell l}(f)=0.\label{eq:CountedWithSigns}
\end{align}
For all $\ell\in\cL$ and $l\in\cT$, let us define $\eta_{\ell l}$ as
$+1$ if $l\in\cP_{\cT}(\ell)$ and the orientation of $l$ in
$\cP_{\cT}(\ell)$ (fixed by the chosen orientation of $\ell$ in $\cG$)
matches its orientation in $\cG$, $-1$ if $l\in\cP_{\cT}(\ell)$ and
the orientations do not match, $0$ otherwise. It is not difficult to
check that $\veps_{\ell l}(f)=\eta_{\ell l}\epsilon_{\ell f}$. As
$\eta_{ll}=1$, we get
\begin{align}
  \sum_{\ell\in f}\veps_{\ell l}(f)=\sum_{\ell\in\cL}\eta_{\ell
    l}\epsilon_{\ell f}=0 \quad \Longleftrightarrow \quad \epsilon_{lf}=-\sum_{\ell\in\cL,\,\ell\neq l}\eta_{\ell
    l}\epsilon_{\ell f}\label{eq:CL}
\end{align}
which is of the form of \cref{eq:treeCL1} and achieves the proof. 
(For an example treated in detail, see appendix \ref{sec:paths-graph}.)
\end{proof}
\begin{defn}[$k$-dipole]
  \label{def-kdipole}
  Let $\cG$ be an uncolored graph. A $k$-dipole is a line $\ell$ of $\cG$
  such that it belongs to exactly $k$ faces of length $1$. In other
  words, if $\ell$ joins to vertices $v$ and $v'$ of the colored
  extension $\cG_{c}$, there are exactly $k$ edges in $\cG_{c}$ of colors $i>0$ linking $v$ and $v'$, see \cref{fig:kdipole}.
\end{defn}
\begin{figure}[!htp]
  \centering
  \includegraphics[scale=1.2]{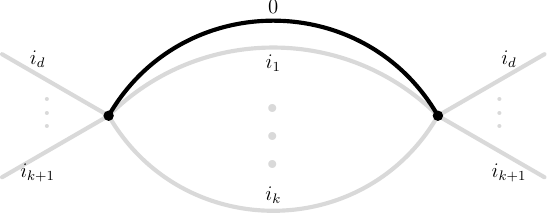}
  \caption{A $k$-dipole}
  \label{fig:kdipole}
\end{figure}

\begin{defn}[Rosettes \citep{Ben-Geloun2012ab}]
\label{def-Rosettes}
Let $\cG$ be a connected graph and $\cT$ any of its spanning trees. The
contracted graph $\cG/\cT$ is called a rosette. A rosette with external lines\footnote{For the vacuum rosette the definition is the same except that the last line $i=L-V+1$ corresponds to a $d$-dipole not a $d-1$.} is
\emph{\textbf{fully melonic}} if there exists an order on its $L-V+1$ lines
such that $l_{1}$ is a $(d-1)$-dipole in $\cG/\cT$ and for all $2\les i\les L-V+1$, $l_{i}$ is a $(d-1)$-dipole
in $\cG/(\cT\cup\{l_{1},\dots,l_{i-1}\})$.
\end{defn}

Let us consider a polynomial $P(\vpb,\vp)_{U(1)^{d}}$ model and $\cG$ one of its
graphs. For all $i\ges 2$, we denote by $V_{i}$ its number of vertices
of degree $i$ and $n\cdotaction V\defi\sum_{i\ges 2}iV_{i}$.
The following statement holds.
\begin{lemma}\label{thm-FullyMelonicDivDegree}  
  Let $\cG$ be a connected Feynman graph and $\cT$ one of its spanning
  trees. If the rosette $\cG/\cT$ is fully melonic,
  \begin{subequations}
    \begin{align}
      F(\cG)=&(d-1)(L-V+1),\\
      R(\cG)=&\Rm(\cG)\defi L-V+1,\\
      2\omega_{d}(\cG)=&-(d-4)N+(d-4)n\cdotaction
      V-2(d-2)V+2(d-2).\label{eq:FullymelonicDegree}
    \end{align}
  \end{subequations}
\end{lemma}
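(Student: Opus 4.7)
The proof plan is to reduce to a rosette via the tree contraction lemma and then induct on its number of loops. By \cref{thm-TreeContraction}, both $F$ and $R$ are invariant under contraction of any spanning tree, so I may replace $\cG$ by the rosette $\cG/\cT$, which carries a single vertex and $k:=L-V+1$ self-loops and is fully melonic by hypothesis. I claim by induction on $k$ that $F(\cG/\cT)=(d-1)k$ and $R(\cG/\cT)=k$, the base $k=0$ being trivial. For the inductive step, the fully melonic order supplies a line $l_1$ that is a $(d-1)$-dipole of $\cG/\cT$ and such that $\cG/(\cT\cup\{l_1\})$ is again fully melonic with $k-1$ loops, so the inductive hypothesis applies to it.

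The face count is immediate from \cref{def-kdipole}: $l_1$ sits in $d-1$ closed faces of length $1$ plus at most one further face $f_*$ (closed of length $\ges 2$, or possibly external). Contracting $l_1$ deletes the $d-1$ length-$1$ faces outright, merely shortens $f_*$ without destroying it, and leaves untouched every face disjoint from $l_1$; hence $F$ drops by exactly $d-1$, which closes the face half of the induction. The rank count is the only subtle step. Order the rows and columns of $\epsilon(\cG/\cT)$ so that $l_1$ is the last row and the $d-1$ length-$1$ faces of $l_1$ are the last columns. This gives the block form
\[
\epsilon(\cG/\cT) = \begin{pmatrix} \epsilon(\cG/(\cT\cup\{l_1\})) & \mathbf{0} \\ v^T & w^T \end{pmatrix},
\]
in which the top-right zero block reflects that no old line lies in a length-$1$ face of $l_1$, the row $v^T$ is the $l_1$-incidence on old closed faces (supported at most at the $f_*$-column), and $w\in\{\pm 1\}^{d-1}$ by the definition of $\epsilon_{lf}$ on length-$1$ faces. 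Because $w$ has all nonzero entries, a direct description of the column space yields
\[
\mathrm{Im}\,\epsilon(\cG/\cT) = \mathrm{Im}\,\epsilon(\cG/(\cT\cup\{l_1\}))\times\R,
\]
since for any $\alpha$ and any $\beta\in\R$ one can pick $\gamma$ with $v^T\alpha+w^T\gamma=\beta$. Consequently $R(\cG/\cT)=R(\cG/(\cT\cup\{l_1\}))+1=k$, closing the rank half of the induction.

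The third identity then follows by bookkeeping. Substituting $F=(d-1)k$ and $R=k$ into $\omega_d=-2L+F-R$ gives $\omega_d(\cG)=(d-4)L-(d-2)(V-1)$. The half-edge handshake relation $n\cdotaction V=2L+N$ converts $2L$ into $n\cdotaction V-N$, and doubling rearranges the result into \eqref{eq:FullymelonicDegree}.

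The main obstacle is the rank step: it is not \emph{a priori} obvious that adjoining one new row together with $d-1$ strongly collinear columns (all proportional to $\pm e_{l_1}$) lifts the rank by exactly one. The block computation above is designed to isolate a single new direction in the column space, transverse to $\mathrm{Im}\,\epsilon(\cG/(\cT\cup\{l_1\}))$, and to verify that this extra direction cannot be cancelled by any hidden dependence across the old and new columns. Everything else is routine combinatorics using only the handshake lemma and the definition of $\omega_d$.
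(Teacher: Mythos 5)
Your proof is correct and follows essentially the same route as the paper: reduce to the rosette via the tree-contraction lemma, then successively contract the $(d-1)$-dipoles, with $F$ dropping by $d-1$ and $R$ by $1$ at each step, and finish with $\omega_d=-2L+F-R$ and $n\cdotaction V=2L+N$. The only difference is that you supply an explicit block-matrix justification for the unit rank drop, which the paper simply asserts.
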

\begin{proof}
  We contract successively all the lines of $\cG/\cT$. The rosette
  being fully melonic, we contract only $(d-1)$-dipoles. Then for all
  $i\in[L-V+1]^{*}$,
  \begin{align}
    F(\cG/(\cT\cup\{l_{1},\dots,l_{i}\}))=&F(\cG/(\cT\cup\{l_{1},\dots,l_{i-1}\}))-(d-1),\\
    R(\cG/(\cT\cup\{l_{1},\dots,l_{i}\}))=&R(\cG/(\cT\cup\{l_{1},\dots,l_{i-1}\}))-1,\\
    F(\cG)=&F(\cG/\cT)=(d-1)(L-V+1),\label{eq:Ffullymelonics}\\
    R(\cG)=&R(\cG/\cT)=L-V+1.
  \end{align}
Using $\omega_{d}=-2L+F-R$ and $2L+N=n\cdotaction V$, one gets the desired result.
\end{proof}

We are in position to understand why the $\vp^4_6$ 
and $\vp^6_5$ are just renormalizable. 
Indeed, applying formula \eqref{eq:FullymelonicDegree} to the models
\eqref{eq:Action4} and \eqref{eq:Action6}, we get
\begin{align}
  \omega_{d,4}=&-(N-4),\quad\omega_{d,6}=-\frac{N-6}{2}-V_{4},\label{eq:FullyMelonicPowModels}
\end{align}
which are typical divergence degrees of just renormalizable models. In
the following, we will prove that the divergence degree of a graph is bounded
from above by the divergence degree of the graphs with fully melonic
rosettes. Moreover we will see that the model \eqref{eq:Action6} contains
subdivergent contributions i.e.\@ divergent graphs with non fully
melonic rosettes.\\

Let $\rho(\cG)$ be defined as $F(\cG)-R(\cG)-(d-2)\Lt(\cG)$ with
$\Lt(\cG)\defi L(\cG)-V(\cG)+1$. Note that for any spanning tree $\cT$
in $\cG$, $\Lt(\cG)=L(\cG/\cT)=\Lt(\cG/\cT)$ so that thanks to
\cref{thm-TreeContraction},
$\rho(\cG)=\rho(\cG/\cT),\,\forall\cT$. If $\cG$ is
face-disconnected, $\cG=\bigcup_{i\in I}\cG^{i}$, then
$\rho(\cG)=\sum_{i\in I}\rho(\cG^{i})$. Moreover \citeauthor*{Carrozza2012ab} have proven the following \citep{Carrozza2012ab}
\begin{lemma}
  \label{thm-rho}
  Let $\cG$ be a face-connected rosette.
  \begin{enumerate}
  \item If $N(\cG)=0$, then $\rho(\cG)\les 1$ and $\rho(\cG)=1$
    iff $\cG$ is fully melonic.
  \item If $N(\cG)>0$, then $\rho(\cG)\les 0$ and $\rho(\cG)=0$
    iff $\cG$ is fully melonic.
  \end{enumerate}
\end{lemma}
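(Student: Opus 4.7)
The strategy is strong induction on $\widetilde L(\cG)=L(\cG)-V(\cG)+1$, which for a rosette coincides with the number of petals $L(\cG)$. The base case $\widetilde L=0$ is immediate: $F=R=0$, so $\rho=0$. The terminal closed base case $\widetilde L=1$ with a $d$-dipole gives $F=d$, $R=1$ and $\rho=1$, while single-petal rosettes with external legs are handled by direct computation and satisfy $\rho\leq 0$ with equality iff the petal is a $(d-1)$-dipole.

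For the inductive step, I fix any petal $\ell$ of $\cG$ and contract it. In a rosette each strand of $\ell$ either belongs to a face of length one (which disappears) or to a longer face (which merely shortens); writing $\ell$ as a $k$-dipole, this gives $\Delta F=k$, $\Delta\widetilde L=1$ and $\Delta R\in\{0,1\}$, so that
\[
  \rho(\cG)-\rho(\cG/\ell)=k-\Delta R-(d-2).
\]
Two inputs now control this difference. First, face-connectedness together with $\widetilde L\geq 2$ forces $k\leq d-1$, since otherwise the $d$ length-one columns at $\ell$ and row $\ell$ would form a block of $\epsilon(\cG)$ disjoint from the rest, contradicting face-connectedness. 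Second, a short linear-algebra check shows that when $k\geq 1$, row $\ell$ cannot lie in the span of the remaining rows (because the short columns are proportional to the standard basis vector at $\ell$ and vanish on all other rows), forcing $\Delta R=1$; the sub-case $k=0$ only makes things better. Combining these, $\rho(\cG)-\rho(\cG/\ell)\leq 0$, with equality iff $k=d-1$ and $\Delta R=1$.

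The result then follows by iterating this inequality from $\cG$ down to the base case, with the additive formula $\rho(\cG)=\sum_i\rho(\cG^{(i)})$ applied to the face-connected components that may appear after a contraction. For $N>0$ every base case gives $\rho\leq 0$, yielding $\rho(\cG)\leq 0$; for $N=0$ the unique non-trivial base case has $\rho=1$, yielding $\rho(\cG)\leq 1$. Equality propagates iff every contracted petal is a $(d-1)$-dipole (with a terminal $d$-dipole in the closed case), which is exactly the definition of a fully melonic rosette given in \cref{def-Rosettes}.

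The main obstacle I anticipate is the linear-algebra bookkeeping that pins $\Delta R=1$ whenever $k\geq 1$, since removing row $\ell$ and the $k$ short columns are not entirely independent operations and must be tracked together with face-connectedness, which itself may fail on $\cG/\ell$ even when it holds on $\cG$, requiring the argument to go through the additivity of $\rho$ over face-connected components.
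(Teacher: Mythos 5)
You should first be aware that the paper does not actually prove this lemma: it is quoted from the reference \citep{Carrozza2012ab}, and the surrounding statements (\cref{thm-Foaming0Dip,thm-NonFoaming0Dip}) are precisely the extra machinery that reference needs. So the comparison below is with that strategy.

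Most of your bookkeeping is correct. Since each $0i$-face is a connected component of a graph of maximal degree two, it traverses a given colour-$0$ line at most once, so no face can split under the contraction of a single petal and $\Delta F=k$ holds; your linear-algebra argument that a length-one face forces the row $\ell$ out of the span of the other rows, hence $\Delta R=1$ for $k\geq 1$, is also fine, as is the exclusion of $d$-dipoles by face-connectedness. The genuine gap is in the iteration. Contracting a petal of a face-connected rosette can produce a graph whose face-connected components number $q\geq 2$, and some of these components can be \emph{vacuum} even when $N(\cG)>0$. By your own induction hypothesis each vacuum component may carry $\rho_i=1$, so additivity only gives $\rho(\cG)\leq q+\bigl(k-\Delta R-(d-2)\bigr)$. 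With the crude bound $q\leq d-k$ this yields $\rho(\cG)\leq 2-\Delta R$, which is \emph{not} the claimed bound when $k=0$ and the row $\ell$ happens to be dependent ($\Delta R=0$). Closing the argument requires showing that the creation of new vacuum face-connected components forces an extra drop of $\rho$ — this is exactly the content of the ``foaming $0$-dipole'' \cref{thm-Foaming0Dip}, and it is the hurdle the authors themselves single out in their conclusion (``the emergence of connected components in the $k$-dipole contraction''). You flag the possible face-disconnection of $\cG/\ell$ but treat it as harmless bookkeeping via the additivity of $\rho$; it is in fact the one point where the naive estimate $\Delta\rho\leq 0$ is insufficient.

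The same issue contaminates your characterisation of equality. To conclude that $\rho=1$ (resp.\ $\rho=0$) forces full melonicity, you must rule out the scenario in which a contraction with $k<d-1$ has its deficit $k-(d-1)-(\Delta R-1)$ compensated by the $+1$ contributions of several vacuum components, each of which is fully melonic on its own; your argument as written does not exclude this. A correct proof either establishes the refined inequality $\rho(\cG)\leq\sum_i\rho\bigl((\cG/\ell)_i\bigr)-(q_0)\cdot(\text{something}\geq 1)+\dotsb$ quantifying the cost of foaming, or chooses the contracted petal so that no new vacuum component is created and shows separately that such a choice is always available; you do neither.
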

 The
divergence degree of a graph rewrites as
\begin{align}
  \omega_{d}(\cG)=&-2L(\cG)+(d-2)\Lt(\cG)+\rho(\cG)
\end{align}
which leads to
\begin{align}
  \omega_{d,4}(\cG)=4-N+\rho(\cG),\quad \omega_{d,6}(\cG)=3-\tfrac{N(\cG)}2-V_{4}+\rho(\cG).
\end{align}
The list of potentially divergent graphs is thus given by the
following table:
\renewcommand{\arraystretch}{1.5}
\begin{table}[!htb]
  \begin{displaymath}
  \begin{array}{c|cccc||cccccc}
    \multicolumn{1}{c}{}&\multicolumn{4}{c}{\vp^{4}_{6}}&\multicolumn{6}{c}{\vp^{6}_{5}}\\
    \hline
    N&2&\phantom{-}2&\phantom{-}2&4&2&\phantom{-}2&\phantom{-}2&4&\phantom{-}4&6\\
    \rho&0&-1&-2&0&0&-1&-2&0&-1&0\\
    \omega_{d}&2&\phantom{-}1&\phantom{-}0&0&2&\phantom{-}1&\phantom{-}0&1&\phantom{-}0&0
  \end{array}
\end{displaymath}
  \caption{Potentially divergent graphs}
  \label{tab:PotDivGraphs}
\end{table}
\renewcommand{\arraystretch}{1.0}

In the next section, we will characterize fully melonic graphs ($\rho(\cG)=0$) and
explain how to deal with the non fully melonic ones ($\rho(\cG)<0$).

\subsection{Classification of divergent graphs}
\label{sec:class-diverg-graphs}

We now describe the graphs of \cref{tab:PotDivGraphs} such that
$\rho=0,-1,-2$. To this aim, we first re-express the divergence degree
as follows.\\

\noindent
Let $\cG$ be an uncolored graph and $\cG_c$ be its colored
extension. We define
$\ot(\cG)\defi \sum_{J\subset\cG_{c}}g_{\widetilde{J}}$, where $\widetilde{J}$ is the pinched jacket associated with a jacket $J$ of $\cG_{c}$.
\begin{prop}[Divergence degree]
  \label{thm-Omega}
  The degree of divergence $\omega_{d}$ of a $P(\vpb,\vp)_{U(1)^{d}}$
  model with propagator \eqref{eq:prop} is given by
  \begin{align}
    \omega_{d}(\cG)=&(-2L+F-R)(\cG)\label{eq:Omega}\\
    =&-\frac{2}{(d-1)!}\big(\ot(\cG)-\omega(\partial\cG)\big)-(C_{\partial\cG}-1)-\frac{d-3}{2}N+(d-1)\nonumber\\
    &\qquad+\frac{d-3}{2}n\cdotaction
    V-(d-1)\cdotaction V-R\label{eq:Omegad}
  \end{align}
where $C_{\partial\cG}$ is the number of vertex-connected components of $\partial\cG$.
\end{prop}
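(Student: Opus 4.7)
The plan is to derive an alternative expression for $F(\cG)$ by combining two Euler-type identities---one for the pinched jackets of the colored extension $\cG_c$, another for the boundary graph $\partial\cG$---and then substitute the result into $\omega_d(\cG) = -2L(\cG) + F(\cG) - R(\cG)$. The combinatorial backbone of the argument is the observation that a $(d+1)$-colored graph has exactly $d!/2$ jackets and that each of its color-pair $(i,j)$ faces appears in exactly $(d-1)!$ of them; the analogous statement for the $d$-colored boundary yields $(d-1)!/2$ jackets with each face belonging to $(d-2)!$ of them.

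First, I would apply the Euler relation $2 - 2g_{\widetilde J} = V_{\widetilde J} - L_{\widetilde J} + F_{\widetilde J}$ to each pinched jacket $\widetilde J \subset \cG_c$ and sum over jackets. The left-hand side yields $d! - 2\,\ot(\cG)$. The face term on the right-hand side sums to $(d-1)! F_c$, where $F_c$ counts all closed faces of the pinched colored extension. These decompose into three classes: (i) $(i,j)$-faces of $\cG_c$ with $i,j \ge 1$, which lie inside vertex bubbles and whose total is fixed by the vertex types (hence by $n\cdotaction V$ and $V$), (ii) closed $(0,i)$-faces, whose count is exactly $F(\cG)$, and (iii) the faces produced by pinching external legs, which match the faces of $\partial\cG$. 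The analogous summation applied to $\partial\cG$ (a closed $d$-colored graph with $N$ vertices) gives a second identity relating $\omega(\partial\cG)$, $N$, and $F(\partial\cG)$; the standard fact that a disjoint union of $C$ orientable ribbon surfaces has Euler characteristic $2C - 2g$ produces the correction term $-(C_{\partial\cG}-1)$.

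Second, I would solve these two identities simultaneously for $F(\cG)$. After eliminating the bubble-internal and boundary face counts, one obtains
\begin{align*}
F(\cG) = -\tfrac{2}{(d-1)!}\bigl(\ot(\cG)-\omega(\partial\cG)\bigr) -(C_{\partial\cG}-1) + \alpha\,n\cdotaction V + \beta\,V + \gamma N + \delta
\end{align*}
for explicit rationals $\alpha,\beta,\gamma,\delta$ depending only on $d$. Substitution into $\omega_d = -2L+F-R$, using $2L = n\cdotaction V - N$, and collecting terms then yields the claimed formula~\eqref{eq:Omegad}.

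The main obstacle will be the careful bookkeeping needed to relate the counts $V_{\widetilde J}, L_{\widetilde J}, F_{\widetilde J}$ of each pinched jacket to the counts $V, n\cdotaction V, L, N$ of the underlying uncolored graph, and in particular to correctly identify which faces of $\cG_c$ contribute to $\ot(\cG)$ versus to the boundary sum $\omega(\partial\cG)$. The constants $(d-1)$ and $-(d-1)V$ in the final formula should emerge from the topological sphere contribution $(d+1)\chi/2$ per jacket---once for the whole graph and once subtracted per vertex bubble---while the prefactor $\tfrac{d-3}{2}$ on $N$ and $n\cdotaction V$ tracks the shift between the identity for $\cG_c$ and the one for $\partial\cG$, whose jacket combinatorics differ by one dimension.
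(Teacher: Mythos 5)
Your plan is exactly the paper's proof: sum the Euler relation $2-2g_{\widetilde J}=V_{\widetilde J}-L_{\widetilde J}+F_{\widetilde J}$ over the $d!/2$ pinched jackets of $\cG_{c}$ (each face counted $(d-1)!$ times), do the analogous sum over the $(d-1)!/2$ jackets of $\partial\cG$ with the $2C_{J_\partial}-2g_{J_\partial}$ correction to trade $\sum_J F_{e,\widetilde J}=(d-2)!\,F_{\partial\cG}$ for $\omega(\partial\cG)$ and $C_{\partial\cG}$, and substitute the resulting expression for $F$ into $\omega_d=-2L+F-R$ with $2L=n\cdotaction V-N$. The bookkeeping you defer (the bubble-internal face count $F_{i,\cG_c}$ and the constants $\alpha,\beta,\gamma,\delta$) is precisely what the paper carries out explicitly, so the approaches coincide.
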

\begin{proof}
The number of vertices $V(\cG_c)$ of the colored extension $\cG_c$ of
$\cG$ can be given in terms of $L(\cG)$ and $N(\cG)$ by the relation
$V(\cG_c)=n\cdotaction V=2L+N$. The number of its lines is
$L(\cG_c)=L+L_{i,\cG_c}\defi\frac{1}{2}[(d+1)n\cdotaction V-N],$ where
$L_{i,\cG_c}$ is the number of  internal lines of $\cG_c$ which do not
appear in $\cG$.  In the same  way $F(\cG_c)=F+F_{i,\cG_c}$. There
exist $d!/2$ jackets of $\cG_c$. Each face is shared by $(d-1)!$
jackets. Then $\sum_{J}F_J=(d-1)!F(\cG_c).$ The numbers of vertices
(resp. lines, resp. external edges) of $\cG_{c}$, $J$ and $\widetilde
J$ are equal. The graph $\widetilde{J}$ is a vacuum ribbon graph and its parameters $F_{\widetilde{J}}$, $V_{\widetilde{J}}$ and $L_{\widetilde{J}}$ satisfy the following relation
\begin{align}\label{quanty}
F_{\widetilde{J}}=F_{i,\widetilde{J}}+F_{e,\widetilde{J}}=2-2g_{\widetilde{J}}-V(\cG_c)+ L(\cG_c),\quad L(\cG_c)=L_{\widetilde{J}},\,\,\, V(\cG_c)=V_{\widetilde{J}},
\end{align}
where $F_{i,\widetilde{J}}$ is the number of internal faces of
$\widetilde{J}$, and $F_{e,\widetilde{J}}$ is the number of faces of
$\widetilde{J}$ which are made of external faces of $J$. Denote by
$F_{i,\widetilde{J},\cG}$ the  number of internal faces of
$\widetilde{J}$ colored $0i$, $1\les i\les d$ and
$F_{i,\widetilde{J},\cG_c}$ the number of internal faces colored $ij$,
$1\les i,j\les d$. We get  $F_{i,\widetilde{J}}=F_{i,\widetilde{J},\cG}+F_{i,\widetilde{J},\cG_c}.$ Then 
\begin{align}
\sum_{J} F_{i,\widetilde{J}}
=(d-1)!(F+F_{i,\cG_c}).
\end{align}
The number $F_{i,\cG_c}$ can be easily computed \citep{Gurau2012aa}
\begin{align}
F_{i,\cG_c}=\Big[\frac{(d-1)(d-2)}{2}\frac{n}{2}+d-1\Big]\cdotaction V.
\end{align}
The quantity $\sum_J (-V_{\widetilde{J}}+ L_{\widetilde{J}})$ can be written as using  \eqref{quanty}
\begin{align}\label{sum-jacket}
\sum_J -V_{\widetilde{J}}+ L_{\widetilde{J}}
=\frac{n\cdotaction V}{4}d!(d-1)-\frac{d!}{4}N(\cG).
\end{align}
Then
\begin{align}
F
=-\frac{1}{(d-1)!}\sum_J F_{e,\widetilde{J}}   -
\frac{2}{(d-1)!}\sum_J g_{\widetilde{J}}-\frac{(d-1)}{4}(4-2n)\cdotaction V-\frac{d}{4}N    +d .
\end{align}
The next stage consists in re-expressing $\sum_J F_{e,\widetilde{J}}$ in terms
of the parameters of the boundary graph $\partial \cG$ of $\cG$. For
any jacket $J_{\partial}$ of $\partial\cG$, note that
$V_{\partial \cG}=V_{J_\partial}=N,\,\,\, L_{\partial \cG}=L_{J_\partial }=F_{e},\,\,\, dV_{\partial \cG}=2L_{\partial \cG}\Rightarrow F_{e}=\frac{d}{2}N.$ There exist $(d-1)!/2$ boundary jackets of $\cG_c.$  Each face of the graph $\partial \cG$ is shared by exactly $(d-2)!$ boundary jackets. 
Using the fact that the Euler characteristic $\chi(J_{\partial})=2C_{J_\partial}-2g_{J_\partial}=
V_{J_\partial}-L_{J_\partial}+F_{J_\partial}$, we arrive at
\begin{align}\label{DineFabien}
F_{\partial \cG}=\frac{2}{(d-2)!}\sum_{J_\partial}C_{J_\partial}-\frac{2}{(d-2)!}\sum_{J_\partial}g_{J_\partial}+\frac{(d-1)}{2}\frac{(d-2)}{2}N.
\end{align}
Noting that $C_{J_\partial}=C_{\partial \cG}.$ Finally
\begin{align}
\sum_{J} F_{e,\widetilde{J}}&=(d-2)!F_{\partial \cG}\nonumber\\
&=(d-1)!(C_{\partial \cG}-1)-2\sum_{J_\partial}g_{J_\partial}+\frac{(d-1)!(d-2)}{4}N+(d-1)!
\end{align}
and
\begin{align}
  F=& -\frac{2}{(d-1)!}\Big(\sum_J
  g_{\widetilde{J}}-\sum_{J_\partial}g_{J_\partial}\Big)-(C_{\partial
    \cG}-1)\nonumber\\
  &\hspace{4cm}-\frac{d-1}{2}N+d-1-\frac{d-1}{4}(4-2n)\cdotaction V.\label{eq:Faces}
\end{align}
Using $L=\frac 12(n\cdotaction V-N)$ and equation \eqref{eq:Faces}, we get \eqref{eq:Omegad}.
\end{proof}

According to \cref{eq:Faces}, the number of internal faces of a graph
is given by
\begin{align}
  \label{eq:F}
  F(\cG)=&-\frac{2}{(d-1)!}(\ot(\cG)-\omega(\partial\cG))-(C_{\partial\cG}-1)+\frac{d-1}{2}\big(2-N+(n-2)\cdotaction
  V\big).
  \end{align}
We define
\begin{align}
\Fm(\cG)\defi&\frac{d-1}{2}\big(2-N+(n-2)\cdotaction V\big)=(d-1)(L-V+1)\\
\intertext{such that}
\Fm(\cG)-F(\cG)=&\frac{2}{(d-1)!}(\ot(\cG)-\omega(\partial\cG))+(C_{\partial\cG}-1).\label{eq:Fm-F}
\end{align}
According to lemma $5$ of \citep{Ben-Geloun2011aa} (or to
\cref{thm-ot-od} in \cref{sec:comb-analys-otcg}),
\begin{align}
  F(\cG)=\Fm(\cG)\Longleftrightarrow\ot(\cG)=\omega(\partial\cG)=C_{\partial\cG}-1=0.\label{eq:FFm}
\end{align}
Before giving the topological properties of the graphs with
$\rho=0,-1,-2$, we need the following definitions and technical lemma. Let us denote
the number of vacuum face-connected components of a graph $\cG$ by
$C^{f}_{0}(\cG)$. Let $\cG$ be a graph and $E$ a subset of its edges
equipped with a total order. We can thus write
$E=\{l_{1},\dots,l_{|E|}\}$. For all $i\in[|E|]^{*}\setminus\{1\}$, we
define $\cG_{i}\defi\cG/\{l_{1},\dotsc,l_{i-1}\}$ and $\cG_{1}\defi\cG$.
\begin{lemma}[Non-foaming $0$-dipoles]
  \label{thm-NonFoaming0Dip}
  Let $\cG$ be a vertex-connected non-vacuum ($N(\cG)>0$) uncolored
  $d$-tensor graph and $\cT$ any of its spanning trees. If there exists an order on the $\Lt$ lines of
  $\cR\defi\cG/\cT$ such that:
  \begin{enumerate}
  \item there exists $i_{0}\in[\Lt]^{*}$ such that $l_{i_{0}}$ is a
    $0$-dipole in $\cR_{i_{0}}$, and
  \item $C_{0}^{f}(\cR_{i_{0}+1})=C_{0}^{f}(\cR_{i_{0}})$,
  \end{enumerate}
  then $l_{i_{0}}$ is called a non-foaming $0$-dipole, and $\rho(\cG)\les -(d-2)$.
\end{lemma}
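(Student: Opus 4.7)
The plan is to telescope $\rho$ along the prescribed contraction sequence of the rosette, isolating a $(d-2)$-gap at the non-foaming $0$-dipole step. By the Tree Contraction Lemma (\cref{thm-TreeContraction}), $\rho(\cG)=\rho(\cR)$, so I set $\cR_1\defi\cR$ and consider the sequence $\cR_1,\cR_2,\dots,\cR_{\Lt(\cR)+1}$ obtained by successively contracting $l_1,\dots,l_{\Lt(\cR)}$. Since every contraction here is of a self-loop, $\Lt$ drops by $1$ at each step, and the general variation of $\rho$ reads
\[
\rho(\cR_{i+1})-\rho(\cR_i)=\Delta F_i-\Delta R_i+(d-2),
\]
where $\Delta F_i$ and $\Delta R_i$ denote the changes in the face-count and in $\rk{\epsilon}$. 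For a $(d-1)$-dipole one has $\Delta F=-(d-1)$ and $\Delta R=-1$, giving the melonic baseline $\Delta\rho=0$ consistent with \cref{thm-FullyMelonicDivDegree}.

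The crux is the analysis at step $i_0$. Since $l_{i_0}$ lies in no face of length $1$, no column of $\epsilon(\cR_{i_0})$ becomes null upon removing the row indexed by $l_{i_0}$, which forces $\Delta R_{i_0}\in\{-1,0\}$. The non-foaming hypothesis $C^f_0(\cR_{i_0+1})=C^f_0(\cR_{i_0})$ precisely forbids the pathological reconfiguration in which the strand rerouting at the two endpoints of $l_{i_0}$ pinches off an isolated closed vacuum bubble; under this hypothesis the faces containing $l_{i_0}$ merely shorten or are combined with one another without any face being destroyed, so $\Delta F_{i_0}\ge 0$. Putting these together gives the crucial gap
\[
\rho(\cR_{i_0+1})-\rho(\cR_{i_0})\ge d-2.
\]

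To conclude, I would apply \cref{thm-rho} to $\cR_{i_0+1}$: decomposing it into face-connected components, $\rho(\cR_{i_0+1})\le C^f_0(\cR_{i_0+1})=C^f_0(\cR_{i_0})$ by non-foaming. A parallel analysis of the earlier contractions $\cR_1\to\cdots\to\cR_{i_0}$, using that each non-$(d-1)$-dipole step either increases $\rho$ (for $k$-dipoles with $k<d-1$ the generic gain is $d-1-k$) or merely removes a pre-existing vacuum face-connected component (precisely compensated against $C^f_0$), shows $\rho(\cR_1)\le \rho(\cR_{i_0})$ modulo vacuum-bubble book-keeping. Combining all the ingredients gives $\rho(\cG)=\rho(\cR_1)\le -(d-2)$. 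The main obstacle will be the combinatorial analysis at step $i_0$: one must track carefully how the strand rewiring at the endpoints of $l_{i_0}$ affects the face structure, invoking the non-foaming hypothesis at exactly the right moment to rule out the bubble-creating configurations that would otherwise eat into the $(d-2)$-margin produced by the drop in $\Lt$.
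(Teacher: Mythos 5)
Your plan is essentially the paper's induction unrolled into a telescoping sum, and the central observation --- a jump of at least $d-2$ in $\rho$ at the contraction of the non-foaming $0$-dipole, combined with \cref{thm-rho} to bound what remains --- is the right one. But the two steps you defer are exactly where the $(d-2)$ margin can be lost, so as written there is a genuine gap. First, the identity $\rho(\cR_{i+1})-\rho(\cR_i)=\Delta F_i-\Delta R_i+(d-2)$ presupposes that $\Lt$ drops by exactly one at each step. Contracting a $k$-dipole self-loop can split a vertex and disconnect the graph into up to $d-k$ vertex-connected components, so with $\Lt=L-V+1$ the claim ``$\Lt$ drops by $1$'' is false; one must either work with the cycle rank $L-V+C^{v}$ or, as the paper does, carry the number $q$ of components created. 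This is not cosmetic: in the paper's inductive step the estimate reads $\rho(\cG)\leq q-1-(d-2)-(d-k-1)$, and the loss $q-1$ is only just absorbed by the gain $d-k-1$ because $q\leq d-k$.

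Second, and more seriously, your concluding chain is $\rho(\cR_1)\leq\rho(\cR_{i_0+1})-(d-2)\leq C^{f}_{0}(\cR_{i_0+1})-(d-2)$, which gives the lemma only if $C^{f}_{0}(\cR_{i_0+1})=0$. The earlier contractions $\cR_1\to\dots\to\cR_{i_0}$ can create vacuum face-connected components, each worth $+1$ rather than $0$ in \cref{thm-rho}; the assertion that this is ``precisely compensated'' by an extra gain in $\Delta\rho$ at the foaming steps is true but is itself the content of \cref{thm-Foaming0Dip} together with the $q-1\leq d-k-1$ count --- writing ``modulo vacuum-bubble book-keeping'' defers exactly the part of the argument that makes the bound close. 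The paper avoids this global bookkeeping by inducting on $\Lt$: it contracts only $l_1$, decomposes the result into components, and applies \cref{thm-rho} or the induction hypothesis to each one, so the compensation is performed locally at every level. Relatedly, you misplace the role of the non-foaming hypothesis at step $i_0$: a $0$-dipole destroys no closed face whether or not it foams (each face traverses a line through exactly one strand, so $\Delta F_{i_0}=0$, not merely $\geq 0$, and faces never ``combine''); the hypothesis is needed so that $\cR_{i_0+1}$ acquires no new vacuum component and \cref{thm-rho} yields $\rho\leq 0$ there instead of $\leq 1$. You should also include the paper's preliminary reduction to face-connected $\cG$ (needed to invoke \cref{thm-rho} at all) and the base case $\Lt=1$.
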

The proof requires another lemma proven in \citep{Carrozza2012ab}:
\begin{lemma}[Foaming $0$-dipoles]
  \label{thm-Foaming0Dip}
  Let $\cR$ be a rosette (i.e.\@ a one-vertex uncolored tensor graph)
  and $l$ a $0$-dipole in $\cR$. If $C_{0}^{f}(\cR/l)>C_{0}^{f}(\cR)$,
  then $\rho(\cR)=\rho(\cR/l)-(d-1)$.
\end{lemma}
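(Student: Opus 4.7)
The plan is to express $\rho(\cR)-\rho(\cR/l)$ in terms of the changes of the three constituents of the formula $\rho(\cG) = F(\cG) - R(\cG) - (d-2)\Lt(\cG)$ under deletion of the self-loop $l$. Since $\cR$ has a single vertex, removing $l$ preserves $V$ and decreases $L$ by one, so $\Delta\Lt \defi \Lt(\cR/l)-\Lt(\cR) = -1$. Consequently the target identity $\rho(\cR) = \rho(\cR/l)-(d-1)$ reduces to the scalar identity
\begin{equation*}
\Delta F - \Delta R = 1,\qquad \Delta F \defi F(\cR/l) - F(\cR),\quad \Delta R \defi R(\cR/l) - R(\cR).
\end{equation*}

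First I would examine the face count. Since $l$ is a $0$-dipole, none of its $d$ strands is a closed face of length $1$, so contracting $l$ annihilates no face outright; the strands are merely excised from the faces on which they sit. Depending on how many distinct closed faces contain these strands and in what cyclic order, each affected face either shortens or splits into several new closed faces, yielding an explicit local formula for $\Delta F$. The foaming hypothesis $C_0^f(\cR/l)>C_0^f(\cR)$ provides the decisive global input: the edges of $\cR/l$ separate, in the face-incidence sense, into a vacuum face-component $\cR_v$ and a complementary piece $\cR_e$ carrying all the external legs of $\cR$.

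Next I would exploit additivity $\rho(\cR/l) = \rho(\cR_v) + \rho(\cR_e)$ together with \cref{thm-rho}: part~(1) gives $\rho(\cR_v)\les 1$, with equality iff $\cR_v$ is fully melonic, and part~(2) gives $\rho(\cR_e)\les 0$. Since $\cR$ itself is face-connected through $l$, the matrix $\epsilon(\cR/l)$ is block-diagonal with respect to $\cR_v$ and $\cR_e$, hence $R(\cR/l)=R(\cR_v)+R(\cR_e)$, whereas in $\epsilon(\cR)$ the row indexed by $l$ couples columns of both blocks. A careful stranded-graph argument at the unique vertex, using the $K_{d+1}$ connection pattern and the $0$-dipole property, should show that this extra row contributes exactly one unit to the rank and that the $d$ excised strands of $l$ reorganise into a fully melonic vacuum piece $\cR_v$. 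Combining these ingredients and invoking \cref{thm-FullyMelonicDivDegree} to pin down $F(\cR_v) = \Fm(\cR_v)$ yields the clean accounting $\Delta F - \Delta R = 1$, i.e.\@ the announced shift $d-1 = 1 + (d-2)$, in which the $1$ records $\rho(\cR_v)$ and the $d-2$ comes from $-(d-2)\Delta\Lt$.

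The main obstacle is precisely this last local step: verifying that the foaming condition alone forces $\cR_v$ to be fully melonic and that exactly one unit of rank is lost when the row of $l$ is removed from $\epsilon(\cR)$. I would approach it by pulling open $l$ into a stranded bridge between its two corners at the unique vertex, enumerating the closed faces that materialise on each side, and applying \cref{thm-rho}(1) in the contrapositive to exclude any non-melonic vacuum outcome. Once fully melonicity of $\cR_v$ is secured, the rank shift follows from the block structure of $\epsilon(\cR/l)$ and a rank-nullity argument applied to the row indexed by $l$.
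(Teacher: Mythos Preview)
The paper does not prove this lemma; it is quoted from \citep{Carrozza2012ab} and used as a black box in the proof of \cref{thm-NonFoaming0Dip}. So there is no in-paper argument to compare against, but your proposal still has a substantive gap worth flagging.

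Your reduction to $\Delta F-\Delta R=1$ is sound, but the analysis of $\Delta F$ is looser than it should be. A $0i$-face is a simple cycle in the $2$-regular bipartite graph $\cG_{c}^{\{0,i\}}$, so the $0$-edge $l$ occurs at most once in any face: contracting a $0$-dipole can only \emph{shorten} closed faces, never split them. Hence $\Delta F=0$ outright, and the entire content of the lemma is the rank statement $R(\cR)=R(\cR/l)+1$, i.e.\ the row of $l$ in $\epsilon(\cR)$ is linearly independent of the remaining rows.

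Your plan to obtain this via the additivity $\rho(\cR/l)=\rho(\cR_{v})+\rho(\cR_{e})$ together with \cref{thm-rho} does not close. First, there is no mechanism by which the foaming hypothesis forces $\cR_{v}$ to be fully melonic: $\cR_{v}$ is an arbitrary vacuum face-connected piece of $\cR/l$, and the lemma is an \emph{equality} that must hold whatever its internal structure. The sentence ``the $d$ excised strands of $l$ reorganise into a fully melonic vacuum piece $\cR_{v}$'' conflates the strands of $l$ with the lines of $\cR_{v}$; these are unrelated. Second, even granting $\rho(\cR_{v})=1$, the decomposition only gives information about $\rho(\cR/l)$, not about $\rho(\cR)$ separately, so you cannot extract the precise shift $d-1$ this way. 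Third, there is a circularity risk: in \citep{Carrozza2012ab} the bound of \cref{thm-rho} is established through dipole contractions, and this very lemma is part of that machinery.

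The direct route is to argue purely at the level of $\epsilon$: since $\Delta F=0$, $\epsilon(\cR/l)$ is $\epsilon(\cR)$ with the row $l$ deleted, and it is block-diagonal along the partition $\cF_{v}\sqcup\cF_{e}$. One must show that the row indexed by $l$, which is the unique row with nonzero entries in both blocks, cannot lie in the span of the others. This is where the ``vacuum'' nature of $\cR_{v}$ enters (every line of $\cL_{v}$ has all $d$ of its strands in $\cF_{v}$), and it is the argument actually supplied in \citep{Carrozza2012ab}.
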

\begin{proof}[of \cref{thm-NonFoaming0Dip}]
  Let us first suppose that the lemma is proven for face-connected
  graphs. Consider then a vertex-connected but
  face-disconnected graph $\cG$: $\cG=\bigcup_{i\in I}\cG^{i}$ and
  $\rho(\cG)=\sum_{i\in I}\rho(\cG^{i})$. At least one of the
  $\cG^{i}$'s contains a non-foaming $0$-dipole. The lemma is thus
  proven if all the other face-connected components satisfy $\rho\les
  0$. Fortunately, a vertex-connected but face-disconnected graph
  cannot have vacuum face-connected components. The color structure of
  the tensor graphs ensures it. And we conclude using \cref{thm-rho}.\\

  So let us assume that $\cG$ is face-connected and let us prove the lemma by induction on the number $\Lt$ of lines of
  $\cR$. If $\Lt=1$, $l_{1}$ is a $0$-dipole in $\cR$. In this case,
  $F(\cR)=F(\cG)=0=R(\cG)$ so that $\rho=-(d-2)$.

  Let us now assume that the lemma holds for all graphs with at most $\Lt=n$ lines and
  let us consider a graph with $\Lt=n+1$ edges. If $l_{1}$ is a
  $0$-dipole which does not create additional vacuum connected components, $\rho(\cR)=\rho(\cR/l_{1})-(d-2)$ (if
  $R(\cR/l_{1})=R(\cR)$) or $\rho(\cR)=\rho(\cR/l_{1})-(d-1)$ (if
  $R(\cR/l_{1})=R(\cR)-1$). Moreover
  $C_{0}^{f}(\cR/l_{1})=C_{0}^{f}(\cR)=C_{0}^{f}(\cG)=1$. Thus,
  acording to \cref{thm-rho}, $\rho(\cR/l_{1})\les 0$ and
  $\rho(\cR)=\rho(\cG)\les -(d-2)$.\\
  \noindent
  If $l_{1}$ is a $k$-dipole, $0\les k\les d-1$, which does not satisfy the conditions of the
  lemma, then $\rho(\cG)=\rho(\cR)=\rho(\cR/l_{1})-(d-k-1)$. The
  contraction of $l_{1}$ may have created $q$ connected components
  (i.e.\@ the number $C^{v}(\cR/l_{1})$ of vertex-connected components
  of $\cR/l_{1}$ is $q$) with $1\les q\les d-k$. But by assumption, at
  least one of these $q$ components obey the induction
  hypothesis. Then,
  \begin{align}
    \rho(\cG)\les&q-1-(d-2)-(d-k-1)\les -(d-2)
  \end{align}
which proves the lemma.
\end{proof}

We are now in position to give the topological properties of the
divergent graphs of the models \labelcref{eq:Action4}
and \labelcref{eq:Action6}.
\begin{prop}
  \label{thm-TopPropDivGraphs}
  The divergent graphs of the models \labelcref{eq:Action4}
  and \labelcref{eq:Action6} are classified in the following table
  \renewcommand{\arraystretch}{1.2}
  \begin{table}[!htp]
    \begin{displaymath}
  \begin{array}{c|c|c|c|c||c}
    &N&\ot(\cG)&\omega(\partial\cG)&C_{\partial\cG}-1&\omega_{d}(\cG)\\
    \hline
    \multirow{2}{*}{$\vp^{4}_{6}$} &2&0&0&0&2\\
    &4&0&0&0&0\\
    \hline
    \multirow{4}{*}{$\vp^{6}_{5}$}&2&0&0&0&2\\
    &4&0&0&0&1\\
    &4&0&0&1&0\\
    &6&0&0&0&0\\
    \hline
    \hline
  \end{array}
\end{displaymath}
    \caption{Classification of divergent graphs}
    \label{tab:ClassDivGraphs}
  \end{table}
\end{prop}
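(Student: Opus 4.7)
The plan is to exploit the identity
\[
\Lt(\cG) - R(\cG) - \rho(\cG) \;=\; \Fm(\cG) - F(\cG) \;=\; \frac{2}{(d-1)!}\bigl(\ot(\cG) - \omega(\partial\cG)\bigr) + (C_{\partial\cG}-1),
\]
which follows from the definition $\rho = F - R - (d-2)\Lt$, the relation $\Fm = (d-1)\Lt$, and \eqref{eq:Fm-F}. Every summand on the right is a non-negative integer: $\omega(\partial\cG) \les \ot(\cG)$ by \cref{thm-ot-od} (see \cref{sec:comb-analys-otcg}), $R \les \Lt$ by \cref{thm-TreeContraction} combined with \cref{thm-FullyMelonicDivDegree}, and $C_{\partial\cG} - 1 \ges 0$ by definition. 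Moreover, since $\Fm - F \in \Z$ and $2/(d-1)! \in \{1/12,1/60\}$ for $d \in \{5,6\}$, the jacket-genus deficit $\ot - \omega(\partial\cG)$ must be an integer multiple of $(d-1)!/2 \in \{12,60\}$.

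I would then run through the rows of \cref{tab:PotDivGraphs} one by one. For $\rho = 0$, the graph, being non-vacuum and vertex-connected, is forced to be fully melonic by \cref{thm-rho}; \cref{thm-FullyMelonicDivDegree} gives $R = \Lt$, hence $F = \Fm$, and the equivalence \eqref{eq:FFm} yields $\ot = \omega(\partial\cG) = C_{\partial\cG}-1 = 0$. The corresponding $\omega_d$ values are then read off from \eqref{eq:FullyMelonicPowModels}, producing exactly the $C_{\partial\cG} - 1 = 0$ entries of \cref{tab:ClassDivGraphs}.

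For the $\rho < 0$ rows the left-hand side of the master identity equals $\Lt - R + |\rho|$, a small non-negative integer. Combined with divisibility by $(d-1)!/2 \ges 12$, this immediately forces $\ot = \omega(\partial\cG)$ and reduces the identity to $\Fm - F = C_{\partial\cG} - 1$. Since $\omega(\partial\cG)$ is itself a sum of non-negative integer genera of the jackets of the $d$-colored boundary graph $\partial\cG$, whose vertex set has size $N \les 6$, a case-by-case inspection of such very small boundary graphs forces $\omega(\partial\cG) = 0$, and hence $\ot(\cG) = 0$. At this point the surviving potentially divergent non-fully-melonic class is $\vp^6_5$ with $N = 4$, $\rho = -1$, $C_{\partial\cG} - 1 = 1$, corresponding to the last row of \cref{tab:ClassDivGraphs}.

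The hard part will be the final elimination step: showing that the entries of \cref{tab:PotDivGraphs} which do not appear in \cref{tab:ClassDivGraphs} (the $\rho = -1, -2$ rows with $N = 2$ in both models and the $\rho < 0$ rows of $\vp^4_6$) are genuinely empty, either because the required boundary topology is incompatible with the tensorial color structure at such small $N$ or because no connected Feynman graph with the vertices of \eqref{eq:Action4}--\eqref{eq:Action6} realizes the necessary face configuration. This is where the arithmetic integrality bound is tightly used and where the specific melonic shape of the interactions $V^d_{4,1}$, $V_{4,2}$, $V_{6,1}$, $V_{6,2}$ must be invoked to rule out the spurious combinatorial possibilities left open by the master identity alone.
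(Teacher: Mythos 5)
Your $\rho=0$ branch coincides with the paper's: fully melonic by \cref{thm-rho}, then \cref{eq:FFm} gives the vanishing of all three topological quantities and \cref{eq:FullyMelonicPowModels} gives $\omega_{d}$. The problems are all in the $\rho<0$ branch. First, your ``master identity'' only has a small left-hand side if $R(\cG)=\Rm(\cG)=\Lt(\cG)$, and you never establish this: for a graph of \cref{tab:PotDivGraphs} with $\rho=-1$ or $-2$ you must rule out $R<\Rm$, which the paper does by combining \cref{thm-Foaming0Dip} with \cref{thm-NonFoaming0Dip} to force $\rho\les -(d-2)\les -3$, a contradiction with the table. Second, your divisibility observation, while correct, is strictly weaker than what is needed: it only shows that $\frac{2}{(d-1)!}\bigl(\ot(\cG)-\omega(\partial\cG)\bigr)$ is a non-negative \emph{integer}, so with $R=\Rm$ and $|\rho|\les 2$ it can still equal $1$ or $2$; it does not ``immediately force'' $\ot=\omega(\partial\cG)$. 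The paper instead invokes the gap theorem of \citep{Ben-Geloun2012ab}, namely that this normalized deficit is either $0$ or at least $d-2\ges 3$, and that is what kills the values $1$ and $2$. Your proposed patch is also logically broken: establishing $\omega(\partial\cG)=0$ does not yield $\ot(\cG)=0$, because \cref{thm-ot-od} bounds $\ot$ from \emph{below} by $d\,\omega(\partial\cG)$, not from above; a priori $\ot$ could equal $(d-1)!/2$ and still satisfy every constraint you impose, leaving spurious rows with non-planar jackets in the classification (which would in turn undermine the traciality argument of \cref{sec:trac-count}).

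The step you defer as ``the hard part'' --- emptiness of the $N=2$, $\rho\in\{-1,-2\}$ rows --- actually requires no analysis of the interaction vertices or of face configurations. Once one knows $\ot(\cG)=\omega(\partial\cG)=0$ and hence $C_{\partial\cG}-1=|\rho|$, those rows would require a $2$-point graph with a disconnected boundary, which is impossible: the boundary graph of a $2$-point graph is the unique (connected) $d$-colored graph on two vertices. That one-line observation, not a case-by-case study of the melonic interactions, is how the paper concludes.
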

\begin{proof}
  Let $\cG$ be a graph of one of the types listed in
  \cref{tab:PotDivGraphs}. If $\rho(\cG)=0$, according to
  \cref{thm-rho}, $\cG$ is fully melonic and by
  \cref{thm-FullyMelonicDivDegree} and \cref{eq:FFm},
  $\ot(\cG)=\omega(\partial\cG)=C_{\partial\cG}-1$. Let us now assume
  that $\rho(\cG)<0$. If $R(\cG)<\Rm(\cG)$, according to \cref{thm-Foaming0Dip},
    for any tree $\cT$ in $\cG$ and any order on the lines of
    $\cG/\cT$, there must be a non-foaming $0$-dipole in $\cG$ and by
    \cref{thm-NonFoaming0Dip}, $\rho(\cG)\les -(d-2)\les -3$ for both
    models \labelcref{eq:Action4} and \labelcref{eq:Action6}.\\
    We can thus assume that $R(\cG)=\Rm(\cG)$. In this case (see \cref{eq:Fm-F}), 
    \begin{align}
      \rho(\cG)=F(\cG)-(d-1)\Lt(\cG)=-\frac{2}{(d-1)!}(\ot(\cG)-\omega(\partial\cG))-(C_{\partial\cG}-1).
    \end{align}
    But \citeauthor*{Ben-Geloun2012ab} have proven that for any $d$-tensor
    graph $\cG$, the quantity
    $\frac{2}{(d-1)!}(\ot(\cG)-\omega(\partial\cG))$ is either equal
    to zero or bigger or equal to $d-2$ \citep{Ben-Geloun2012ab}. Thus for $d\ges 5$, graphs $\cG$
    such that $\rho\ges -2$ and $R=\Rm$ must satisfy
    $\ot(\cG)=\omega(\partial\cG)=0$. Consequently, graphs with
    $\rho=-1$ (resp. $-2$) have a boundary graph with two
    (resp. three) (vertex-)connected components. We simply conclude
    the proof by noting that the boundary graph of a $2$-point graph is necessarily connected.
\end{proof}

\section{Renormalization}
\label{sec:renormalization}

Let us consider an arbitrary divergent graph $\cG$ with $N$ external legs.
 This graph has $N $  external propagators.
 We denote by $p_{f_e},\,$ the  external momentum of $\cG$ associated to the external  face $f_e$,
 and $P_{j}=(p_{j,f_{e_1}},p_{j,f_{e_2}},\cdots, p_{j,f_{e_d}}), \,\, 1\les j\les N$
 the $d$-vectors associated to the external edges of
 $\cG$. In the same manner, the $d$-dimensional momentum of an
 internal line $l$ of $\cG$ will be denoted by a capital letter:
 $P_{l}=(p_{f_{1}(l)},\dotsc,p_{f_{d}(l)})$.\\

\noindent In this section, we will complete the proof of the finiteness, order by
order, of the usual \emph{effective} series which express any
connected function of the theory in terms of an infinite set of
effective couplings, related one to each other by a discretized flow
\citep{Riv1}. Reexpressing these effective series in terms of the
renormalized couplings would reintroduce in the usual way the
Zimmermann’s forests of ``useless'' counterterms and build the standard renormalized series. The most explicit way to check
finiteness of these renormalized series in order to complete the ``BPHZ
theorem'' is to use the standard ``classification of forests'' which
distributes Zimmermann’s forests into packets such that the sum over
assignments in each packet is finite \citep{Riv1}. This part is
completely standard and will not be repeated here. As a consequence,
we can focus our attention on (primitively divergent) dangerous graphs (see \cref{sec:multiscale-analysis}).\\

The truncated amplitude of a graph $\cG$ with a scale attribution $\mu$ is given by
\begin{align}
\bar{A}^\mu_{\cG}=&\sum_{\{P_{j}\}}
\vpb_{P_{1}}\vp_{P_{2}}\cdots \vpb_{P_{N-1}}\vp_{P_{N}}A^\mu_{\cG}(\{P_j\}),
\end{align}
where
\begin{align}
A^\mu_{\cG}(\{P_j\})=&\sum_{P_l,
  l\in\cL}\int\prod_{l\in \mathcal{L}}\big( e^{-\alpha_l(a P_l^2+m^{2})}\delta_l(\sum_{j=1}^dp_{l,j})\big)\prod_{v\in\cV}K_{v}(\{P_{l}\}) \prod_{l\in \mathcal{L}}d\alpha_l
\end{align}
and the $\vp$'s and $\vpb$'s are fields of scales strictly lower than
the lowest internal scale of $\cG^{\mu}$.\\

Each delta function $\delta_l(\sum_{j=1}^d p_{l,j})$ can be re-expressed in the form
\begin{align}
\delta_l(\sum_{j=1}^d p_{l,j})=\delta_l(\sum_{f\in\mathcal{F}}\,\epsilon_{lf}\,p_{f}+\sum_{f_{e}\in\mathcal{F}_e}\,\widetilde\epsilon_{lf_{e}}\,p_{f_e}),\label{eq:deltaldeltaf}
\end{align}
where the tensor $\widetilde\epsilon_{lf_{e}}$ is the
tensor analogous to $\epsilon_{lf}$ but associated with the external
faces of $\cG$. Remark also that 
\begin{align}
\prod_{l\in\mathcal{L}}e^{-\alpha_l aP_l^2}=\prod_{f\in\mathcal{F}}
e^{-a(\sum_{l\in f }\alpha_l)p_f^2}\prod_{f_{e}\in\mathcal{F}_e}
e^{-a(\sum_{l\in f_{e} }\alpha_l)p_{f_e}^2}.\label{eq:PlPf}
\end{align}
In the rest of this work, we set $\alpha_f\defi \sum_{l\in f }\alpha_l$.

\subsection{Resolution of the delta functions}

Let $l$ be an arbitrary internal line of $\cG$ such that  $l\in
\mathcal{L}_\mu$, see \cref{sec:multiscale-analysis}. Recall that the
subset $\mathcal{L}_\mu$ of $\mathcal{L}$ is defined such that
$|\mathcal{L}_\mu|=\rk{\epsilon(\cG)}=R$. The number of delta
functions, such that the one in \cref{eq:deltaldeltaf}, that will interest us
is exactly the rank $R\les L$ of the matrix $(\epsilon)_{lf}.$ The
remaining of the delta functions, i.e. the $L-R$ delta functions, will
be  put to $1$ i.e. $\delta_l=\delta(0)=1$ after summation.\\

The kernels $K_{v}$ are such that the momenta are conserved along the strands:
\begin{align}
  A^{\mu}_{\cG}(\{P_{j}\})=&K_{\partial\cG}(\{P_{j}\})\sum_{p_{f},f\in\cF}\int\prod_{l\in
    \mathcal{L}}\big( e^{-\alpha_l(a
    P_l^2+m^{2})}\delta_l(\sum_{j=1}^dp_{l,j})\big)\prod_{l\in
    \mathcal{L}}d\alpha_l\\
  \fide&K_{\partial\cG}(\{P_{j}\})\cA^{\mu}_{\cG}(\{P_{j}\}).
\end{align}
The kernel $K_{\partial\cG}$ identifies the momenta at the two ends of
each of the $Nd/2$ external faces. Thus it precisely reproduces the
structure of the boundary graph $\partial\cG$ of $\cG$.\\

According to \cref{eq:deltaldeltaf,eq:PlPf} (see also \cref{sec:multiscale-analysis}),
\begin{align}
\sum_{p_{f},f\in\cF}\,\prod_{l\in \mathcal{L}}e^{-\alpha_la P_l^2}\delta_l(\sum_{j=1}^d\,\,p_{l,j})=&\sum_{p_{f},f\in\cF\setminus\cF_{\mu}}\prod_{f\in\mathcal{F}\setminus\mathcal{F}_\mu}
e^{-a\alpha_f p_f^2}\prod_{f_{e}\in\mathcal{F}_e}
e^{-a\alpha_{f_{e}} p_{f_e}^2}\nonumber\\
&\times \prod_{f\in\mathcal{F}_\mu}e^{-a\alpha_f(\sum_{f'\in
    \mathcal{F},f'\neq f}\epsilon_{l(f)f'}p_{f'}+\sum_{f_{e}\in\cF_{e}}\widetilde\epsilon_{l(f)f_{e}}p_{f_{e}})^{2}}.
\end{align}
Finally
\begin{align}
\bar{A}^\mu_{\cG}=&\sum_{P_{j},j\in[N]^{*}}
K_{\partial\cG}(\{P_{j}\})\,\vpb_{P_{1}}\vp_{P_{2}}\cdots \vpb_{P_{N-1}}\vp_{P_{N}}\prod_{f_{e}\in\mathcal{F}_e}
e^{-a\alpha_{f_{e}} p_{f_e}^2}\int\prod_{l\in \mathcal{L}}d\alpha_{l}\,e^{-a\alpha_l m^2}\nonumber\\
&\times\sum_{p_{f},f\in\cF\setminus\cF_{\mu}}\prod_{f\in\mathcal{F}\setminus \mathcal{F}_\mu}
e^{-a\alpha_fp_f^2}\prod_{f\in\mathcal{F}_\mu}e^{-a\alpha_f(\sum_{f'\in
    \mathcal{F},f'\neq f}\epsilon_{l(f)f'}p_{f'}+\sum_{f_{e}\in\cF_{e}}\widetilde\epsilon_{l(f)f_{e}}p_{f_{e}})^{2}}.
\end{align}

\subsection{Taylor Expansions}\label{sec:taylor-expansions}
The aim of this section is to expose general features of the Taylor
expansion of the Feynman amplitudes.\\

\noindent Let $\cG$ be any Feynman graph of the models \labelcref{eq:Action4}
and \labelcref{eq:Action6}. $\cG$ may not have a divergent amplitude. We define the parametrized amplitude $\cA^\mu_{\cG}(\{P_{j}\},t)$ which depends on a parameter $t\in[0,1]$ such that
$\cA^\mu_{\cG}(\{P_{j}\},t)\defi \cA^\mu_{\cG}(\{tP_{j}\})$. Obviously,
$\cA^\mu_{\cG}(\{P_{j}\})=\cA^\mu_{\cG}(\{P_{j}\},1)$. We will perform a
Taylor expansion (in $t$) of $\cA^\mu_{\cG}(\{P_{j}\},1)$ around $t=0$.

\subsubsection{Zeroth order}
\begin{align}
  \cA^\mu_{\cG}(\{P_{j}\})=&\cA^\mu_{\cG}(\{P_{j}\},t)|_{t=1}\\
  =&\prod_{f_{e}\in\mathcal{F}_e}
e^{-a\alpha_{f_{e}} t^{2}p_{f_e}^2}\int\prod_{l\in \mathcal{L}}d\alpha_{l}\,e^{-a\alpha_l m^2}\sum_{p_{f},f\in\cF\setminus\cF_{\mu}}\prod_{f\in\mathcal{F}\setminus \mathcal{F}_\mu}
e^{-a\alpha_fp_f^2}\nonumber\\
&\left.\times\prod_{f\in\mathcal{F}_\mu}e^{-a\alpha_f(\sum_{f'\in
    \mathcal{F},f'\neq f}\epsilon_{l(f)f'}p_{f'}+\sum_{f_{e}\in\cF_{e}}\widetilde\epsilon_{l(f)f_{e}}tp_{f_{e}})^{2}}\rabs_{t=1}.
\end{align}
The zeroth order term of the Taylor expansion of $\cA^\mu_{\cG}(\{P_{j}\})$ is
\begin{align}
\cA^\mu_{\cG,0}(\{P_{j}\})\defi\int\prod_{l\in \mathcal{L}}d\alpha_{l}\,e^{-a\alpha_l m^2}\sum_{p_{f},f\in\cF\setminus\cF_{\mu}}\prod_{f\in\mathcal{F}\setminus \mathcal{F}_\mu}
e^{-a\alpha_fp_f^2}\prod_{f\in\mathcal{F}_\mu}e^{-a\alpha_f(\sum_{f'\in
    \mathcal{F},f'\neq f}\epsilon_{l(f)f'}p_{f'})^{2}}.
\end{align}
Note that it is independant of the $P_{j}$'s. The Taylor expansion of
$\cA_{\cG}^{\mu}$ induces an expansion of $\bar A_{\cG}^{\mu}$ whose
zeroth order takes the following form:
\begin{align}
  \bar A^{\mu}_{\cG,0}\defi \cA^\mu_{\cG,0}\sum_{\{P_{j}\}}K_{\partial\cG}(\{P_{j}\})
\vpb_{P_{1}}\vp_{P_{2}}\cdots \vpb_{P_{N-1}}\vp_{P_{N}}.\label{eq:A0}
\end{align}
In conclusion, the zeroth order term of $\bar A_{\cG}^{\mu}$ has the
form of a vertex whose connecting pattern is given by the boundary
graph of $\cG$.

\subsubsection{First order}
\label{sec:first-order}

The first order of the Taylor expansion of $\cA_{\cG}^{\mu}$ is
\begin{align}
  \cA^{\mu}_{\cG,1}(\{P_{j}\})\defi&\left.\frac{d\cA^{\mu}_{\cG}(\{P_{j},\cdot\})}{dt}\rabs_{t=0}.\label{eq:cA1def}
\end{align}
To simplify notations, let us introduce, for all $f\in\cF_{\mu}$
\begin{align}
  \kp_{f}\defi&\sum_{f'\in\mathcal{F},f'\neq
    f}\epsilon_{l(f)f'}p_{f'}\text{ and } \kp_{e(f)}\defi\sum_{f_{e}\in\cF_{e}}\widetilde\epsilon_{l(f)f_{e}}p_{f_{e}}.\label{eq:kPfkPe}
\end{align}
Thus we get
\begin{equation}
  \begin{split}
    \cA^{\mu}_{\cG,1}(\{P_{j}\})=-2a \int\prod_{l\in
      \mathcal{L}}d\alpha_{l}\,e^{-a\alpha_l
      m^2}\sum_{p_{f},f\in\cF\setminus\cF_{\mu}}\prod_{f\in\mathcal{F}\setminus
      \mathcal{F}_\mu}
    e^{-a\alpha_fp_f^2}\\
    \times \prod_{f\in\cF_{\mu}}e^{-a\alpha_f\kp_f^2}\big(\sum_{f\in\cF_{\mu}}\alpha_{f}\kp_{e(f)}\kp_{f}\big).
  \end{split}\label{eq:cA1}
\end{equation}
The sums on the $p_{f}$'s are performed over $\Z$ and the summands are
odd so that $\cA^{\mu}_{\cG,1}$ vanishes identically.

\subsection{Traciality of the counterterms}
\label{sec:trac-count}

In \citep{Bonzom2012ac}, it has been realized that the effective
action for a single tensor field, obtained by the integration of $d$
tensor fields out of the $d+1$ fields of an iid model, is dominated
by invariant traces indexed by melonic $d$-colored graphs. The
vertices of the model \eqref{eq:Action4} (resp.\@ \eqref{eq:Action6})
correspond to all the vacuum \emph{connected} melonic $6$-colored
(resp.\@ $5$-colored) graphs upto order $4$ (resp. $6$) plus a
so-called anomaly namely a product of two quadratic traces.\\

\noindent{} We consider the divergent graphs of the $\vp^{4}_{6}$ and
$\vp^{6}_{5}$ models, listed in \cref{tab:ClassDivGraphs}. For
simplicity, let us start with the graphs $\cG$ such that
$\omega_{d}(\cG)=0$ or $1$. Those graphs have $4$ or $6$ external
legs. According to the discussion of \cref{sec:taylor-expansions},
$\bar A_{\cG,0}^{\mu}$ corresponds to a vertex whose structure is
given by the boundary graph of $\cG$. All the divergent graphs in our
models have melonic boundary graphs. If $N(\cG)=4$ and
$C_{\partial\cG}=1$, $\partial\cG$ is one of the graphs depicted in
\cref{fig:Vertices4}. If $N(\cG)=6$ and $C_{\partial\cG}=1$,
$\partial\cG$ is one of the graphs of \cref{fig:Vertices6}. Finally,
there are $4$-point divergent graphs with a disconnected melonic
boundary. They correspond to the disconnected invariant trace of
\cref{fig:VerticeV42}. Such an ``anomaly'' has also been observed in
\citep{Ben-Geloun2011aa}.

As $\cA^{\mu}_{\cG,1}=0$, we have
\begin{subequations}
  \begin{align}
    \cA^{\mu}_{\cG}(\{P_{j}\})=&\cA^{\mu}_{\cG,0}+\int_{0}^{1}(1-s)\left.\frac{d^{2}\cA^{\mu}_{\cG}(\{P_{j},\cdot\})}{dt^{2}}\rabs_{t=s}ds\fide
      \cA^{\mu}_{\cG,0}+\cR_{2},\label{eq:TaylorOmegad01}
    \end{align}
    \begin{equation}
      \begin{split}
        \cR_{2}=\int_{0}^{1}(1-s)\int\prod_{l\in
          \mathcal{L}}d\alpha_{l}\,e^{-a\alpha_l
          m^2}\sum_{p_{f},f\in\cF\setminus\cF_{\mu}}\prod_{f\in\mathcal{F}\setminus
          \mathcal{F}_\mu}
        e^{-a\alpha_fp_f^2}\prod_{f_{e}\in\mathcal{F}_e}
        e^{-a\alpha_{f_{e}} s^{2}p_{f_e}^2}\\
        \times\prod_{f\in\cF_{\mu}}e^{-a\alpha_f(\kp_f+s\kp_{e(f)})^2}\lsb\Big(\sum_{f_{e}\in\cF_{e}}-2a\alpha_{f_{e}}sp_{f_{e}}^{2}+\sum_{f\in\cF_{\mu}}-2a\alpha_{f}\kp_{e(f)}(\kp_{f}+s\kp_{e(f)})\Big)^{2}\right.\\
      \left.+\sum_{f_{e}\in\cF_{e}}-2a\alpha_{f_{e}}p_{f_{e}}^{2}+\sum_{f\in\cF_{\mu}}-2a\alpha_{f}\kp_{e(f)}^{2}\rsb.
      \end{split}\label{eq:R2}
    \end{equation}
  \end{subequations}
$\cR_{2}$ is the renormalized amplitude of $\cG^{\mu}$. Let us prove
that it is finite (in fact summable with respect to its scale index). Using the simple upper bound
\begin{equation}
  |p_{f}|e^{-a\alpha_{f}p_{f}^{2}}\les\frac{e^{-a\alpha_{f}p_{f}^{2}/2}}{\sqrt{a\alpha_{f}}},\label{eq:pexpBound}
\end{equation}
one easily gets that the terms between square bracket in \cref{eq:R2}
are bounded by $cM^{-2(i_{\cG}(\mu)-e_{\cG}(\mu))}$ where $c$ is a
positive constant. The rest of the summand/integrand reproduces the
power counting of $\cG$ (see \cref{sec:multiscale-analysis}). Thus for
logarithmically or linearly divergent graphs, $\cR_{2}$ is finite.\\

Let us now consider the divergent $2$-point graphs of the
models \labelcref{eq:Action4} and \labelcref{eq:Action6}. Their degree
of divergence $\omega_{d}$ equals $2$. In consequence, their amplitude has to be
expanded upto order $2$:
\begin{subequations}
  \begin{align}
    \cA^{\mu}_{\cG}(\{P_{j}\})=&\cA^{\mu}_{\cG,0}+\cA^{\mu}_{\cG,2}(\{P_{j}\})+\cR_{3},\label{eq:2ptExpansion}\\
    \cR_{3}=&\tfrac 12
    \int_{0}^{1}(1-s)^{2}\left.\frac{d^{3}\cA^{\mu}_{\cG}(\{P_{j},\cdot\})}{dt^{3}}\rabs_{t=s}ds.
    \end{align}
  \end{subequations}
Let us recall that (see \cref{eq:R2})
\begin{subequations}
  \begin{align}
    \frac{d^{2}\cA^{\mu}_{\cG}(\{P_{j},t\})}{dt^{2}}=&\int\prod_{l\in
      \mathcal{L}}d\alpha_{l}\,e^{-a\alpha_l
      m^2}\sum_{p_{f},f\in\cF\setminus\cF_{\mu}}\prod_{f\in\mathcal{F}\setminus
      \mathcal{F}_\mu}
    e^{-a\alpha_fp_f^2}\nonumber\\
    &\qquad\times\prod_{f_{e}\in\mathcal{F}_e} e^{-a\alpha_{f_{e}}
      t^{2}p_{f_e}^2}\prod_{f\in\cF_{\mu}}e^{-a\alpha_f(\kp_f+t\kp_{e(f)})^2}[E(t)^{2}+E'],\\
    E(t)\defi&\sum_{f_{e}\in\cF_{e}}-2a\alpha_{f_{e}}tp_{f_{e}}^{2}+\sum_{f\in\cF_{\mu}}-2a\alpha_{f}\kp_{e(f)}(\kp_{f}+t\kp_{e(f)}),\label{eq:Et}\\
    E'\defi&\sum_{f_{e}\in\cF_{e}}-2a\alpha_{f_{e}}p_{f_{e}}^{2}+\sum_{f\in\cF_{\mu}}-2a\alpha_{f}\kp_{e(f)}^{2}.\label{eq:Eprime}
  \end{align}
\end{subequations}
Note that $E'$ does not depend on $t$. As a consequence,
\begin{align}
  \frac{d^{3}\cA^{\mu}_{\cG}(\{P_{j},t\})}{dt^{3}}=&\int\prod_{l\in
      \mathcal{L}}d\alpha_{l}\,e^{-a\alpha_l
      m^2}\sum_{p_{f},f\in\cF\setminus\cF_{\mu}}\prod_{f\in\mathcal{F}\setminus
      \mathcal{F}_\mu}
    e^{-a\alpha_fp_f^2}\nonumber\\
    &\qquad\times\prod_{f_{e}\in\mathcal{F}_e} e^{-a\alpha_{f_{e}}
      t^{2}p_{f_e}^2}\prod_{f\in\cF_{\mu}}e^{-a\alpha_f(\kp_f+t\kp_{e(f)})^2}\big(E[E^{2}+E']+2EE'\big).
\end{align}
We have already seen that $|E(t)|\sim
M^{-(i_{\cG}(\mu)-e_{\cG}(\mu))}$ and $|E'|\sim
M^{-2(i_{\cG}(\mu)-e_{\cG}(\mu))}$. Thus $|\cR_{3}|$ is bounded by
$M^{-3(i_{\cG}(\mu)-e_{\cG}(\mu))}$ times the power counting of
$\cG^{\mu}$ and is therefore summable for $i_{\cG}(\mu)>e_{\cG}(\mu)$.\\

$\bar A^{\mu}_{\cG,0}$ has the structure of the boundary graph of
$\cG$. As $N(\cG)=2$, its boundary is the unique melon with two
vertices and $\bar A^{\mu}_{\cG,0}$ thus contributes to the
renormalization of the mass.\\

There only remains to prove that $\bar A^{\mu}_{\cG,2}$ renormalizes
the wave function. The argument is a bit subtle and twofold.
\begin{subequations}
  \begin{align}
    \cA^{\mu}_{\cG,2}=&\int\prod_{l\in
      \mathcal{L}}d\alpha_{l}\,e^{-a\alpha_l
      m^2}\sum_{p_{f},f\in\cF\setminus\cF_{\mu}}\prod_{f\in\mathcal{F}\setminus
      \mathcal{F}_\mu}
    e^{-a\alpha_fp_f^2}\prod_{f\in\cF_{\mu}}e^{-a\alpha_f\kp_f^2}\,[E(0)^{2}+E']\label{eq:AG2}\\
    \fide&\sum_{f_{1},f_{2}\in\cF_{\mu}}\kp_{e(f_{1})}\kp_{e(f_{2})}F_{1}(f_{1},f_{2})+\sum_{f_{e}\in\cF_{e}}p_{f_{e}}^{2}F_{2}(f_{e})+\sum_{f\in\cF_{\mu}}\kp_{e(f)}^{2}F_{3}(f),\label{eq:F1F2F3}\\
    F_{1}(f_{1},f_{2})=&4a^{2}\int\prod_{l\in
      \mathcal{L}}d\alpha_{l}\,e^{-a\alpha_l
      m^2}\alpha_{f_{1}}\alpha_{f_{2}}\nonumber\\
    &\hspace{3.5cm}\times\sum_{\substack{p_{f}\\f\in\cF\setminus\cF_{\mu}}}\kp_{f_{1}}\kp_{f_{2}}\prod_{f\in\mathcal{F}\setminus
      \mathcal{F}_\mu}
    e^{-a\alpha_fp_f^2}\prod_{f\in\cF_{\mu}}e^{-a\alpha_f\kp_f^2},\label{eq:F1}\\
    F_{2}(f_{e})=&-2a \int\prod_{l\in
      \mathcal{L}}d\alpha_{l}\,e^{-a\alpha_l
      m^2}\alpha_{f_{e}}\sum_{p_{f},f\in\cF\setminus\cF_{\mu}}\prod_{f\in\mathcal{F}\setminus
      \mathcal{F}_\mu}
    e^{-a\alpha_fp_f^2}\prod_{f\in\cF_{\mu}}e^{-a\alpha_f\kp_f^2},\label{eq:F2}\\
    F_{3}(f)=&-2a\int\prod_{l\in
      \mathcal{L}}d\alpha_{l}\,e^{-a\alpha_l
      m^2}\alpha_{f}\sum_{p_{f},f\in\cF\setminus\cF_{\mu}}\prod_{f\in\mathcal{F}\setminus
      \mathcal{F}_\mu}
    e^{-a\alpha_fp_f^2}\prod_{f\in\cF_{\mu}}e^{-a\alpha_f\kp_f^2}.\label{eq:F3}
  \end{align}
\end{subequations}
$\cA^{\mu}_{\cG,2}$ contributes to the renormalization of the wave
function if it is of the form
$F\sum_{f_{e}\in\cF_{e}}p_{f_{e}}^{2}$ where $F$ is a constant independant of
the $f_{e}$'s. We will see in the sequel that is not but that the
models are still renormalizable. We will need to exploit the fully
melonic character of the $2$-point divergent graphs and a
non-perturbative argument.\\

First of all, let us remark that none of the $F_{i}$'s are
constant. Moreover the first and third terms in \cref{eq:F1F2F3} do
not seem to be sums of squares of $p_{f_{e}}$\!'s. Let us first study
the third term. According to its definition,
\cref{eq:kPfkPe}, $\kp_{e(f)}$ is \emph{in general} a sum of external
momenta. Let us prove that in the case of fully melonic graphs, this sum contains at most one term. Indeed,
according to the definition of the sets $\cF_{\mu}$ and $\cL_{\mu}$
(see \cref{sec:multiscale-analysis}), to any internal face
$f\in\cF_{\mu}$, we associate a unique internal line
$l(f)\in\cL_{\mu}$ such that $l(f)\in f$. According to the definition
of the matrix $\widetilde{\epsilon}$, $\kp_{e(f)}$ is the (possibly
alternating) sum of momenta of the external faces to which the line
$l(f)$ contributes. So we have to prove that a line in $\cL_{\mu}$
contributes to at most one external face.\\
As proven in \cref{thm-TreeContraction}, for any spanning tree $\cT$
in $\cG$, the rows of $\epsilon$
corresponding to tree lines are linear combinations of the loop
lines. In other words, $\cL_{\mu}\subset\cL(\cG)\setminus\cL(\cT)$
(remember that $\cL_{\mu}$ is a set of maximally independant
edges). Let us then contract a spanning tree and consider the rosette
$\cG/\cT$. This contraction does not change the nature (internal or
external) of the faces to which the lines of $\cL_{\mu}$
contribute. As $\cG$ is fully melonic, there exists an order on the
edges of $\cG/\cT$ such that for all $i\in[L(\cG/\cT)]$, $l_{i}$ is a
$(d-1)$-dipole in $\cG_{i}$, see \cpageref{thm-NonFoaming0Dip}. Thus
each $l_{i}$ contributes to $d-1$ internal faces (of length $1$) and
to possibly one external face.\\

In consequence, for any internal face $f\in\cF_{\mu}$, there exists at
most one external face $f_{e}(f)$ such that
$\kp_{e(f)}=\widetilde{\epsilon}_{l(f)f_{e}(f)}p_{f_{e}(f)}$. The
third and first term of \cref{eq:F1F2F3} rewrites
\begin{subequations}
  \label{eq:F1F3Rewriting}
  \begin{align}
    \sum_{f\in\cF_{\mu}}\kp_{e(f)}^{2}F_{3}(f)=&\sum_{f\in\cF_{\mu}}p^{2}_{f_{e}(f)}F_{3}(f)=\sum_{f_{e}\in\cF_{e}}p^{2}_{f_{e}}\sum_{\substack{f\in\cF_{\mu},\\f_{e}(f)=f_{e}}}F_{3}(f),\label{eq:F3Rewriting}\\
    \sum_{f_{1},f_{2}\in\cF_{\mu}}\kp_{e(f_{1})}\kp_{e(f_{2})}F_{1}(f_{1},f_{2})=&\sum_{f_{e,1},f_{e,2}\in\cF_{e}}p_{f_{e,1}}p_{f_{e,2}}\sum_{\substack{f_{1},f_{2}\in\cF_{\mu},\\f_{e}(f_{1})=f_{e,1}\\f_{e}(f_{2})=f_{e,2}}}F_{1}(f_{1},f_{2}).\label{eq:F1Rewriting}
  \end{align}
\end{subequations}
Unfortunately, the term with $F'_{1}$ still does not seem to be a sum of squares of
external momenta. In fact it is and it is once more due to the fact
that $\cG$ is fully melonic. Let us prove the following simple result:
\begin{lemma}
  \label{thm-IntFacesWithDifftExtFaces}
  Let $\cG$ be a fully melonic $d$-tensor graph. Let $f_{e,1}$ and
  $f_{e,2}$ be two (not necessarily different) external faces of $\cG$. Let
  $l$ (resp.\@ $l'$) be a loop line contributing to
  $f_{e,1}$ (resp.\@ $f_{e,2}$). Then,
  \begin{equation}
    \{f\in\cF\tqs l\in f\}\cap \{f\in\cF\tqs l'\in f\}=\varnothing.\label{eq:EmptyIntersetion}
  \end{equation}
\end{lemma}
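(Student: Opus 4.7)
The plan is to exploit the fully melonic ordering of the loop lines together with the local nature of $(d-1)$-dipole faces to pin down exactly which edges can appear in the internal faces incident to a given loop line that carries an external face.

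By \cref{def-Rosettes}, I fix a spanning tree $\cT$ of $\cG$ and an ordering $l_{1},\dots,l_{k}$ of the loop lines of $\cG/\cT$ such that each $l_{i}$ is a $(d-1)$-dipole in $\cG_{i}\defi\cG/(\cT\cup\{l_{1},\dots,l_{i-1}\})$. The statement being symmetric in $l$ and $l'$, I may assume $l=l_{i}$ and $l'=l_{j}$ with $i<j$; the only relevant case for the intended application is $l\neq l'$, since the cross terms in \cref{eq:F1Rewriting} that motivate the lemma involve distinct loop lines.

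The core observation is twofold. On the one hand, $l_{i}$ meets exactly $d$ faces, one per strand. The $(d-1)$-dipole property, combined with \cref{def-kdipole}, ensures that $d-1$ of these are length-$1$ closed faces in $\cG_{i}$, hence genuine internal faces of $\cG$, while the remaining strand carries the unique ``trunk'' face of $l_{i}$. Since $l_{i}$ contributes to the external face $f_{e,1}$ by hypothesis, this trunk face must be $f_{e,1}$ itself, and the $d-1$ dipole faces therefore exhaust the set of internal faces through $l_{i}$. On the other hand, each such dipole face, viewed back in $\cG$, is a cycle whose edges lie in $\{l_{i}\}\cup\cT\cup\{l_{1},\dots,l_{i-1}\}$: contraction only shortens faces by deleting contracted edges, without altering their strand identity, so a face that has collapsed to the single edge $l_{i}$ in $\cG_{i}$ must originally have consisted of $l_{i}$ together with edges that were absorbed during the contraction.

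Combining the two, since $j>i$ the edge $l_{j}$ lies neither in $\cT$ nor in $\{l_{1},\dots,l_{i-1}\}$, hence in none of the $d-1$ internal faces through $l_{i}$; this yields $\{f\in\cF\tqs l\in f\}\cap\{f\in\cF\tqs l'\in f\}=\varnothing$. The only step requiring any real care is the invariance of the strand content of a face under edge contraction, closely analogous to the $F$- and $R$-invariance exploited in \cref{thm-TreeContraction}; everything else is pure bookkeeping with the fully melonic order.
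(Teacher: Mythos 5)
Your proof is correct and follows essentially the same route as the paper's: fix the fully melonic order on the loop lines of $\cG/\cT$, take $l=l_{i}$, $l'=l_{j}$ with $i<j$, and observe that the $d-1$ internal faces through $l_{i}$ are already of length $1$ in $\cG_{i}$, hence contain no line contracted after step $i$. You merely make explicit two points the paper leaves implicit — that the remaining strand of $l_{i}$ must be the external face $f_{e,1}$, and that length-$1$ faces in $\cG_{i}$ lift to faces of $\cG$ supported on $\{l_{i}\}\cup\cT\cup\{l_{1},\dots,l_{i-1}\}$ — which is a harmless (and clarifying) elaboration.
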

In words, if, in a fully melonic graph, there are two loop lines
contributing to two external faces, then they contribute to
no common internal face.
\begin{proof}
  It goes by induction on the lines of $\cG/\cT$. There exists an
  order on $\cL(\cG/\cT)$ such that for all $i\in[L(\cG/\cT)]$,
  $l_{i}$ is a $(d-1)$-dipole in $\cG_{i}$. Without loss of
  generality, let us assume that $l=l_{i}$ and $l'=l_{j}$ with
  $i<j$. In $\cG_{i}$, $l_{i}$ is a $(d-1)$-dipole. Then all the
  internal faces to which $l_{i}$ contributes are of length $1$ in
  $\cG_{i}$. In particular $l_{j}$ does contribute to no internal face
  of $l_{i}$.
\end{proof}
Let us now consider \cref{eq:F1Rewriting}. Let $f_{e,1},f_{e,2}$ be
two \emph{different} external faces of $\cG$. Let $f_{1},f_{2}$ be two internal faces
of $\cG$ such that $f_{e}(f_{i})=f_{e,i}$ for $i=1,2$. Then
$l(f_{1})\neq l(f_{2})$ and these lines do not share any
internal face. As a consequence, the sums in $\kp_{f_{1}}$ and in
$\kp_{f_{2}}$ have no term in common. The summand in $F_{1}(f_{1},f_{2})$ is thus
odd under the simultaneous change of sign of all the momenta in
$\kp_{f_{1}}$ (say) and $F_{1}(f_{1},f_{2})=0$ in this case.\\

\Cref{eq:F3Rewriting} rewrites
\begin{equation}
  \sum_{f_{1},f_{2}\in\cF_{\mu}}\kp_{e(f_{1})}\kp_{e(f_{2})}F_{1}(f_{1},f_{2})=\sum_{f_{e}\in\cF_{e}}p_{f_{e}}^{2}\sum_{\substack{f_{1},f_{2}\in\cF_{\mu},\\f_{e}(f_{1})=f_{e}(f_{2})=f_{e}}}F_{1}(f_{1},f_{2}).\label{eq:F3Rewriting2}
\end{equation}
All three terms in \cref{eq:F1F2F3} have now been proven to be sums of
squares of external momenta. But the coefficients of these quadratic
polynomials still depend on the external faces. And this not an
artefact. These sums contain only external faces wich are made of
internal lines. In other words, external faces of length $0$ do not
appear. And there are, of course, graphs with external faces of length
$0$ (see \cref{UncolEx1} for an example). $\cA^{\mu}_{\cG,2}$ cannot
in general reproduce a $p^{2}$ term.\\
Fortunately, the interactions we have considered are symmetric under
any permutation of the colors $1$ to $d$ (the \emph{positive} colors). The
external faces of a $2$-point graph are indexed by the colors from $1$ to
$d$: $\{f_{e}\in\cF_{e}\}=\{f_{e,01},f_{e,02},\dots,f_{e,0d}\}$. Moreover
the set of permutations on $[d]$ (or the set of a given tye of
interaction) can be partitionned into the equivalence classes under the
action of the cyclic permutations. We say that two graphs are
equivalent if the colored extension of one of them can be obtained from
the colored extension of the other by a cyclic permutation of the
positive colors. Let $[\cG]$ be the set of
representatives of such an equivalence class (thus $\cG,\cG'\in[\cG]$
are such that $\cG_{c}$ can be obtained from $\cG'_{c}$ by a cyclic
permutation of the positive colors).\\

\noindent According to the discussion above, $\cA^{\mu}_{\cG,2}$ is of the form
\begin{equation}
  \cA^{\mu}_{\cG,2}=\sum_{f_{e}\in\cF_{e}}p_{f_{e}}^{2}F_{\cG}(f_{e}).
\end{equation}
Thus,
\begin{equation}
  \sum_{\cG'\in[\cG]}\cA^{\mu}_{\cG',2}=p^{2}\sum_{f_{e}\in\cF_{e}}F_{\cG}(f_{e}).
\end{equation}
The second order of the taylor expansion of the sum of the amplitudes
of all the graphs in $[\cG]$ contribute to the wave function
renormalization which finally concludes the proof of the perturbative
renormalizability of the models \labelcref{eq:Action4} and \labelcref{eq:Action6}.

\newpage
\section{The super-renormalizable \texorpdfstring{$\vp^4_5$-}{ }model}
\label{sec:super-ren} 

The analysis of the divergence degree in \cref{sec:analys-diverg-degr}
provides us with another model of potential interest 
that we now describe. Let us consider the  $\vp^4_5$ tensor model with the same dynamics
described so far and quartic interaction as given by \eqref{eq:vertex2}.
This model can be viewed as well as a truncation of the $\vp^6_5$ to a smaller
set of interactions.

 Using equation \eqref{eq:FullymelonicDegree}, we can deduce that the
 divergence degree of a fully melonic graph is
\begin{align}
\omega_d(\cG)=-(N-6)-2V.
\end{align}
\begin{prop}
The rank-5 $\vp^4_5$ tensor model is super-renormalizable.
\end{prop}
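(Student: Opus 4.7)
The strategy is to convert the divergence-degree formula just stated into a finiteness statement on the set of primitively divergent graphs, and then to apply the renormalization procedure of \cref{sec:renormalization} to that finite list.

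The first step is to extend the formula from fully melonic rosettes to arbitrary connected graphs via the decomposition $\omega_5(\cG) = -2L(\cG) + 3\Lt(\cG) + \rho(\cG)$ of \cref{sec:analys-diverg-degr}. Using the quartic-vertex identities $2L + N = 4V$ and $\Lt = V - N/2 + 1$, this becomes $\omega_5(\cG) = 3 - V - N/2 + \rho(\cG)$, which reproduces the stated $-(N-6) - 2V$ (up to an overall factor) when $\rho = 0$. By \cref{thm-rho}, $\rho(\cG) \les 0$ for every non-vacuum connected graph, since the color structure of tensor graphs forbids a vertex-connected graph from having vacuum face-connected components, and $\rho$ is additive over face-connected components. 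The condition $\omega_5(\cG) \ges 0$ therefore forces
\begin{equation*}
N + 2V \les 6,
\end{equation*}
and the only pairs $(N,V)$ with $N \ges 2$ and $V \ges 1$ compatible with this inequality are $(2,1)$, $(2,2)$ and $(4,1)$. Up to cyclic permutations of the positive colors, this is a finite list of divergent topologies, which is precisely the combinatorial content of super-renormalizability.

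The second step applies the renormalization machinery of \cref{sec:renormalization} to each graph in this finite list. By \cref{thm-TopPropDivGraphs} specialized to $d = 5$, the boundary graph of each divergent contribution is melonic and connected, so the zeroth-order Taylor coefficients of \cref{sec:taylor-expansions} renormalize either the mass (when $N=2$) or the quartic coupling \eqref{eq:vertex2} (when $N=4$). The order-two coefficients of the two-point graphs, once summed over the equivalence class of cyclic permutations of positive colors as in \cref{sec:trac-count}, yield the wave-function counterterm. No anomalous disconnected four-point vertex is generated here because the $\vp^4_5$ action contains no six-point interaction for a new quartic component to pair with.

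The real work lies in the first step; the second step is strictly simpler than the corresponding analysis for the just renormalizable $\vp^6_5$ model because there is no mixing between distinct quartic interactions and no anomaly to absorb. The only potential pitfall, exactly as in the proof of \cref{thm-PertRen}, is to verify that wave-function renormalization arises from the symmetrized second-order Taylor coefficient over the class $[\cG]$ rather than from an individual graph, but the argument of \cref{sec:trac-count} transfers verbatim to the finite family of $\vp^4_5$ two-point divergences identified above.
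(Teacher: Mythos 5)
Your argument is correct, but it reaches the power-counting bound by a genuinely different route than the one used in the paper's proof of this proposition. You work with the decomposition $\omega_{5}(\cG)=-2L+3\Lt+\rho(\cG)$ and the bound $\rho(\cG)\les 0$ of \cref{thm-rho} (extended from face-connected rosettes to arbitrary non-vacuum connected graphs exactly as in the proof of \cref{thm-NonFoaming0Dip}), which gives $\omega_{5}\les 3-V-N/2$ and hence $N+2V\les 6$ in one stroke. The paper instead starts from the jacket-genus formula \eqref{eq:Omegad} and splits cases according to whether $\ot(\cG)-\omega(\partial\cG)$ vanishes, using the gap property of that quantity together with $C_{\partial\cG}\ges 1$ and $R\ges 1$ to conclude that divergent graphs have $N=2$ and $\omega_{d}=2-V$. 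Your route is the one the paper itself follows in \cref{sec:analys-diverg-degr} for the $\vp^{4}_{6}$ and $\vp^{6}_{5}$ models, so it is more uniform with the rest of the analysis and avoids the case split; the paper's route delivers for free the finer topological data (planar jackets, connected melonic boundary) that identify the divergent graphs of \cref{fig:uncolGraphphi5} explicitly. Your normalization is also the correct one: the paper's displayed $-(N-6)-2V$ is $2\omega_{d}$ according to \eqref{eq:FullymelonicDegree}, which accounts for the ``overall factor'' you noticed. Two small points to tighten: the pair $(N,V)=(4,1)$ has $L=2V-N/2=0$, so it is the bare vertex and needs no counterterm, leaving only the two-point graphs at orders $V=1,2$; and you invoke \cref{thm-TopPropDivGraphs}, which is stated for the other two models --- but for $N=2$ the boundary graph is automatically the connected two-vertex melon, so the traciality and symmetrized wave-function arguments of \cref{sec:trac-count} do transfer as you claim.
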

\begin{proof}
If the quantity $\widetilde{\omega}(\cG)-\omega(\partial\cG)>0$
i.e.\@ not all jackets of $\cG_c$ are planar,  then 
\begin{align}
\omega_d(\cG)\les 2-(C_{\partial \cG}-1)-N-2V_2-R.
\end{align}
Using the fact that $V_2\ges 0$, $C_{\partial \cG}\ges 1$ and $R\ges 1$, we get $\omega_d(\cG)\les 1-N.$ This shows that  non melonic graphs are convergent graphs.  In contrast, if the quantity $\widetilde{\omega}(\cG)-\omega(\partial\cG)=0$
then  $C_{\partial \cG}=1$  and we get $\omega_d(\cG)\les 4-N-R.$ The divergent graphs have exactly  two external legs. Therefore $\omega_d(\cG)=2-V.$ The divergent graphs of this model are given in \cref{fig:uncolGraphphi5}.
So we infer that the $\vp^4_5$ tensor model is super-renormalizable like the $\vp^4_4$ model studied in \cite{Carrozza2012aa}.
\end{proof}
\begin{figure}[!htp]
  \centering
 \centering
  \subfloat[  $\omega_d=1$]{{\label{UGphi51}}\includegraphics[scale=.6]{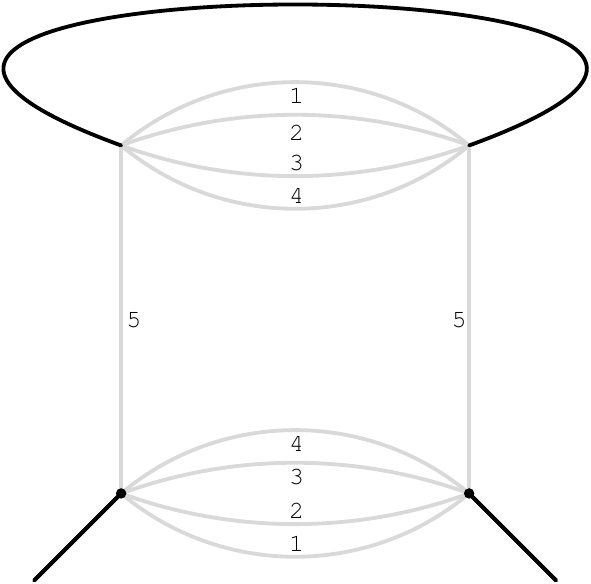}}\hspace{4cm}
  \subfloat[  $\omega_d=0$]{{\label{UGphi52}}\includegraphics[scale=.6]{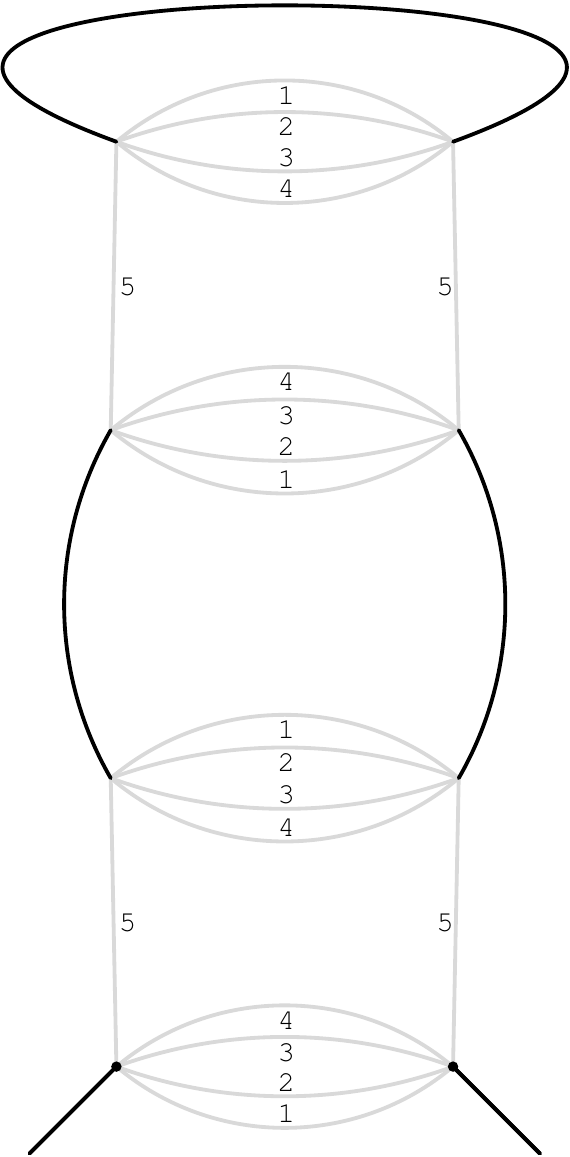}}
\caption{Divergent graphs of $\vp^4_5$}
  \label{fig:uncolGraphphi5}
\end{figure}

\section{Conclusion and discussion}

Just renormalizability is a property shared by all physical interactions except (until now) gravity. In the renormalization group sense it is natural. Indeed just renormalizable interactions survive long-lived renormalization group flow. They can be considered the result of a kind of Darwinian selection associated to such flows. Therefore if quantum gravity can be renormalized, it will rely on the same powerful technique that applies successfully to all other interactions of the standard model \cite{Rivasseau2012ab}.

In this work, we have shown that the $\vp_6^4$ and $\vp^6_5$ tensor models are renormalizable  at all orders of perturbation. The central point of this proof is given by the multiscale analysis. Our result sheds more light on 
the power counting in TGFTs with the gauge invariance condition.
This gauge condition had already been introduced in the previous work of Carrozza et al \cite{Carrozza2012aa} who  showed that
the generic rank-four models are super-renormalizable. The hurdle 
which can appear in the power counting due to the emergence of connected components in the $k$-dipole contraction is fully resolved now.  This work and previous results \cite{Ben-Geloun2011aa, Carrozza2012aa,  Carrozza2012ab} shows that there is indeed a neat family of renormalizable TGFT.

Having defined the first just renormalizable tensor models satisfying 
the gauge invariance, it remains to address the interesting question 
about how from such renormalizable models, one can recover General Relativity in the continuum limit. A phase transition from discrete to continuum geometries, from discrete degrees of freedom in the form of basic simplex 
(dual to tensors) presented here to more elaborate ones, should be understood.
This phase transition would be a conceivable scenario if, for instance,
the models described here can be proved asymptotically free
in the UV such that the renormalized coupling constants become
larger and larger in the opposite direction. 
Some tensor models without gauge invariance have been proved to 
be asymptotically free 
\cite{Ben-Geloun2012aa,Geloun2012ab,Geloun2012aa}. 
The study of the $\beta$-functions
of the $\vp^4_6$ and $\vp^6_5$ characterizing the UV limit 
of these models will be addressed in forthcoming works.

\paragraph{Acknowledgements}

\label{sec:acknowledgements}

The authors are indebted to Vincent Rivasseau for having proposed us the
problem treated here and for his guidance through the
stranded meanders of TGFT. They also sincerely thank Joseph Ben Geloun
for his numerous, complete and rapid explanations of his work. D. Ousmane Samary thank the Centre
international de math\'ematiques pures et appliqu\'ees (CIMPA), the Labex Milyon, the Association pour la Promotion Scientifique de l'Afrique (APSA) and the Laboratoire de Physique Th\'eorique d'Orsay (LPT) for financial supports.

\newpage
\appendix
\section{Paths in a graph}
\label{sec:paths-graph}

This section aims at illustrating the different definitions introduced
for the proof of \cref{thm-TreeContraction}. We choose a graph and
depicts its vertices as  black dots, see \cref{fig:OrientedGraphEx}.
\begin{figure}[!htp]
  \centering
  \includegraphics[scale=0.8]{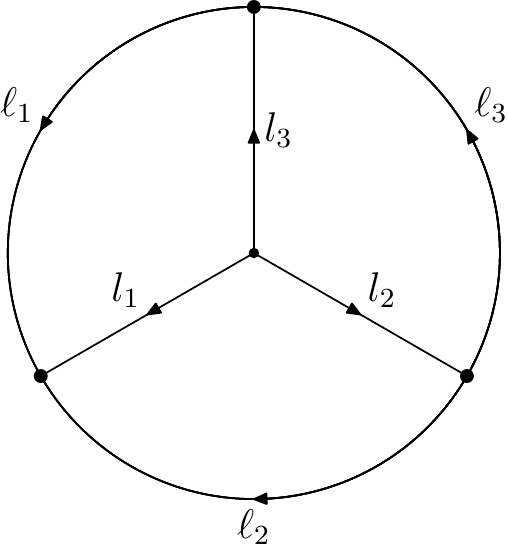}
  \caption{An oriented graph}
  \label{fig:OrientedGraphEx}
\end{figure}

Let us consider the oriented face $f=(\ell_{1},\ell_{2},\ell_{3})$. We
have:
\begin{align}
  \veps(f)=\bordermatrix{
    &l_{1}&l_{2}&l_{3}\cr
    \ell_{1}&\phantom{-}1&\phantom{-}0&-1\cr
    \ell_{2}&-1&\phantom{-}1&\phantom{-}0\cr
    \ell_{3}&\phantom{-}0&-1&\phantom{-}1
  },\quad
\eta=
      \bordermatrix{
        &l_{1}&l_{2}&l_{3}\cr
        l_{1}&\phantom{-}1&\phantom{-}0&\phantom{-}0\cr
        l_{2}&\phantom{-}0&\phantom{-}1&\phantom{-}0\cr
        l_{3}&\phantom{-}0&\phantom{-}0&\phantom{-}1\cr
        \ell_{1}&\phantom{-}1&\phantom{-}0&-1\cr
        \ell_{2}&\phantom{-}1&-1&\phantom{-}0\cr
        \ell_{3}&\phantom{-}0&-1&\phantom{-}1
      },\quad
      \epsilon=\bordermatrix{
        &f\cr
        l_{1}&\phantom{-}0\cr
        l_{2}&\phantom{-}0\cr
        l_{3}&\phantom{-}0\cr
        \ell_{1}&\phantom{-}1\cr
        \ell_{2}&-1\cr
        \ell_{3}&\phantom{-}1
      }.
    \label{eq:EpsilonsAndEta}
\end{align}
Note that we have three paths denoted by $\mathcal{P}_{\cT}(\ell_1)=\{l_{3-}, l_{1+}\}$,
$\mathcal{P}_{\cT}(\ell_2)=\{l_{2-},l_{1+}\}$
$\mathcal{P}_{\cT}(\ell_3)=\{l_{2-},l_{3+}\}.$ The signs $+$ and $-$ are used to identify the direction on the path $\mathcal{P}_{\cT}(\ell_i),\,i=1,2,3$ of the path-lines  $l_{i}$ with respect to the direction of $\ell_i.$ This is well illustrated in the first formula of equation \eqref{eq:EpsilonsAndEta}. If $\veps(f)_{l\ell}=0$ then $l\notin\mathcal{P}_{\cT}(\ell)$.  One sees easily that 
\begin{align}
\epsilon_{\ell_{1}f}\eta_{\ell_{1}l_1}+
\epsilon_{\ell_{2}f}\eta_{\ell_{2}l_1}+
\epsilon_{\ell_{3}f}\eta_{\ell_{3}l_1}=0\\
\epsilon_{\ell_{1}f}\eta_{\ell_{1}l_2}+
\epsilon_{\ell_{2}f}\eta_{\ell_{2}l_2}+
\epsilon_{\ell_{3}f}\eta_{\ell_{3}l_2}=0\\
\epsilon_{\ell_{1}f}\eta_{\ell_{1}l_3}+
\epsilon_{\ell_{2}f}\eta_{\ell_{2}l_3}+
\epsilon_{\ell_{3}f}\eta_{\ell_{3}l_3}=0.
\end{align}
Therefore $\sum_{\ell}\epsilon_{\ell f}\eta_{\ell l}=0 $ and then relation $\epsilon_{lf}=-\sum_{\ell\in\cL,\,\ell\neq l}\eta_{\ell
    l}\epsilon_{\ell f}$ is well satisfied.
\section{Combinatorial analysis of
  \texorpdfstring{$\ot(\cG)-\omega(\partial\cG)$}{Omega - Boundary Omega}}
\label{sec:comb-analys-otcg}

We propose here an alternative purely combinatorial proof of the fact
that  $\ot(\cG)-\omega(\partial\cG)\ges 0$. This proof is simpler than
the analysis of \citep{Ben-Geloun2012ab}. However it only proves a
weaker bound when $\ot(\cG)-\omega(\partial\cG)>0$ and $d>4.$ In the
case where $d=4$ the bounds of  \citep{Ben-Geloun2012ab} and this
appendix ($(d-1)!$) happen to coincide. The sign of
$\ot(\cG)-\omega(\partial\cG)$ can be analyzed using the so-called
dipole contraction. We immediately remind the reader with the
definition of a $0k$-dipole \citep{Ben-Geloun2011aa}.
\begin{defn}[$0k$-dipole]
  A $0k$-dipole (where $k=0,1,\cdots,d-1$) of a colored graph
  $\cG_{c}$ is a set of $k+1$ lines, one of which of color $0$, joining
  the same two vertices and such that no other lines connect the same
  two vertices.
\end{defn}
The contraction of a $0k$-dipole erases the $k+1$ lines of the dipole and connects the remaining $d-k$
lines on both sides of the dipole by respecting the colors. See 
\cref{fig:uncolGraphC}. Let us denote by $\cG_c'$ the graph obtained after
contraction of a $0k$-dipole of $\cG_c$. We have
\begin{align}\label{Contraction}
V(\cG_c')=V(\cG_c)-2,\,\,\,\,\, L(\cG_c')=L(\cG_c)-(d+1).
\end{align}
Let us consider a $0k$-dipole inside the colored graph $\cG_c$.
A ``pair'' is a couple of colors $(i,j)$, $i,j=0,1,2,\dots,d$. 
If none of the $k+1$ lines of the dipole bears color $i$ or $j$, the
pair is said to be ``outer''. If exactly one of the lines of the
dipole bears color $i$ or $j$, the pair is ``mixed''. If one line of
the dipole has color $i$ and another one color $j$, the pair is ``inner''.
\begin{figure}[!htp]
  \centering
  \subfloat[$0k$-dipole]{{\label{UGphiC1}}\raisebox{-.5\height}{\includegraphics[scale=1.2]{UncolGraphs-13}}}\hspace{2cm}
  \subfloat[After contraction]{{\label{UGphiC2}}\raisebox{-.5\height}{\includegraphics[scale=1.2]{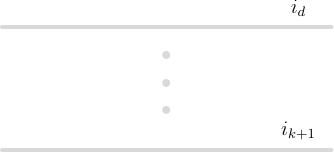}}}
\caption{Contraction of a $0k$-dipole}
  \label{fig:uncolGraphC}
\end{figure}
An outer pair $(i,j)$ is said to be of type A or disconnected by
the dipole contraction if the half-edges of lines $i$ and $j$ at each
corner on the left- and on the right-hand side of the dipole belong to
two different connected components of the graph after the dipole
contraction. Outer pairs belonging necessarily to closed faces, they
are single-faced in $\cG_{c}$. The pair $(i,j)$ is said to be special
if the half-edges of lines $i$ and $j$ belong to one single connected
component of $\cG'_{c}$. There are two types of special pairs. Type B
outer pairs are single-faced in $\cG_{c}$ (hence double-faced in
$\cG'_{c}$). Type C outer pairs are double-faced in $\cG_{c}$
(single-faced in $\cG'_{c}$).

After contraction, $F(\cG_c)$ has increased by 1 for each pair of type
$A$ or $B$ and decreased by $1$ for pairs of type $C$. Remark that the
mixed pairs preserve the number of faces. In the same manner the
number of faces decreases by $1$ for each internal pair. We then
arrive at
\begin{align}
F(\cG'_c)-F(\cG_c)=A+B-C-I
\end{align}
where $X\in\{A,B,C\}$ is the number of faces of type $X$ and $I$ is
the number of inner faces.\\

The strategy is the same as the one in \citep{Ben-Geloun2011aa}. We
will bound the difference between $\ot(\cG)$ and
$\ot(\cG')$. Then we apply the same bound all along a sequence of
dipole contractions from $\cG$ to $\partial\cG$ (remember that the graph
obtained after contraction of all the dipoles of $\cG$ is essentially
$\partial\cG$ \citep{Ben-Geloun2011aa}). 
\begin{prop}[Bound on genera]
  \label{thm-genera} Let $J$ be a jacket of $\cG_{c}$. We note $J'$
  the jacket of $\cG'_{c}$ corresponding to the same permutation as
  $J$. With $c'$ the number of connected components of $\cG'_{c}$,
\begin{align} 
\sum_{J}(g_{\widetilde{J}}-g_{\widetilde{J'}})\ges\frac{(d-1)!}{2}(d-k-c')(c'+k-1)\ges 0.\label{eq:ot-otprimePositive}
\end{align}
\end{prop}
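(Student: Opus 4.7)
The plan is to apply Euler's formula to each pinched jacket and track the changes under the $0k$-dipole contraction. For any closed ribbon graph one has $V - L + F = 2c - 2g$, so $g_{\widetilde J} = c_{\widetilde J} - \tfrac{1}{2}(V_{\widetilde J} - L_{\widetilde J} + F_{\widetilde J})$. Each jacket $\widetilde J$ carries the same vertex and edge set as $\cG_c$, so by \eqref{Contraction} the changes $V_{\widetilde J} - V_{\widetilde{J'}} = 2$ and $L_{\widetilde J} - L_{\widetilde{J'}} = d+1$ are uniform across jackets. This yields
$$g_{\widetilde{J}} - g_{\widetilde{J'}} = (c_{\widetilde{J}} - c_{\widetilde{J'}}) + \tfrac{d-1}{2} + \tfrac{1}{2}\bigl(F_{\widetilde{J'}} - F_{\widetilde{J}}\bigr).$$
Assuming $\cG_c$ is connected (so that each of its ribbon subgraphs is connected after pinching), $c_{\widetilde J} = 1$ and $c_{\widetilde{J'}} = c'$.

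The next step is to sum over the $d!/2$ jackets. Classify color pairs as inner, mixed, or outer, and subdivide outer pairs into types A, B, C exactly as in the paragraph preceding the proposition. The face-count contribution $F_{\widetilde{J'}} - F_{\widetilde{J}}$ for a given jacket is a sum over its own pairs of $+1$ for outer type A or B, $-1$ for outer type C or inner, and $0$ for mixed. Since every unordered pair of colors lies in exactly $(d-1)!$ jackets, one gets
$$\sum_J (F_{\widetilde{J'}} - F_{\widetilde{J}}) = (d-1)!\,(A + B - C - I),$$
where $I = \binom{k+1}{2}$ and $A + B + C = \binom{d-k}{2}$. Collecting,
$$\sum_J (g_{\widetilde{J}} - g_{\widetilde{J'}}) = \frac{(d-1)!}{2}\Bigl[d(1 - c') + \tfrac{d(d-1)}{2} + (A + B - C - I)\Bigr].$$
A short expansion using $d(d-1) + (d-k)(d-k-1) - k(k+1) = 2d(d-1-k)$ rewrites the bracket as $d(d-1-k-(c'-1)) - 2C$, which sets up the comparison with the target polynomial.

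The main obstacle is converting this into the quadratic bound $(d-k-c')(c'+k-1)$. The trivial envelope $1 \le c' \le d-k$ is immediate from the fact that the $0k$-dipole contraction glues $d-k$ color-matched strands per side, and it already gives the second inequality $(d-k-c')(c'+k-1)\ges 0$. For the sharper first inequality, I would analyse how the $c' - 1$ new connected components produced by the contraction are generated by the outer pairs: each independent splitting of the graph should correspond to a cluster of type-A outer pairs, yielding a lower bound on $A$ and simultaneously an upper bound on $C$ in terms of $c'$. The delicate combinatorial step is identifying precisely which outer pairs contribute to the creation of a new component versus to a type-C configuration. Once this linear control of $A$ and $C$ in terms of $c'$ is in place, the stated quadratic lower bound follows by collecting the explicit terms $\binom{d-k}{2}$, $\binom{k+1}{2}$ and completing the square in $c'$.
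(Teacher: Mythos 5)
Your first half coincides with the paper's proof step for step: the Euler-characteristic computation for $\widetilde J$ and $\widetilde{J'}$, the uniform change $V\to V-2$, $L\to L-(d+1)$, the summation over the $d!/2$ jackets using the fact that each color pair lies in $(d-1)!$ of them, and the resulting identity $\sum_J(g_{\widetilde J}-g_{\widetilde{J'}})=\frac{(d-1)!}{2}\big[(A+B-C-I)+\frac{d(d-1)}{2}-d(c'-1)\big]$. Your algebraic reduction is also correct: with $A+B+C=\binom{d-k}{2}$ and $I=\binom{k+1}{2}$ the bracket equals $d(d-k-c')-2C$, so the claimed inequality is exactly equivalent to $C\le\binom{d-k-c'+1}{2}$.

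The gap is that this last bound --- the actual content of the proposition --- is only gestured at, and the route you sketch would not deliver it. You propose ``linear control of $A$ and $C$ in terms of $c'$,'' but the required upper bound on $C$ is \emph{quadratic} in $c'$, so counting one cluster of type-A pairs per new component (which gives something like $A\ge c'-1$) cannot suffice. The paper's argument is: the $d-k$ external strands of the dipole are distributed among the $c'$ components of $\cG'_c$, i.e.\ they determine a partition $(n_1,\dots,n_{c'})$ of $d-k$; an outer pair is of type A iff its two strands land in different components and of type B or C iff they land in the same one, so $B+C=\sum_i\binom{n_i}{2}$ and $A=\sum_{i<j}n_in_j$. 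One then needs $\max_\lambda\sum_i\binom{n_i}{2}$ over partitions of $d-k$ into $c'$ parts, which is attained at the maximally unbalanced partition $(d-k-c'+1,1,\dots,1)$ (the paper proves this by a descent operation $D_{ij}$ that rebalances a partition and never decreases $\sum_i n_i^2$ when read backwards), giving precisely $C\le B+C\le\binom{d-k-c'+1}{2}$. Without identifying this extremal configuration and proving its extremality, the quadratic lower bound $(d-k-c')(c'+k-1)$ does not follow; only the trivial nonnegativity $1\le c'\le d-k$ is established in your write-up.
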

\begin{proof}
Using  relations \eqref{Contraction}, the Euler characteristics of $\widetilde{J}$ and $\widetilde{J}'$ are given by
\begin{align}
2-2g_{\widetilde{J}}=V-L+F_{\widetilde{J}},\quad 2c'-2g_{\widetilde{J'}}=V-2-(L-d-1)+F_{\widetilde{J'}}.
\end{align}
We get 
\begin{align}
\sum_{J}(g_{\widetilde{J}}-g_{\widetilde{J'}})=\frac{1}{2}
\Big[\sum_{J}(F_{\widetilde{J}'}-F_{\widetilde{J}})+\frac{d!(d-1)}{2}
-d!(c'-1)\Big].
\end{align}
Recall that $\sum_{J}F_{\widetilde{J}}=(d-1)!F(\cG_c)$ and  $\sum_{J'}F_{\widetilde{J'}}=(d-1)!F(\cG'_c).$ Then
\begin{align}
\sum_{J}(g_{\widetilde{J}}-g_{\widetilde{J'}})
&=\frac{(d-1)!}{2}\Big[F(\cG'_c)-F(\cG_c)+\frac{d(d-1)}{2}
-d(c'-1)\Big]\cr
&=\frac{(d-1)!}{2}\big(A+B-C-I\big)+\frac{d!(d-1)}{4}
-\frac{d!}{2}(c'-1).
\end{align}
The rest of the proof  will be devoted to find a lower bound on the
quantity $A+B-C-I.$ This can be done using the formalism of
integer partitions. The number of connected components $c'$ of
$\cG'_{c}$ being fixed, the $d-k$ external lines of the dipole are
distributed among $c'$ connected colored graphs \citep{Ben-Geloun2011aa}.
Each such configuration corresponds to a partition of $d-k$
into $c'$ parts. Let $\cP_{p}(n)$ be the set of partitions of $n$ in $p$
parts:
\begin{align}
  \label{eq:PartitionEnsemble}
  \cP_{p}(n)\defi&\big\{(n_{i})_{1\les i\les p},\,n_{1}\ges
  n_{2}\ges\dotsm\ges n_{p}\tqs\sum_{i=1}^{p}n_{i}=n\big\}.
\end{align}
For all $n\in\N^{*}$ and $1\les p\les n$, we denote by $\lambda_{1}$ the
following partition of $\cP_{p}(n)$:
\begin{align}
  \label{eq:Lambda1}
  \lambda_{1}\defi&(n-p+1,\underbrace{1,\dots,1}_{p-1\text{ terms}}).
\end{align}
Given a configuration of the external lines of a $0k$-dipole, that is to
say a partition $\lambda=(n_{i})$ of $d-k$ into $c'$ parts, we
have
\begin{subequations}
  \begin{align}
    (B+C)(\lambda)=&\sum_{i=1}^{c'}\frac{n_{i}(n_{i}-1)}{2}=\tfrac12\sum_{i=1}^{c'}n_{i}^{2}-\tfrac12(d-k),\label{eq:BC(Lambda)}\\
    A(\lambda)=&\tfrac12(d-k)(d-k-1)-(B+C),\label{eq:A(Lambda)}\\
    I=&\tfrac12(k+1)k\label{eq:I(k)}.
  \end{align}
\end{subequations}
For a $0k$-dipole and a fixed $c'$, $\ot(\cG)-\omega(\partial\cG)$ is
minimal when $B+C$ is maximal and $B=0$. Therefore,
\begin{align}
  \begin{split}
    \ot(\cG)-\ot(\cG')\ges&\frac{(d-1)!}{2}\Big(\tfrac12(d-k)(d-k-1)-2C_{M}-\tfrac12(k+1)k\Big)\\
    &+\frac{d!(d-1)}{4}-\frac{d!}{2}(c'-1),\label{eq:ot-oMin}
  \end{split}\\
  C_{M}\defi&\max_{\lambda\in\cP_{c'}(d-k)}(B+C)(\lambda).
\end{align}
It remains to determine $C_{M}$. To this aim, we note that
\begin{prop}
  \label{thm-PartitionDescent}
  Any partition of $\cP_{p}(n)$ can be obtained from $\lambda_{1}$ by a
(possibly empty) sequence of the following basic operation $D_{ij}$: let
$\lambda=(n_{i})\in\cP_{p}(n)$. If there exists a couple
$(i,j)\in([p]^{*})^{2},\,i<j$ such that $n_{i}-n_{j}\ges 2$, we define
$D_{ij}\lambda=\lambda^{(1)}=(n^{(1)}_{i})\in\cP_{p}(n)$ by
$n^{(1)}_{i}=n_{i}-1$, $n^{(1)}_{j}=n_{j}+1$, and for all $k\neq i,j$,
$n^{(1)}_{k}=n_{k}$. We potentially need to reorder the
$n^{(1)}_{k}$'s to get a proper partition.
\end{prop}
\begin{proof}
  Let us consider a partition $\lambda=(n_{i})\in\cP_{p}(n)$. If
  $\lambda=\lambda_{1}$, we are done. If not, it is enough to
  prove that there exists $\lambda^{(-1)}\in\cP_{p}(n)$ and $(i,j)\in([p]^{*})^{2}$ such that
  $D_{ij}\lambda^{(-1)}=\lambda$. We get the proposition simply by
  iterating that result.

  The construction of $\lambda^{(-1)}$ goes as follows. As
  $\lambda\neq\lambda_{1}$, there is
  $(i,j)\in([p]^{*})^{2},\,i<j$ such that $n_{i},n_{j}\ges 2$.
  $\lambda^{(-1)}=(n'_{k})$ is then defined as: $n'_{i}=n_{i}+1$,
  $n'_{j}=n_{j}-1$, for all $k\neq i,j$, $n'_{k}=n_{k}$. As
  $n'_{i}-n'_{j}\ges 2$, $D_{ij}\lambda^{(-1)}=\lambda$.
\end{proof}
If $\cP_{p}(n)$ is equipped with the lexicographical (total) order,
$\lambda_{1}$ is the highest partition. Moreover for all $\lambda$,
$D_{ij}\lambda<\lambda$. But
$\sum_{k=1}^{p}\big(n_{k}^{2}-(n^{(1)}_{k})^{2}\big)=2(n_{i}-n_{j}-2)\ges
0$. Thus the maximum over $\cP_{p}(n)$ of
$\sum_{i=1}^{p}n_{i}^{2}(\lambda)$ is reached for the highest partition in
the lexicographical order, namely $\lambda_{1}$. As a consequence,
\begin{align}
  C_{M}=&\tfrac12\big((d-k-c'+1)^{2}+c'-1\big)-\tfrac12(d-k)\label{eq:CM}\\
  \intertext{and}
  \begin{split}
    \ot(\cG)-\ot(\cG')\ges&\frac{(d-1)!}{2}\Big(\tfrac12(d-k)(d-k-1)-(d-k-c'+1)(d-k-c')\\
    &-\tfrac12k(k+1)\Big)+\frac{d!(d-1)}{4}-\frac{d!}{2}(c'-1)
  \end{split}\\
  =&\frac{(d-1)!}{2}(d-k-c')(k+c'-1).
\end{align}
As $1\les c'\les d-k$, $\ot(\cG)-\ot(\cG')\ges 0$.
\end{proof}
\begin{cor}
  \label{thm-ot-od}
  For any graph $\cG$, $\ot(\cG)-d\,\omega(\partial\cG)\ges 0$.
\end{cor}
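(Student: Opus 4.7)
The plan is to iterate \cref{thm-genera} along a full sequence of $0k$-dipole contractions that reduces $\cG_{c}$ to a graph whose jacket-genera can be directly compared to the jacket-genera of the boundary graph $\partial\cG$. Since each step is non-negative, summing yields $\ot(\cG)\ges\ot(\cG_{c}^{\mathrm{fin}})$ for the fully contracted $\cG_{c}^{\mathrm{fin}}$, and all the work then lies in identifying $\ot(\cG_{c}^{\mathrm{fin}})$ with $d\,\omega(\partial\cG)$.

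First I would choose a maximal sequence of $0k$-dipole contractions $\cG_{c}=\cG_{c}^{(0)}\to\cG_{c}^{(1)}\to\dotsm\to\cG_{c}^{(N)}=\cG_{c}^{\mathrm{fin}}$, where $\cG_{c}^{\mathrm{fin}}$ has no further $0k$-dipole, exactly as exploited already in \citep{Ben-Geloun2011aa} to relate $\cG_{c}$ and $\partial\cG$. At each step \cref{thm-genera} gives $\ot(\cG_{c}^{(i)})-\ot(\cG_{c}^{(i+1)})\ges\tfrac{(d-1)!}{2}(d-k_{i}-c_{i}')(k_{i}+c_{i}'-1)\ges 0$, so by telescoping
\begin{align}
\ot(\cG)\;\ges\;\ot(\cG_{c}^{\mathrm{fin}}).
\end{align}
Thus the entire statement reduces to showing $\ot(\cG_{c}^{\mathrm{fin}})=d\,\omega(\partial\cG)$.

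The second stage is combinatorial. The fully reduced graph $\cG_{c}^{\mathrm{fin}}$ is built from $\partial\cG$ by reattaching a single color-$0$ line between the two endpoints of each external edge of the original graph; equivalently, $\partial\cG$ is a $d$-colored graph on the colors $[d]^{*}=\{1,\dots,d\}$ and $\cG_{c}^{\mathrm{fin}}$ simply adds a perfect matching of color $0$. A jacket of $\cG_{c}^{\mathrm{fin}}$ corresponds to a cyclic permutation $\sigma$ of $[d]$ (up to orientation), and restricting $\sigma$ to $[d]^{*}$ by removing the element $0$ gives a cyclic permutation of $[d]^{*}$, i.e.\ a jacket $J_{\partial}$ of $\partial\cG$. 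The map $\sigma\mapsto J_{\partial}$ is exactly $d$-to-$1$, since $0$ can be inserted in $d$ distinct positions of a cycle on $[d]^{*}$. I would then check, using $\chi(\widetilde J)=2-2g_{\widetilde J}$ and the counts \eqref{Contraction}, that for each lift $\sigma$ of a given $J_{\partial}$ the pinched jacket $\widetilde J$ of $\cG_{c}^{\mathrm{fin}}$ has the same closed-face structure as $J_{\partial}$, so that $g_{\widetilde J}=g_{J_{\partial}}$. Summing over the $d$ lifts of each boundary jacket yields
\begin{align}
\ot(\cG_{c}^{\mathrm{fin}})=\sum_{\sigma}g_{\widetilde{J_{\sigma}}}=d\sum_{J_{\partial}}g_{J_{\partial}}=d\,\omega(\partial\cG),
\end{align}
which combined with the telescoping bound gives $\ot(\cG)-d\,\omega(\partial\cG)\ges 0$.

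The main obstacle I anticipate is the careful combinatorial matching in the second stage: one must verify that after all dipole contractions the residual $(d+1)$-colored structure really is obtained from $\partial\cG$ by a single color-$0$ matching, and that this matching does not alter the genera of the pinched jackets. This requires controlling how mixed and outer pairs behave at each contraction, and showing that the ``extra'' non-negative contributions $(d-k_{i}-c_{i}')(k_{i}+c_{i}'-1)$ of \cref{thm-genera} account precisely for the discrepancy between $(d+1)$-colored and $d$-colored jacket combinatorics. Once this bookkeeping is in place, the factor $d$ in the statement is forced by the $d$-to-$1$ lift, and the inequality follows immediately from the termwise non-negativity proved in \cref{thm-genera}.
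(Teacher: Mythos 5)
Your proposal is correct and follows essentially the same route as the paper: iterate the non-negative bound of \cref{thm-genera} along a complete sequence of dipole contractions, identify the fully contracted graph with $\partial\cG_{c}$ dressed by color-$0$ edges, and use the $d$-to-$1$ correspondence between cyclic permutations of $[d]$ and of $[d]^{*}$ (with equal pinched-jacket genera) to obtain the factor $d$. The bookkeeping you flag as the main obstacle is exactly the step the paper also asserts with minimal detail, so there is nothing substantively different to add.
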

\begin{proof}
  Let us denote $\cG/\cL(\cG)$ the graph obtained after a complete
  sequence of contractions of the dipoles of $\cG$. Iterating the
  bound \eqref{eq:ot-otprimePositive}, we get
  \begin{align}
    \ot(\cG)-\ot(\cG/\cL)\ges 0.
  \end{align}
The colored extension $(\cG/\cL(\cG))_{c}$ of $\cG/\cL$ is $\partial\cG_{c}$ equipped with
  external legs of color $0$. Let $\sigma=(\sigma(1)\dotsm\sigma(d))$ be a
  cyclic permutation on $[d]^{*}$ and $J_{\partial}(\sigma)$ the
  corresponding jacket of $\partial\cG_{c}$. Any permutation $\tau$ on $[d]$ of the
  following set (of cardinality $d$):
  \begin{align}
    P_{\sigma}\defi&\lb(0\sigma(1)\dots\sigma(d)),(\sigma(1)0\sigma(2)\dotsm\sigma(d)),\dots,(\sigma(1)\dotsm\sigma(d-1)0\sigma(d))\rb\label{eq:sigma0}
  \end{align}
  gives rise to a jacket $J(\tau)$ of $\cG_{c}$ such that
  $g_{\widetilde J(\tau)}=g_{J_{\partial}(\sigma)}$. Moreover the set
  of cyclic permutations on $[d]$ can be partitioned as
  $\cup_{\sigma\text{ on }[d]^{*}}P_{\sigma}$. Thus,
  \begin{align}
    \ot(\cG/\cL)=\sum_{J\subset(\cG/\cL)_{c}}g_{\widetilde J}=d\sum_{J_{\partial}\subset\partial\cG_{c}}g_{J_{\partial}},
  \end{align}
which ends the proof.
  \end{proof}

{\footnotesize
\bibliographystyle{fababbrvnat}
\bibliography{biblio-articles,biblio-books}

\begin{thebibliography}{21}
\providecommand{\natexlab}[1]{#1}
\providecommand{\url}[1]{\texttt{#1}}
\expandafter\ifx\csname urlstyle\endcsname\relax
  \providecommand{\doi}[1]{doi: #1}\else
  \providecommand{\doi}{doi: \begingroup \urlstyle{rm}\Url}\fi

\bibitem[Ben~Geloun(2012)]{Geloun2012ab}
J.~Ben~Geloun.
\newblock ``{Two and four-loop $\beta$-functions of rank 4 renormalizable
  tensor field theories}''.
\newblock \emph{Class.~Quant.~Grav.},  {\bf 235011}, 2012.
\newblock \href {http://arxiv.org/abs/1205.5513} {\path{arXiv:1205.5513}}.

\bibitem[Ben~Geloun and Livine(2012)]{Geloun2012aa}
J.~Ben~Geloun and E.~R. Livine.
\newblock ``Some classes of renormalizable tensor models''.
\newblock 07 2012.
\newblock \href {http://arxiv.org/abs/1207.0416} {\path{arXiv:1207.0416}}.

\bibitem[Ben~Geloun and Ousmane~Samary(2012)]{Ben-Geloun2012aa}
J.~Ben~Geloun and D.~Ousmane~Samary.
\newblock ``{3D Tensor Field Theory: Renormalization and One-loop
  $\beta$-functions}''.
\newblock January 2012.
\newblock \href {http://arxiv.org/abs/1201.0176} {\path{arXiv:1201.0176}}.

\bibitem[Ben~Geloun and Rivasseau(2012{\natexlab{a}})]{Ben-Geloun2011aa}
J.~Ben~Geloun and V.~Rivasseau.
\newblock ``{A Renormalizable 4-Dimensional Tensor Field Theory}''.
\newblock \emph{Commun.~Math.~Phys.}, 2012{\natexlab{a}}.
\newblock \href {http://arxiv.org/abs/1111.4997} {\path{arXiv:1111.4997}},
  \href {http://dx.doi.org/10.1007/s00220-012-1549-1}
  {\path{doi:10.1007/s00220-012-1549-1}}.

\bibitem[Ben~Geloun and Rivasseau(2012{\natexlab{b}})]{Ben-Geloun2012ab}
J.~Ben~Geloun and V.~Rivasseau.
\newblock ``{Addendum to ``A Renormalizable $4$-Dimensional Tensor Field
  Theory''}''.
\newblock September 2012{\natexlab{b}}.
\newblock \href {http://arxiv.org/abs/1209.4606} {\path{arXiv:1209.4606}}.

\bibitem[Ben~Geloun et~al.(2010)Ben~Geloun, Krajewski, Magnen, and
  Rivasseau]{Ben-Geloun2010aa}
J.~Ben~Geloun, T.~Krajewski, J.~Magnen, and V.~Rivasseau.
\newblock ``Linearized group field theory and power-counting theorems''.
\newblock \emph{Class.~Quant.~Grav.},  {\bf 27}~(15):\penalty0 155012, 2010.
\newblock \href {http://arxiv.org/abs/1002.3592} {\path{arXiv:1002.3592}}.

\bibitem[Bonzom et~al.(2011)Bonzom, Gurau, Riello, and Rivasseau]{Bonzom2011aa}
V.~Bonzom, R.~Gurau, A.~Riello, and V.~Rivasseau.
\newblock ``{Critical behavior of colored tensor models in the large N
  limit}''.
\newblock \emph{Nucl.~Phys.~B.},  {\bf 853}:\penalty0 174--195, 2011.
\newblock \href {http://arxiv.org/abs/1105.3122} {\path{arXiv:1105.3122}}.

\bibitem[Bonzom et~al.(2012)Bonzom, Gurau, and Rivasseau]{Bonzom2012ac}
V.~Bonzom, R.~Gurau, and V.~Rivasseau.
\newblock ``{Random tensor models in the large $N$ limit: Uncoloring the
  colored tensor models}''.
\newblock \emph{Phys.~Rev.~D},  {\bf 85}~(8):\penalty0 084037, 2012.
\newblock \href {http://arxiv.org/abs/1202.3637} {\path{arXiv:1202.3637}}.

\bibitem[Carrozza et~al.(2012)Carrozza, Oriti, and Rivasseau]{Carrozza2012aa}
S.~Carrozza, D.~Oriti, and V.~Rivasseau.
\newblock ``{Renormalization of Tensorial Group Field Theories: Abelian $U(1)$
  Models in Four Dimensions}''.
\newblock July 2012.
\newblock \href {http://arxiv.org/abs/1207.6734} {\path{arXiv:1207.6734}}.

\bibitem[Carrozza et~al.(2013)Carrozza, Oriti, and Rivasseau]{Carrozza2012ab}
S.~Carrozza, D.~Oriti, and V.~Rivasseau.
\newblock ``{Renormalization of an $SU(2)$ tensorial group field theory in
  three dimensions}''.
\newblock Work in preparation, 2013.

\bibitem[Di~Francesco et~al.(1995)Di~Francesco, Ginsparg, and
  Zinn-Justin]{Di-Francesco1995aa}
P.~Di~Francesco, P.~Ginsparg, and J.~Zinn-Justin.
\newblock ``{$2D$ gravity and random matrices}''.
\newblock \emph{Phys. Rept.},  {\bf 254}:\penalty0 1--133, 1995.

\bibitem[Gurau(2010)]{Gurau2012ab}
R.~Gurau.
\newblock ``{Lost in Translation: Topological Singularities in Group Field
  Theory}''.
\newblock \emph{Class.~Quant.~Grav.},  {\bf 27}~(23), 2010.
\newblock \href {http://arxiv.org/abs/1006.0714} {\path{arXiv:1006.0714}},
  \href {http://dx.doi.org/10.1088/0264-9381/27/23/235023}
  {\path{doi:10.1088/0264-9381/27/23/235023}}.

\bibitem[Gurau(2011{\natexlab{a}})]{Gurau2011aa}
R.~Gurau.
\newblock ``{Colored Group Field Theory}''.
\newblock \emph{Commun.~Math.~Phys.},  {\bf 304}:\penalty0 69--93,
  2011{\natexlab{a}}.
\newblock \href {http://arxiv.org/abs/0907.2582} {\path{arXiv:0907.2582}},
  \href {http://dx.doi.org/10.1007/s00220-011-1226-9}
  {\path{doi:10.1007/s00220-011-1226-9}}.

\bibitem[Gurau(2011{\natexlab{b}})]{Gurau2011ab}
R.~Gurau.
\newblock ``{The $1/N$ expansion of colored tensor models}''.
\newblock \emph{Ann.~H.~Poincar\'e},  {\bf 12}~(5):\penalty0 829--847,
  2011{\natexlab{b}}.
\newblock \href {http://arxiv.org/abs/1011.2726} {\path{arXiv:1011.2726}},
  \href {http://dx.doi.org/10.1007/s00023-011-0101-8}
  {\path{doi:10.1007/s00023-011-0101-8}}.

\bibitem[Gurau(2012{\natexlab{a}})]{Gurau2012aa}
R.~Gurau.
\newblock ``{The Complete $1/N$ Expansion of Colored Tensor Models in Arbitrary
  Dimension}''.
\newblock \emph{Ann.~H.~Poincar\'e},  {\bf 13}:\penalty0 399--423,
  2012{\natexlab{a}}.
\newblock \href {http://arxiv.org/abs/1102.5759} {\path{arXiv:1102.5759}},
  \href {http://dx.doi.org/10.1007/s00023-011-0118-z}
  {\path{doi:10.1007/s00023-011-0118-z}}.

\bibitem[Gurau(2012{\natexlab{b}})]{Gurau2012ad}
R.~Gurau.
\newblock ``{The Schwinger Dyson equations and the algebra of constraints of
  random tensor models at all orders}''.
\newblock \emph{Nucl.~Phys.~B},  {\bf 865}:\penalty0 133, 2012{\natexlab{b}}.
\newblock \href {http://arxiv.org/abs/1203.4965} {\path{arXiv:1203.4965}}.

\bibitem[Gurau and Rivasseau(2011)]{Gurau2011ac}
R.~Gurau and V.~Rivasseau.
\newblock ``{The $1/N$ expansion of colored tensor models in arbitrary
  dimension}''.
\newblock \emph{Eur.~Phys.~Lett.},  {\bf 95}~(5):\penalty0 50004, 2011.
\newblock \href {http://arxiv.org/abs/1101.4182} {\path{arXiv:1101.4182}}.

\bibitem[Gurau and Ryan(2012)]{Gurau2012ac}
R.~Gurau and J.~P. Ryan.
\newblock ``{Colored Tensor Models - a Review}''.
\newblock \emph{SIGMA},  {\bf 8}~(020):\penalty0 78, 2012.
\newblock \href {http://arxiv.org/abs/1109.4812} {\path{arXiv:1109.4812}},
  \href {http://dx.doi.org/10.3842/SIGMA.2012.020}
  {\path{doi:10.3842/SIGMA.2012.020}}.

\bibitem[Rivasseau(1991)]{Riv1}
V.~Rivasseau.
\newblock \emph{From Perturbative to Constructive Renormalization}.
\newblock Princeton series in physics. Princeton Univ. Pr., 1991.
\newblock 336 p.

\bibitem[Rivasseau(2011)]{Rivasseau2011ab}
V.~Rivasseau.
\newblock ``{Quantum Gravity and Renormalization: The Tensor Track}''.
\newblock 12 2011.
\newblock \href {http://arxiv.org/abs/1112.5104} {\path{arXiv:1112.5104}}.

\bibitem[Rivasseau(2012)]{Rivasseau2012ab}
V.~Rivasseau.
\newblock ``{The Tensor Track: an Update}''.
\newblock September 2012.
\newblock \href {http://arxiv.org/abs/1209.5284} {\path{arXiv:1209.5284}}.

\end{thebibliography}
}
\bigskip\bigskip
\contactfabetal[D. O. S., F. V.-T.]

\contact[D. O. S.]{International Chair in Mathematical Physics and
  Applications (ICMPA-UNESCO Chair), 072 BP 50 Cotonou, Republic of
  Benin.}{dine.ousmanesamary@cipma.uac.bj}

\end{document}